\newcommand{\arxiv}[1]{{\tt
		\href{http://www.arXiv.org/abs/#1}{arXiv:#1}}}
\theoremstyle{plain}
\newtheorem{thm}{Theorem}[section]
\newtheorem{prop}[thm]{Proposition}
\newtheorem{lemma}[thm]{Lemma}
\newtheorem{cor}[thm]{Corollary}
\theoremstyle{definition}
\newtheorem{definition}[thm]{Definition}
\theoremstyle{remark}
\newtheorem{remark}[thm]{Remark}
\newtheorem{ep}[thm]{Example}
\newcommand{\End}{\mathrm{End}}
\newcommand{\Aut}{\mathrm{Aut}}
\newcommand{\Mat}{\mathrm{Mat}}
\newcommand{\Map}{\mathrm{Map}}
\newcommand{\eqdef}{\stackrel{{\rm def.}}{=}}
\newcommand{\Conf}{\mathrm{Conf}}
\newcommand{\Sol}{\mathrm{Sol}}
\DeclareFontFamily{U}{rsf}{}
\DeclareFontShape{U}{rsf}{m}{n}{<5> <6> rsfs5 <7> <8> <9> rsfs7 <10-> rsfs10}{}
\DeclareMathAlphabet\Scr{U}{rsf}{m}{n}
\def\Z{\mathbb{Z}}
\def\C{\mathbb{C}}
\def\R{\mathbb{R}}
\def\deg{{\rm deg}}
\def\Gl{\mathrm{Gl}}
\def\dd{\mathrm{d}}
\def\vol{\mathrm{vol}}
\def\Ad{\mathrm{Ad}}
\def\cM{\mathcal{M}}
\newcommand{\be}{\begin{equation*}}
\newcommand{\ee}{\end{equation*}}
\newcommand{\ben}{\begin{equation}}
\newcommand{\een}{\end{equation}}
\newcommand{\beqa}{\begin{eqnarray*}}
	\newcommand{\eeqa}{\end{eqnarray*}}
\newcommand{\beqan}{\begin{eqnarray}}
\newcommand{\eeqan}{\end{eqnarray}}
\newcommand{\Tr}{\mathrm{Tr}}
\def\cR{{\mathcal R}}
\def\cB{\Scr B}
\def\Cl{\mathrm{Cl}}
\def\Spin{\mathrm{Spin}}
\def\Pin{\mathrm{Pin}}
\def\Spin{\mathrm{Spin}}
\def\SO{\mathrm{SO}}
\def\O{\mathrm{O}}
\def\U{\mathrm{U}}
\def\cD{\mathcal{D}}
\def\cA{\mathcal{A}}
\def\cE{\mathcal{E}}
\def\cI{\mathcal{I}}
\def\cP{\mathcal{P}}
\def\cN{\mathcal{N}}
\def\cG{\mathcal{G}}
\def\cW{\mathcal{W}}
\def\G_2{\mathrm{G_2}}
\def\cO{\mathcal{O}}
\def\cL{\mathcal{L}}
\def\cS{\mathcal{S}}
\def\cH{\mathcal{H}}
\def\cV{\mathcal{V}}
\def\PP{\mathbb{P}}
\def\P{\mathrm{P}}
\def\Aut{\mathrm{Aut}}
\def\Fr{\mathrm{Fr}}
\def\Re{\mathrm{Re}}
\def\Im{\mathrm{Im}}
\def\G{\mathrm{G}}
\def\L{\mathrm{L}}
\def\R{\mathbb{R}}
\def\Pic{\mathrm{Pic}}
\def\tlambda{\tilde{\lambda}}
\def\mCl{\mathbb{C}\mathrm{l}}
\newcommand{\norm}[1]{\left\lvert#1\right\rvert}
\newcommand{\nnorm}[1]{\left\lVert#1\right\rVert}
\begin{document}

\title{$\cN=1$ Geometric Supergravity and chiral triples on Riemann surfaces}

\author[Vicente Cort\'es]{Vicente Cort\'es} \address{Department of Mathematics, University of Hamburg, Germany}
\email{vicente.cortes@uni-hamburg.de}

\author[C. I. Lazaroiu]{C. I. Lazaroiu}\address{Center for Geometry and Physics, Institute for Basic
	Science, Pohang, Republic of Korea 37673} \email{calin@ibs.re.kr}

\author[C. S. Shahbazi]{C. S. Shahbazi} \address{Department of Mathematics, University of Hamburg, Germany}
\email{carlos.shahbazi@uni-hamburg.de}

\thanks{2010 MSC. Primary:  53C80. Secondary: 53C55, 53C27, 53C50.}
\keywords{Chiral Supergravity, K\"ahler geometry, holomorphic curves, $\Spin^c$ structures}

\begin{abstract}
We construct a global geometric model for the bosonic sector and Killing spinor equations of four-dimensional $\cN=1$ supergravity coupled to a chiral non-linear sigma model and a $\Spin^{c}_0$ structure. The model involves a Lorentzian metric $g$ on a four-manifold $M$, a complex chiral spinor and a map $\varphi\colon M\to \cM$ from $M$ to a complex manifold $\cM$ endowed with a novel geometric structure which we call \emph{chiral triple}. Using this geometric model, we show that if $M$ is spin the K\"ahler-Hodge condition on a complex manifold $\cM$ is enough to guarantee the existence of an associated $\cN=1$ chiral geometric supergravity, positively answering a conjecture proposed by D. Z. Freedman and A. V. Proeyen. We dimensionally reduce the Killing spinor equations to a Riemann surface $X$, obtaining a novel system of partial differential equations for a harmonic map with potential $\varphi\colon X\to \cM$ from $X$ into the K\"ahler moduli space $\cM$ of the theory. We characterize all Riemann surfaces admitting supersymmetric solutions with vanishing superpotential, proving that they consist on holomorphic maps of Riemann surfaces into $\cM$ satisfying certain compatibility condition with respect to the canonical bundle of $X$ and the chiral triple of the theory. Furthermore, we classify the biholomorphism type of all Riemann surfaces carrying supersymmetric solutions with complete Riemannian metric and finite-energy scalar map.
\end{abstract}

\maketitle

\setcounter{tocdepth}{1} 
\tableofcontents


\section{Introduction}


Supergravity theories are supersymmetric theories of gravity which, at least from a differential-geometric point of view, can be viewed as a system of partial differential equations coupling a Lorentzian metric to a number of other objects, such as connections on principal bundles, spinors or forms of different degrees arising as curvatures on (possibly higher) abelian gerbes \cite{Ortin,FreedmanProeyen,Tanii:2014gaa}. The equations defining a supergravity theory are strongly constrained by the requirement of invariance under local supersymmetry transformations. The local structure of supergravity theories has been the object of intense study since the discovery of supersymmetry in the early seventies, and nowadays there exists a reasonably complete local classification of (ungauged) supergravities in Lorentzian signature, see \cite{Andrianopoli:1996cm,Andrianopoli:1996ve,Aschieri:2008ns,Tanii:2014gaa} and references therein for more details. However, geometric models which implement and extend at a global level the mathematical structures and symmetries appearing in the local formulation of these theories have not been systematically investigated. The motivations to construct global geometric formulations of supergravity theories are manifold. For instance, the understanding of the global structure of supergravity solutions, that is their \emph{U-fold} structure \cite{Dabholkar:2002sy,Hellerman:2002ax}, requires a precise knowledge of the global mathematical structure of the supergravity theory under consideration. This is, among other reasons, because the extension of a local supergravity solution demands an \emph{a priori} knowledge of the global mathematical definition of the local fields that are to be extended. A second important motivation is given by the (\emph{\'a la Kuranishi}) study of moduli spaces of solutions of supergravity theories, which requires having a precise description of the equations of motion of the theory in terms of globally well-defined differential operators. In this situation, the choice of global geometric model used for the supergravity theory under consideration may dramatically affect the corresponding moduli space of solutions.

A global geometric model for the generic bosonic sector of four-dimensional supergravity was constructed in \cite{Lazaroiu:2016iav,Lazaroiu:2017qyr}. This model involves a flat submersion $\pi\colon E\to M$ over space-time and a flat symplectic vector bundle $\cS$ over $E$. Whereas the previous model includes the bosonic sector of any ungauged four-dimensional supergravity theory, it does not yet implement supersymmetry. One of the goals of the present article is to contribute to this problem by giving a global construction of the Killing spinor equations in the case of (ungauged) $\cN=1$ supergravity coupled to chiral matter. This corresponds to the case in which $\pi$ is trivial and $\cS = E$ has trivial fibers, corresponding to the situation in which the theory is not coupled to gauge fields. Whereas a geometric model for the complete $\cN=1$ supergravity theory, including its fermionic sector, is desirable, for the differential-geometric applications we have in mind, in the spirit to the two motivational examples given above, constructing a geometric model for the bosonic sector together with the associated Killing spinor equations is sufficient. The Killing spinor equations are in this context crucial, as they determine the notion of \emph{supersymmetric solution} and give rise to well-defined moduli spaces which generalize and extend well-known moduli problems in the mathematical literature, such as the moduli problem of generalized instantons \cite{Carrion,DT1} or most notably the moduli problem of pseudo-holomorphic maps \cite{DuffSalamon}. From a physical point of view, $\cN=1$ four-dimensional supergravity is one of the most relevant supergravity theories because of its many applications to particle physics, cosmology and string phenomenology, see \cite{Dhuria:2014pla,Drees:2004jm} and references therein for more details. The present article is part of a long term program aimed at developing the mathematical theory and foundations of \emph{geometric supergravity}, which we define as the geometric global theory of bosonic supergravity together with its associated supersymmetry Killing spinor equations.

The geometric model we present for $\cN=1$ supergravity is based on the notion of a \emph{chiral triple}, introduced in Definition \ref{def:chiraltriple}, which encodes a set of sufficient conditions for a complex manifold $\cM$ to be an admissible target space for the non-linear sigma model of the theory. Strictly speaking, the notion of a chiral triple is not equivalent with the K\"ahler-Hodge condition usually considered in the literature \cite{Witten:1982hu,Andrianopoli:1996cm,FreedmanProeyen}, since the former contains more information and in particular requires the existence of an isomorphism between the pull-back of the holomorphic line bundle appearing in the chiral triple and the determinant line bundle of the $\Spin^c_0(3,1)$ structure $Q$. However, we prove that if $M$ is spin then every K\"ahler-Hodge manifold $\cM$ admits a chiral triple and thus it can occur as the scalar manifold of $\cN=1$ chiral supergravity, whence answering in the positive a question raised in \cite[Section 17.5]{FreedmanProeyen}. 

We couple the theory to a $\Spin^c_0(3,1)$ structure $Q$ by using a delicate interplay between the chiral triple and the determinant line bundle of $Q$ which exists thanks to very specific properties of the Clifford algebra in signature $(3,1)$. Consequently, we formulate the geometric model using complex chiral spinors associated to $Q$. We illustrate the construction with several examples. We find that the Killing spinor equations of chiral $\cN=1$ supergravity yield a very rich system of spinorial equations which are characterized by the crucial role played by the superpotential and its interplay with complex conjugation operation on the spinor bundle. To the best of our knowledge, this system remains unexplored in the mathematical literature and gives rise to a number of outstanding open problems, see Section \ref{sec:conclusions} for more details. Regarding the moduli theory of the Killing spinor equations with non-trivial superpotential, we expect to find a variation of the moduli theory of the perturbed pseudoholomorphicity equations for a map from a Riemann surface into a K\"ahler manifold. In the particular case in which the scalar manifold of the theory is a point, the Killing spinor equations reduce to a particular instance of \emph{generalized Killing spinor equation} as introduced in \cite{FriedrichKim,FriedrichKimII}, in which complex conjugation on the spinor bundle plays a pivotal role.

As an application of the geometric model that we present, we reduce the theory to a Riemann surface and characterize all supersymmetric solutions with vanishing superpotential, obtaining the following result.

\begin{thm}
Supersymmetric solutions of chiral $\cN=1$ ungauged supergravity with vanishing superpotential on a Riemann surface $X$ consist of holomorphic maps $\varphi\colon X \to \cM$ satisfying a stability condition with respect to the linearization given by the canonical bundle of $X$ and the chiral triple of the theory.
\end{thm}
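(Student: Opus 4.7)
The plan is to start from the full system of Killing spinor equations of chiral $\cN=1$ supergravity on the four-manifold $M$, which splits into a gravitino equation (a twisted spinorial connection equation) and a chiralino equation (an algebraic equation involving $d\varphi$ and the superpotential $W$). After setting $W=0$ and imposing a product reduction ansatz (e.g.\ $M = X\times\R^{1,1}$ with a split $\Spin^c_0$ structure adapted to the product, together with a warp factor compatible with the chiral triple), both equations descend to equations on $X$ for a reduced chiral spinor $\epsilon$ and the scalar map $\varphi\colon X\to\cM$. I would treat each reduced equation in turn and then combine them to obtain the equivalence in the statement.

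For the reduced chiralino equation with $W=0$, the surviving term has the schematic form $\gamma^\mu\partial_\mu\varphi\cdot\epsilon = 0$. Using the identification of chiral spinors on $X$ with sections of a square root of $K_X$ (twisted by the determinant of the $\Spin^c_0$ structure coming from the chiral triple) and decomposing $d\varphi\in\Omega^1(X,\varphi^* T\cM)\otimes\C$ into its $(1,0)$ and $(0,1)$ parts with respect to both $j_X$ and the complex structure of $\cM$, the equation reduces to the Clifford contraction of one of the two summands with $\epsilon$; non-vanishing of $\epsilon$ on an open dense subset of $X$ then forces $\bar\partial\varphi = 0$, i.e.\ $\varphi$ is holomorphic. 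For the reduced gravitino equation with $W=0$, one obtains a parallel-transport condition for $\epsilon$ with respect to a twisted spin connection on $S_X\otimes\varphi^* L$, where $L$ is the holomorphic line bundle of the chiral triple and the twist is the pullback of the chiral triple connection. Exploiting the isomorphism between $\varphi^* L$ and the determinant line bundle of the $\Spin^c_0$ structure built into the triple, existence of a global non-trivial chiral parallel spinor on $X$ translates into the existence of a holomorphic section of a specific line bundle of the form $K_X^{1/2}\otimes\varphi^* L$ (or its $\Spin^c$ analogue). This existence is precisely the stability condition of the theorem, relative to the linearization determined by $K_X$ together with the line bundle data of the chiral triple.

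The main technical obstacle I expect is the delicate interplay between the conjugate-linear structure on chiral spinors in signature $(3,1)$ (which, as emphasized earlier in the paper, plays a pivotal role in the geometric model) and the complex structures on $X$ and $\cM$, which must be consistently matched with the chiral connection of the triple. One must show that this compatibility yields genuine holomorphicity of $\varphi$ and the precise stability condition in the statement rather than some weaker real-analytic constraint, and one must carefully track the role of the $\Spin^c_0$ determinant line bundle throughout the reduction. A secondary difficulty is the converse direction: that every holomorphic map satisfying the stability condition lifts to a \emph{bona fide} four-dimensional supersymmetric solution, which amounts to reversing the reduction and verifying the remaining bosonic equations of motion, a step that should simplify considerably when $W=0$ but still requires showing that the reconstructed spinor and metric satisfy every equation of the original system.
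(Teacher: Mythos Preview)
Your treatment of the chiralino equation is essentially correct and matches the paper: with $\cW=0$ and the anti-canonical $\Spin^c(2)$ structure on $(X,g)$, the spinor $\epsilon\in\Gamma(S^+)\simeq\Gamma(\Lambda^{0,0}(X))$ is nowhere vanishing, and the algebraic equation $\dd\varphi^{0,1}\cdot\epsilon=0$ becomes $\partial\varphi^{0,1}=0$, i.e.\ $\varphi$ is holomorphic.

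However, there is a genuine gap in your treatment of the gravitino equation, and it leads you to misidentify the ``stability condition''. You interpret $\nabla^\varphi\epsilon=0$ as producing a holomorphic section of some line bundle of the form $K_X^{1/2}\otimes\varphi^*L$, and you read the stability condition as the \emph{existence} of such a section. That is not what the paper obtains. In the paper's set-up $S^+\simeq\Lambda^{0,0}(X)$ is the trivial line bundle, so one may take $\epsilon\equiv 1$; the equation $\nabla^\varphi 1=0$ then equates two connection one-forms on $K_g$: the Chern connection of the Hermitian metric $g^*_c$ and the pullback $\Psi^*\cD^\varphi$ of the Chern connection of $(\cL,\cH)$. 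Equality of Chern connections on a holomorphic line bundle forces both that $\Psi\colon K_g\to\cL^\varphi$ is an isomorphism of \emph{holomorphic} line bundles and that the Hermitian metrics agree up to a constant, $g^*_c=\kappa\,\Psi^*\cH^\varphi$. This is the actual ``stability condition'': it \emph{determines} the metric $g$ from the chiral triple data, rather than being a mere topological or section-existence constraint.

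This distinction is not cosmetic. The metric relation $g^*_c=\kappa\,\Psi^*\cH^\varphi$ is precisely what is differentiated in local coordinates to yield $\mathrm{R}_g=\norm{\dd\varphi}^2_{g,\cG}$ and hence $\varphi^*\cG=\tfrac{1}{2}\norm{\dd\varphi}^2_{g,\cG}\,g$, so that the Einstein equation holds \emph{automatically} and every supersymmetric configuration is already a solution. With your interpretation you would have no control over $g$, and your final ``secondary difficulty'' of verifying the bosonic equations of motion would not close: there would be nothing forcing the Einstein equation on $X$. The fix is to drop the $K_X^{1/2}\otimes\varphi^*L$ picture and instead analyse the gravitino equation as an equality of Chern connections on $K_g$, extracting both the holomorphicity of $\Psi$ and the metric determination.
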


\noindent
A detailed account of this result is given in Theorem \ref{thm:solsusyX}. Using a seminal result by A. Huber, see \cite[Pages 1-2]{HulinTroyanov}, we can exploit the previous theorem to prove the following classification result.
\begin{thm}
If $X$ admits a supersymmetric solution with complete Riemannian metric and finite-energy scalar map, then it is biholomorphic with either $\mathbb{P}^1$, a complex elliptic curve $\mathbb{E}$, the complex plane $\mathbb{C}$ or the punctured complex plane $\mathbb{C}^{\ast}$, with prescribed metric singularities on the one-point and two-point compactification of $\mathbb{C}$ or $\mathbb{C}^{\ast}$, respectively.
\end{thm}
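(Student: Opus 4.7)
The plan is to combine Theorem \ref{thm:solsusyX} with a classical result of Huber on complete Riemannian surfaces of finite conformal type. First, I would invoke Theorem \ref{thm:solsusyX} to identify any supersymmetric solution with vanishing superpotential on $X$ with the datum of a holomorphic map $\varphi\colon X\to \cM$ satisfying the compatibility condition with the canonical bundle of $X$ and the chiral triple on $\cM$. Consequently, the classification problem reduces to determining which Riemann surfaces $X$ can carry such a holomorphic map together with a complete Riemannian metric $g$ and a scalar map $\varphi$ of finite Dirichlet energy.

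Next, I would extract from the bosonic equations of motion reduced to $X$ a pointwise relation between the Gaussian curvature $K_g$ of $g$ and the energy density $|d\varphi|^2_g$ of $\varphi$. Since $\varphi$ is holomorphic and $\cM$ is K\"ahler, one has $\varphi^\ast \omega_\cM=\tfrac{1}{2}|d\varphi|^2_g\,\vol_g$, so the reduced Einstein equation should produce a relation of the schematic form $K_g=c\,|d\varphi|^2_g+\varphi^\ast(\text{curvature of }\cM)$, with coefficients controlled by the chiral triple. Combined with the hypothesis $\int_X|d\varphi|^2_g\,\vol_g<\infty$ and with the curvature data available on the target $\cM$, this should yield the integrability bound $\int_X K_g^-\,\vol_g<\infty$ for the negative part of $K_g$.

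Completeness of $g$ together with the integrability of $K_g^-$ places us exactly in the setting of the theorem of Huber in the form recalled in \cite[pages 1-2]{HulinTroyanov}, which forces $X$ to be conformally equivalent to a compact Riemann surface $\bar X$ minus a finite set of punctures $\{p_1,\ldots,p_n\}$. The completeness of $g$ then constrains the asymptotic behaviour of $g$ at each puncture to one of finitely many admissible (cusp or cone) forms, which is precisely the statement about \emph{prescribed metric singularities} on the one- or two-point compactification of the ends.

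Finally, to isolate the four biholomorphism types, I would combine the Cohn-Vossen inequality with the sign information on $K_g$ obtained above, together with the finite-energy constraint, to obtain $\chi(\bar X)-n=2-2g(\bar X)-n\geq 0$. The only non-negative solutions are $(g(\bar X),n)\in\{(0,0),(1,0),(0,1),(0,2)\}$, realising respectively $\mathbb{P}^1$, an elliptic curve $\mathbb{E}$, $\mathbb{C}$ and $\mathbb{C}^\ast$. I expect the main obstacle to be the second step: extracting from the reduced supergravity system a curvature-energy identity sharp enough to verify Huber's integrability hypothesis, since the chiral triple contributes extra curvature terms through the holomorphic line bundle on $\cM$ whose sign must be tracked with care.
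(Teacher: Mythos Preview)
Your overall strategy is correct and matches the paper's: both reduce to Huber's theorem (as formulated in \cite{HulinTroyanov}) via a curvature--energy identity coming from Theorem~\ref{thm:solsusyX}, and then read off the four biholomorphism types from the non-negativity of total curvature together with the Cohn--Vossen-type bound. The paper's proof is in fact a two-line application of exactly this.

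Where you diverge from the paper is in your second step, and this is where you make things harder than they are. You propose to extract from the reduced Einstein equations a ``schematic'' relation $K_g = c\,|d\varphi|^2_g + \varphi^\ast(\text{curvature of }\cM)$ and then worry about controlling the sign of the target-curvature contribution. But Theorem~\ref{thm:solsusyX} already hands you the \emph{exact} identity
\[
\mathrm{R}_g = \|d\varphi\|^2_{g,\cG}\,,
\]
with no extra terms. The curvature contribution of the chiral triple on $\cM$ is not an additional obstacle: it is precisely what fixes the metric $g$ via $g^\ast_c = \kappa\,\Psi^\ast\cH^\varphi$, and once that relation holds the computation in the proof of Theorem~\ref{thm:solsusyX} shows that the scalar curvature equals the energy density on the nose. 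In particular $\mathrm{R}_g \geq 0$ everywhere, so $K_g^- \equiv 0$ and your integrability condition is trivially satisfied; finite energy then gives finite (non-negative) total curvature directly. Your anticipated ``main obstacle'' therefore does not exist --- you should simply quote this identity from Theorem~\ref{thm:solsusyX} rather than re-derive a weaker version from the bosonic field equations.
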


\noindent
A detailed account of this result is given in Theorem \ref{thm:finitecurvature}. Taking the Riemann surface $X$ to be compact, the previous theorem recovers the black-hole horizon topologies classified in \cite{Gutowski:2010gv}, see also \cite{Meessen:2010ph}. In contrast to the $\cN>1$ case, supersymmetric solutions of local chiral $\cN=1$ supergravity have not been systematically classified even at the local level. Supersymmetric solutions with vanishing superpotential were considered in \cite{OrtinN1}, where the local generic form of the metric was obtained, and the problem of constructing supersymmetric solutions was reduced to solving a minimal set of partial differential equations. Local supersymmetric solutions with non-vanishing superpotential were studied in \cite{GGP}, where they were characterized in terms of a minimal set of partial differential equations. Other aspects of supersymmetric solutions of $\cN=1$ supergravity have been explored in \cite{Huebscher:2009bp,Gutowski:2010gv,Meessen:2010ph}. We hope that the geometric model presented in this work can constitute a stepping stone towards the understanding of the global structure of $\cN=1$ supergravity supersymmetric solutions as well as the associated moduli spaces. From a different point of view, a previous contribution to the development of a geometric formulation of supergravity appeared in \cite{Freedman:2016qnq}, where the authors introduced a formulation of supersymmetry which is \emph{covariant} with respect to coordinate transformations in the scalar manifold. We note that the $\cN=1$ Killing spinor equations that we present in this article are by construction automatically covariant in the sense of Op. Cit.

One of the key aspects of supersymmetry is that it provides (partial) integration of the bosonic equations of the theory in terms of (usually) first-order spinorial equations. It is then reasonable to consider the existence of \emph{different} spinorial equations whose solutions are nonetheless solutions of $\cN=1$ supergravity, albeit of a non-supersymmetric type. Inspired by this possibility, in Section \ref{sec:antisusysolutions} we introduce the notion of \emph{anti-supersymmetric solution}, which yields a family of non-supersymmetric solutions on manifolds given by the direct product of $\mathbb{R}^2$ and a hyperbolic Riemann surface. In view of the results of \cite{Gutowski:2010gv}, it is natural to wonder if some of these non-supersymmetric solutions can appear as near-horizon geometries of non-supersymmetric black holes in $\cN=1$ supergravity.

The outline of the paper goes as follows. In Section \ref{sec:CliffordAlgebras} we give a detailed account of Clifford algebras in four Lorentzian dimensions and the associated real and complex irreducible modules. In Section \ref{sec:bosonicsector} we introduce the notion of chiral triple as well as the geometric model for the bosonic sector and Killing spinor equations of chiral $\cN=1$ supergravity, giving several examples. In Section \ref{sec:redX} we reduce the geometric model to a Riemann surface and characterize all supersymmetric solutions with vanishing superpotential. Furthermore, we introduce the notion of \emph{anti-supersymmetric solution} in terms of a natural variation of the Killing spinor equations which still yield solutions of chiral $\cN=1$ supergravity. We provide numerous examples of supersymmetric and anti-supersymmetric solutions. In Section \ref{sec:conclusions} we conclude with a summary of results and a list of related open problems. 


\subsection*{Acknowledgments}


CSS would like to thank Mario Garc\'ia-Fern\'andez for useful comments. The work of C. I. L. was supported by grant IBS-R003-S1. The work of C.S.S. is supported by the Humboldt foundation through the Humboldt grant ESP 1186058 HFST-P. The work of V.C. and C.S.S. is supported by the German Science Foundation (DFG) under the Research Training Group 1670 \emph{Mathematics inspired by String Theory}.


\section{Clifford algebras in four dimensions}
\label{sec:CliffordAlgebras}


In this section we present the background material on four-dimensional Clifford algebras that shall be needed in the rest of the manuscript. Let $(V,h)$ be a four-dimensional real vector space equipped with a non-degenerate symmetric bilinear form $h$ of signature $(3,1) = (+,+,+,-)$. We fix a metric volume form $\nu$ on $(V,h)$. We denote by $\Cl(V,h)$ the real Clifford algebra associated to $(V,h)$.  We use the following convention for the Clifford relation:
\begin{equation*}
v^2 = h(v,v)\, , \qquad v\in V\, .
\end{equation*}

\noindent
The real Clifford algebra $\Cl(V,h)$ admits a unique irreducible real representation:
\begin{equation*}
\gamma\colon \Cl(V,h)\xrightarrow{\sim} \End_{\R}(\Sigma_{\R})\, ,
\end{equation*}

\noindent
on a four-dimensional real vector space $\Sigma_{\R}$. The representation $\gamma$ is in fact an isomorphism of unital, associative, real algebras, and hence, upon a choice of a basis in $\Sigma_{\R}$, $\gamma$ yields an isomorphism $\Cl(V,h)\simeq \Mat(4,\R)$, where $\Mat(4,\R)$ denotes the unital and associative algebra of real four by four square matrices. Using the canonical isomorphism of vector spaces $\Cl(V,h) \simeq \Lambda(V)$ between the Clifford algebra $\Cl(V,h)$ and the exterior algebra of $V$, we can consider the volume form $\nu$ as an element $\nu =e_0 e_1 e_2 e_3$ in $\Cl(V,h)$, which for simplicity we denote with the same symbol. Here $\left\{ e_0 , \hdots e_3\right\}$ denotes a positively oriented orthonormal basis. We have:
\begin{equation*}
\nu^2 = - \mathrm{Id}\, , \qquad \nu\in Z(\Cl^{ev}(V,h))\, ,
\end{equation*}

\noindent
where $\Cl^{ev}(V,h)\subset \Cl(V,h)$ denotes the even subalgebra of $\Cl(V,h)$ and $ Z(\Cl^{ev}(V,h))$ denotes the center of $\Cl^{ev}(V,h)$. Hence, $\gamma(\nu)$ defines an almost complex structure on $\Sigma_{\R}$ which however does not commute with Clifford multiplication. Restricting $\gamma$ to $\Cl^{ev}(V,h)$ we obtain the unique irreducible real representation of $\Cl^{ev}(V,h)$, which is again of real dimension four. Since $\nu\in Z(\Cl^{ev}(V,h))$, the image of $\Cl^{ev}(V,h)$ in $\End(\Sigma_\R)$ consists on endomorphisms complex-linear with respect to the complex structure $\gamma(\nu)$. In fact, it can be shown that the restriction of $\gamma$ to $\Cl^{ev}(V,h)$ gives an isomorphism of unital and associative algebras:
\begin{equation*}
\gamma^{+}\eqdef \gamma\vert_{\Cl^{ev}(V,h)}\colon \Cl^{ev}(V,h)\xrightarrow{\sim} \End(\Sigma_{\R},\gamma(\nu)) \simeq \End_{\C}(\Sigma^{+}_0)\, ,
\end{equation*}

\noindent
where $\End(\Sigma_{\R},\gamma(\nu))$ denotes the $\gamma(\nu)$-linear endomorphisms of $\Sigma_{\R}$ and $\End_{\C}(\Sigma^+_0)$ denotes the complex endomorphisms of $\Sigma^+_0\eqdef (\Sigma_{\R},\gamma(\nu))$, where the later is understood as a two-dimensional complex vector space with complex structure $\gamma(\nu)$. Defining $\Sigma^{-}_0\eqdef (\Sigma_{\R}, - \gamma(\nu))$ with the opposite complex structure with respect to $\Sigma_0$, we obtain the complex-conjugate representation:
\begin{equation*}
\gamma^{-}\eqdef \gamma\vert_{\Cl^{ev}(V,h)}\colon \Cl^{ev}(V,h)\xrightarrow{\sim} \End(\Sigma_{\R},-\gamma(\nu)) \simeq \End_{\C}(\Sigma^{-}_0)\, .
\end{equation*}

\noindent
This way we obtain the two complex-conjugate irreducible representations of $\Cl^{ev}(V,h)$, inducing equivalent real irreducible representations, which correspond to the two chiral irreducible complex representations of $\Cl^{ev}(V,h)$. The spin group $\Spin(V,h)$ injects in $\Cl^{ev}(V,h)$, and the restriction of $\gamma^{+}$ and $\gamma^{-}$ to the image of $\Spin(V,h)$ in $\Cl^{ev}(V,h)$ yields two, complex-conjugate, irreducible spinorial complex representations of $\Spin(V,h)$, which are again equivalent as real representations. Since we are interested in irreducible representations of 
\begin{equation*}
\Spin^c(V,h) = (\Spin(V,h)\times \U(1))/\left\{1,-1\right\}\, ,
\end{equation*}

\noindent
we complexify the previous set up. Let $V_{\C}\eqdef V\otimes_{\R}\C$ denote the complexification of $V$. Extending $h$ by $\C$-linearity to $V_{\C}$ we obtain a complex quadratic space $(V_{\C},h_{\C})$. We have the following isomorphism of complex unital and associative algebras:
\begin{equation*}
\mCl(V,h) \eqdef \Cl(V,h)\otimes\C \simeq \Cl(V_{\C},h_{\C})\, ,
\end{equation*}

\noindent
where $\Cl(V_{\C},h_{\C})$ denotes the complex Clifford algebra associated to $(V_{\C},h_{\C})$. Hence $\mCl(V,h) \simeq \Mat(4,\C)$. Let $\Sigma \eqdef \Sigma_{\R}\otimes \C$. Extending $\gamma$ to $\mCl(V,h)$ by $\C$-linearity we obtain the unique complex irreducible representation $\gamma_\C$ of $\mCl(V,h)$: 
\begin{equation*}
\gamma_{\C} \colon \mCl(V,h) \to \End(\Sigma)\, .
\end{equation*}

\noindent

\noindent
Note that we have a canonical inclusion $\Spin^c(V,h)\subset \mCl(V,h)$ and therefore $\Spin^c(V,h)$ has an induced action on $\Sigma$. Restricting $\gamma_{\C}$ to $\Cl(V,h)\subset \mCl(V,h)$ we obtain the unique complex irreducible representation of $\Cl(V,h)$, which for simplicity we denote again by $\gamma_{\C}$. Since the complex irreducible Clifford module $\Sigma$ is the complexification of the irreducible real Clifford module $\Sigma_{\R}$, $\Sigma$ admits a canonical $\Cl(V,h)$ equivariant real structure defined in the usual way:
\begin{equation*}
c\colon \Sigma\to \Sigma\, , \quad \xi\otimes z\mapsto \xi \otimes \bar{z}\, , \quad \xi \in \Sigma_\R\, , \quad z\in \C\, .
\end{equation*}

\noindent
Furthermore, the results of Reference \cite{ACortesI,ACortesII}, see also \cite{LazaroiuBC}, show that the real representation vector space $\Sigma_{\R}$ admits a unique non-degenerate symmetric inner product $\langle - , - \rangle$ such that:
\begin{equation*}
\langle \gamma(v) \xi_1, \xi_2 \rangle = \langle \xi_1, \gamma(v) \xi_2 \rangle\, ,
\end{equation*}

\noindent
for all $ \xi_1 , \xi_2 \in \Sigma_{\R}$ and all $v\in V$. In particular, $\langle -, -\rangle$ is $\mathfrak{spin}(V,h)$-invariant and thus also $\Spin_0(V,h)$-invariant, where $\Spin_0(V,h)\subset \Spin(V,h)$ denotes the connected component of $\Spin(V,h)$ containing the identity. We $\C$-linearly extend $\langle -, -\rangle$ to $\Sigma$. As a $\C$-bilinear inner product in $\Sigma$, $\langle - , - \rangle$ is still $\Spin_0(3,1)$-invariant but fails to be $\Spin_0^c(3,1)$-invariant, since:
\begin{equation*}
\langle [g,z] \xi , [g,z] \xi \rangle = z^2  \langle \xi ,  \xi \rangle \, ,\quad  \forall\,\,  [g,z] \in \Spin^c_0(3,1)\, ,\, \forall\,\,  \xi\in \Sigma\, .
\end{equation*}

\noindent
The simultaneous existence of $\langle - , - \rangle$ and $c(-)$ gives rise to a canonical $\Spin^c_0(3,1)$-invariant Hermitian scalar product, defined as follows:
\begin{equation*}
(\xi_1, \xi_2) = \langle \xi_1 , c(\xi_2)\rangle\, , \quad \forall\,\, \xi_1, \xi_2\in \Sigma\, ,
\end{equation*}

\noindent
From its definition and the properties of $\langle -, - \rangle$ we immediately conclude that:
\begin{equation*}
(\gamma(v) \xi_1, \xi_2) = ( \xi_1, \gamma(v)\xi_2)\, , 
\end{equation*}

\noindent
for all $\xi_1, \xi_2\in \Sigma$ and all $v\in V$. We split $\Sigma = \Sigma^{1,0}\oplus\Sigma^{0,1}$ in terms of eigenspaces of the complex structure $\gamma(\nu)$, which we assume to be extended $\C$-linearly to $\Sigma$. We have the following isomorphisms of complex vector spaces:
\begin{equation*}
\Sigma^{1,0}\simeq \Sigma^+_0\, , \qquad \Sigma^{0,1}\simeq \Sigma^-_0\, .
\end{equation*}  

\noindent
Note that $\Sigma^{1,0}$ and $\Sigma^{0,1}$ are maximally isotropic complex subspaces of $\Sigma$ with respect to $(-,-)$ and hence $(-,-)$ is of split signature $(2,2)$.

Let $\mCl^{ev}(V,h) \eqdef \Cl^{ev}(V,h)\otimes\C$ denote the even Clifford subalgebra of $\mCl(V,h)$. The restriction of $\gamma_{\C}$ to $\mCl^{ev}(V,h)$:
\begin{equation*}
\gamma_{\C}\colon \mCl^{ev}(V,h)\to \End(\Sigma^{1,0}\oplus \Sigma^{0,1})\, ,
\end{equation*}

\noindent 
preserves both $\Sigma^{1,0}$ and $\Sigma^{0,1}$. Therefore, the restriction of $\gamma_{\C}$ to $\mCl^{ev}(V,h)$  splits as a sum of the irreducible representations
\begin{equation*}
\gamma^{+}_\C \colon \mCl^{ev}(V,h)\to \End(\Sigma^{1,0})\, , \qquad \gamma^{-}_\C \colon \mCl^{ev}(V,h)\to \End(\Sigma^{0,1})\, ,
\end{equation*}

\noindent
defined by proyection of $\gamma_\C$ on the corresponding factor. Note that we have complex isomorphisms:
\begin{equation*}
\End(\Sigma^{1,0})\simeq \End_{\C}(\Sigma^{+}_{0})\, , \qquad  \End(\Sigma^{0,1}) \simeq \End_{\C}(\Sigma^{-}_{0})\, ,
\end{equation*}

\noindent
We define the complex volume form $\nu_{\C} \eqdef i\nu\in \mCl(V,h)$, which satisfies:
\begin{equation*}
\nu_{\C}^2 = \mathrm{Id}\, , \qquad \nu_{\C} \in Z(\mCl^{ev}(V,h))\, ,
\end{equation*}

\noindent
In terms of $\nu_{\C}$, $\mCl^{ev}(V,h)$ splits as the direct sum of two unital, associative complex algebras as follows:
\begin{equation*}
\mCl^{ev}(V,h) = \mCl^{ev}_+(V,h) \oplus \mCl^{ev}_-(V,h)\, , \qquad \mCl^{ev}_{\pm}(V,h) = \frac{1}{2}(1\mp i \nu)\mCl^{ev}(V,h)\, .
\end{equation*}

\noindent
For ease of notation we define the projectors $P_{\pm}\eqdef \frac{1}{2}(1\mp i\gamma(\nu))\colon \mCl^{ev}(v,h)\to \mCl^{ev}_{\pm}(V,h)$. We have the following isomorphisms of complex unital and associative algebras: 
\begin{equation*}
\mCl^{ev}_+(V,h) \simeq \End_{\C}(\Sigma^+_0)\, , \qquad \mCl^{ev}_-(V,h) \simeq \End_{\C}(\Sigma^-_0)\, .
\end{equation*}

\noindent
The restriction of $\gamma_{\C}$ to $\mCl^{ev}(V,h)$, which splits as a direct sum in terms of the two inequivalent irreducible complex representations $\gamma^{\pm}_{\C}$ of $\mCl^{ev}(V,h)$, factors through projection on the given factor of $\mCl^{ev}(V,h)$ as follows:
\begin{equation*}
\gamma^{\pm}_{\C}\colon\mCl^{ev}(V,h) \to \mCl^{ev}_{\pm}(V,h)\to \End_{\C}(\Sigma^{\pm}_0)\, , \quad x\otimes z \mapsto z\, P_{\pm} (x) \mapsto (\Re z \pm \Im z \gamma(\nu))\circ \gamma(x)\, .
\end{equation*}

\noindent
Note that $\gamma^+_{\C}$ and $\gamma^-_{\C}$ are not complex conjugate of each other, although they induce isomorphic real representations of $\Cl^{ev}(V,h)$ and are in fact complex-conjugate representations of the latter. Restricting $\gamma^{\pm}_{\C}$ to $\Spin^c(V,h)\subset \mCl^{ev}(V,h)$ we obtain two of the four irreducible representations of $\Spin^c(V,h)$ that will play a role in the mathematical formulation of four-dimensional chiral $\cN=1$ supergravity. We define:
\begin{equation*}
\tau \eqdef \gamma_{\C}\vert_{\Spin^c(V,h)}\colon \Spin^c(V,h)\to \Aut_{\C}(\Sigma)\, , \quad \tau^{\pm} \eqdef \gamma^{\pm}_{\C}\vert_{\Spin^c(V,h)}\colon \Spin^c(V,h)\to \Aut_{\C}(\Sigma^{\pm}_0)\, .
\end{equation*}

\noindent
The real structure $c\colon \Sigma \to \Sigma$ intertwines the complex representations $\tau^{\pm}$ through complex conjugation in the second factor, that is:
\begin{equation*}
c\circ \tau^{\pm}[g,z] = \tau^{\mp}[g,\bar{z}] \circ c\, , \qquad \forall\,  [g,z]\in \Spin^c(V,h)\, ,
\end{equation*}

\noindent
The failure for $\tau^+$ and $\tau^-$ to be complex-conjugate of each other is given by the outer-automorphism of $\Spin^c(V,h)$ defined by complex conjugation on the $\U(1)$ factor. For future reference, we introduce the following irreducible representations:
\begin{equation*}
\tau^{\pm}_c\colon \mCl^{ev}(V,h)\to \End_{\C}(\Sigma_0^{\pm})\, , \qquad \tau^{\pm}_c([g,z]) \eqdef \tau^{\pm}([g,\bar{z}])\, , \qquad \forall\,\, [g,z]\in \Spin^{c}(V,h)\, .
\end{equation*}

\noindent
by composition of $\tau^{\pm}$ with the complex-conjugation automorphism of $\Spin^c(V,h)$. With this definition we have $\tau^{\pm}[g,z]= c\circ \tau^{\mp}_c[g,z] \circ c$ and hence $\tau^{\mp}_c$ is the complex conjugate representation of $\tau^{\pm}$. With respect to Clifford multiplication, the irreducible representations $\tau^{\pm}$ and $\tau^{\pm}_c$ relate as follows:
\begin{equation*}
\tau^{\mp}([g,z])\circ \gamma(v) = \gamma(\Ad_{g^{-1}}(v))\circ \tau^{\pm}([g,z]) \, , \qquad \tau^{\mp}_c([g,z])\circ \gamma(v) = \gamma(\Ad_{g^{-1}}(v))\circ \tau^{\pm}_c([g,z])\, ,
\end{equation*}

\noindent
for all $v \in V$ and $[g,z]\in \Spin^c(V,h)$. Therefore, Clifford multiplication by non-zero elements of $V$ does not preserve the chiral splitting, or, in other words, Clifford multiplication by non-zero elements of $V$ defines a complex linear map form $\Sigma^{\pm}_0$ to $\Sigma^{\mp}_0$. To summarize, we have introduced four inequivalent irreducible complex representations $\tau^{\pm}$ and $\tau^{\pm}_c$ of $\Spin^c(3,1)$. Note that $\tau^{\pm}$ is complex conjugate to $\tau^{\mp}_c$ and therefore are equivalent as real representations.


\section{Bosonic sector and Killing spinor equations}
\label{sec:bosonicsector}


In this section we construct a global geometric model for the Killing spinor equations and bosonic sector of four-dimensional chiral $\cN=1$ supergravity \cite{Cremmer:1978bh,Cremmer:1978hn}, that is, $\cN=1$ supergravity coupled to an arbitrary number of chiral multiplets. We will refer to such theory as \emph{$\cN =1$ chiral supergravity} or chiral supregravity for short.


\subsection{Chiral spinor bundles}


Let $(M,g)$ be a connected and oriented Lorentzian four-manifold, with Lorentzian metric $g$. We denote by $\Cl(M,g)$ the bundle of real Clifford algebras associated to $(M,g)$, whose typical fiber is the real Clifford algebra $\Cl(3,1)\simeq \Mat(4,\mathbb{R})$ of signature $(3,1)$. The construction of supergravity crucially relies on the existence of a vector bundle over $M$ equipped with Clifford multiplication, whose sections correspond with the \emph{supersymmetry generators} or \emph{parameters} of the theory. Hence, we will assume that $(M,g)$ is equipped with a bundle of irreducible complex Clifford modules, that is, a pair $(S,\gamma)$ where $S$ is a complex vector bundle over $M$ and:
\begin{equation*}
\gamma\colon \Cl(M,g)\to \End(S)\, ,
\end{equation*} 

\noindent
is a morphism of bundles of unital associative algebras such that, for every point $p\in M$, the restriction $\gamma_p\colon \Cl(M,g)_{p}\to \End(S_{p})$ of $\gamma$ to the fiber of $\Cl(M,g)$ over $p$ is an irreducible complex Clifford representation. The irreducibility condition is required in order to match the local degrees of freedom of the $\cN=1$ supersymmetry generator of the theory. Existence of $(S,\gamma)$ is obstructed. The obstruction was computed in References \cite{FriedrichTrautman,LazaroiuComplex}, where it was shown that $(S,\gamma)$ exists on $(M,g)$ if and only if $(M,g)$ admits a $\Pin^c(3,1)$ structure, in which case there exists a unique (modulo isomorphisms) $\Pin^c(3,1)$ structure $Q^{\mathfrak{pin}}$ on $(M,g)$ such that $S$ is the vector bundle associated to $Q^{\mathfrak{pin}}$ through the tautological representation of $\Pin^c(3,1)\subset \Cl(3,1)\otimes \mathbb{C}$. Recall that $(M,g)$ admits a $\Pin^c(3,1)$ structure $Q^{\mathfrak{pin}}$ if and only if there exists a $\U(1)$ principal bundle $\P_{Q^{\mathfrak{pin}}}$ such that:
\begin{equation*}
w_{2}(M) +  w^{-}_1(M)^2 + w^{-}_1(M) w^{+}_1(M) = w_{2}(\P_{Q^{\mathfrak{pin}}})\, ,
\end{equation*}

\noindent
where $w_{2}(M)$ denotes the second Stiefel-Whitney class of $M$, $w^{-}_1(M)$ denotes the first Stiefel-Whitney class of the bundle of time-like lines of $(M,g)$, $w^{+}_1(M)$ denotes the first Stiefel-Whitney class of the bundle of space-like planes of $(M,g)$ and $w_{2}(\P_{Q^{\mathfrak{pin}}})$ denotes the second Stiefel-Whitney class of $\P_{Q^{\mathfrak{pin}}}$, understood as an $\SO(2)\simeq \U(1)$ principal bundle. Let $\nu$ be the Lorentzian volume form on $(M,g)$. The complex volume form $\nu_{\C} = i\nu$ acts as an involution on $S$ and hence induces a splitting
\begin{equation*}
S = S^{+} \oplus S^{-}\, ,
\end{equation*}

\noindent
of $S$ in terms of the so-called chiral bundles $S^{+}$ and $S^{-}$. Note that this splitting is not preserved by the full Clifford algebra but only by its even part. In fact, $S^{+}$ and $S^{-}$ are inequivalent bundles of irreducible complex Clifford modules over $\Cl^{ev}(M,g)$, where $\Cl^{ev}(M,g)\subset \Cl(M,g)$ denotes de bundle of even Clifford algebras over $(M,g)$, with typical fiber isomorphic to $\Cl^{ev}(3,1) \simeq \Mat(2,\mathbb{C})$. Furthermore, the $\Pin^c(3,1)$ structure $Q^{\mathfrak{pin}}$ reduces to a $\Spin^c(3,1)$ structure $Q$ and we can understand $S^{+}$ and $S^{-}$ as being vector bundles associated to $Q$ by means of the two inequivalent tautological representations $\tau^{\pm}$ of $\Spin^c(3,1)\subset \Cl^{ev}(3,1)\otimes \mathbb{C}$ introduced in Section \ref{sec:CliffordAlgebras}. Therefore, we can write:
\begin{equation*}
S = Q\times_{\tau} \Sigma\, , \qquad S^{\pm} = Q\times_{\tau^{\pm}} \Sigma^{\pm}_0\, .
\end{equation*}

\begin{remark}
The previous discussion shows that a necessary condition to formulate chiral $\cN=1$ supergravity on an oriented Lorentzian manifold $(M,g)$ is for $(M,g)$ to admit a $\Spin^c(3,1)$ structure.
\end{remark}

\noindent
For future reference, we define:
\begin{equation*}
S^{+}_c \eqdef Q\times_{\tau^{+}_c} \Sigma^{+}_0\, , \qquad S^{-}_c \eqdef Q\times_{\tau^{-}_c} \Sigma^{-}_0 \, .
\end{equation*}

\begin{remark}
We have introduced four complex spinor bundles, namely $S^{+}$, $S^{-}$, $S^{+}_c$ and $S^{-}_c$, all of which are associated to the same $\Spin^c(3,1)$ structure $Q$ and all of which will play a relevant role in the formulation of chiral supergravity. 
\end{remark}

\noindent
Given a $\Spin^c(3,1)$ structure $Q$ on $M$, we will denote by $\P_{Q}$ its associated characteristic $\U(1)$-bundle and by $\L_{Q}$ its associated determinant complex line bundle. These are defined in terms of $Q$ as follows:
\begin{equation*}
\P_{Q} \eqdef Q\times_{l} \U(1)\, , \qquad \L_Q \eqdef Q\times_{l} \C\, ,
\end{equation*}

\noindent
where $l\colon \Spin(3,1)\to \U(1)$ is the homomorphism of groups defined through $l([g,z]) = z^2$ acting by left-multiplication on $\U(1)$ and $\C$, respectively.
\begin{remark}
The space-time manifold $(M,g)$ may admit non-equivalent $\Spin^c(3,1)$ structures. The set of isomorphism classes of $\Spin^c(3,1)$ structures can be shown to have the structure of a torsor over the abelian group $H^2(M,\mathbb{Z})$, see for example \cite{Friedrich}. Different choices of $\Spin^c(3,1)$ structure yield, in principle, inequivalent chiral supergravity theories.	
\end{remark}
 
\noindent
\begin{lemma}
The real structure $c\colon \Sigma^{\pm}_0 \to \Sigma^{\mp}_0$ induces the following anti-isomorphism of complex bundles:
\begin{equation*}
\mathfrak{c}\colon S^{\pm}\to S^{\mp}_c\, , \qquad [q,\xi] \mapsto [q,c(\xi)]\, ,
\end{equation*}

\noindent
which yields an isomorphism of bundles of real Clifford modules.
\end{lemma}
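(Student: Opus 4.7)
The plan is to verify in turn the four properties implicit in the claim: $\mathfrak{c}$ is well-defined on the associated bundle quotient, it is a bijection, it is complex anti-linear, and it intertwines Clifford multiplication. Throughout I will work pointwise via the associated bundle construction $S^\pm = Q \times_{\tau^\pm} \Sigma_0^\pm$ and $S^\mp_c = Q \times_{\tau^\mp_c} \Sigma_0^\mp$.

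First I would check that the assignment $(q,\xi) \mapsto (q, c(\xi))$ descends to the $\Spin^c(3,1)$-quotient. The equivalence class $[q,\xi]$ in $S^\pm$ is invariant under $(q, \xi) \sim (q \cdot [g,z], \tau^\pm([g,z]^{-1}) \xi)$, so we need the identity $c \circ \tau^\pm[g,z] = \tau^\mp_c[g,z] \circ c$. This is an immediate consequence of the two facts from Section~\ref{sec:CliffordAlgebras}: the intertwining relation $c \circ \tau^\pm[g,z] = \tau^\mp[g,\bar z] \circ c$ and the very definition $\tau^\mp_c[g,z] \eqdef \tau^\mp[g,\bar z]$. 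Bijectivity is then automatic, since $c^2 = \id_\Sigma$ (as $c$ is the canonical real structure of the complexification $\Sigma = \Sigma_\R \otimes \C$), so that the analogously-defined map $S^\mp_c \to S^\pm$, $[q,\eta]\mapsto [q,c(\eta)]$, provides a two-sided inverse.

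Second, anti-linearity: for $z \in \C$, the computation
\begin{equation*}
\mathfrak{c}\bigl(z\, [q,\xi]\bigr) = \mathfrak{c}([q, z\xi]) = [q, c(z\xi)] = [q, \bar z\, c(\xi)] = \bar z\, \mathfrak{c}([q,\xi])
\end{equation*}
uses only $\C$-antilinearity of $c$ on $\Sigma$. Thirdly, for equivariance under Clifford multiplication by a (real) tangent vector, I would exploit the fact recalled in Section~\ref{sec:CliffordAlgebras} that $c$ is $\Cl(V,h)$-equivariant, i.e.\ $c \circ \gamma(v) = \gamma(v) \circ c$ for all $v \in V$; transporting this identity to the associated bundle gives
\begin{equation*}
\mathfrak{c}\bigl(\gamma(v) [q,\xi]\bigr) = [q, c(\gamma(v)\xi)] = [q, \gamma(v) c(\xi)] = \gamma(v)\,\mathfrak{c}([q,\xi]),
\end{equation*}
which is precisely the statement that $\mathfrak{c}$, viewed as a morphism of real vector bundles, intertwines the real Clifford actions on the total spinor bundles $S^+ \oplus S^-$ and $S^+_c \oplus S^-_c$. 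Note that this is consistent with the chirality-reversing nature of Clifford multiplication by vectors, which sends $S^\pm \to S^\mp$ on the source side and $S^\mp_c \to S^\pm_c$ on the target side.

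The only subtle point --- and thus the main obstacle --- is the well-definedness step: one must be careful to distinguish $\tau^\mp$ from $\tau^\mp_c$ and track the outer automorphism of $\Spin^c(3,1)$ given by $[g,z]\mapsto [g,\bar z]$, which is exactly what $c$ absorbs when translating between $\tau^\pm$-equivariance and $\tau^\mp_c$-equivariance. Once this intertwining is correctly identified, the remaining verifications reduce to transporting fiberwise identities for $c$ to the associated bundle $Q \times_{(\cdot)} (\cdot)$.
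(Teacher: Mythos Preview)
Your proof is correct and follows essentially the same approach as the paper: the paper's argument consists precisely of the well-definedness computation $[q\cdot [g,z], c(\tau^{\pm}([g,z]^{-1}) \xi)] = [q\cdot [g,z], \tau^{\mp}_c([g,z]^{-1}) c(\xi)] = [q, c(\xi)]$ together with the Clifford-equivariance $\gamma(v)\circ \mathfrak{c} = \mathfrak{c}\circ \gamma(v)$, both of which you verify (your version is simply more explicit about bijectivity and anti-linearity).
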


\begin{proof}
The fact that $\mathfrak{c}$ is well-defined as an anti-isomorphism of complex vector bundles follows from the anti-linearity of $c\colon \Sigma^{\pm}_0\to \Sigma^{\mp}_0$ and the following computation:
\begin{equation*}
[q\cdot [g,z], c(\tau^{\pm}([g,z]^{-1}) \xi)] = [q\cdot [g,z], \tau^{\mp}_c([g,z]^{-1}) c(\xi)] = [q, c(\xi)]\, ,
\end{equation*}

\noindent
which holds for all $[g,z]\in \Spin^c(3,1)$. Additionally, $\mathfrak{c}$ induces an isomorphism of bundles of real Clifford modules since $\gamma(v)\circ \mathfrak{c} = \mathfrak{c}\circ \gamma(v)$ for all $v\in TM$.
\end{proof}

\begin{prop}
\label{prop:detS}
The following isomorphisms of complex line bundles hold:
\begin{equation*}
\L_Q  \simeq \Lambda^2(S^{+}) \simeq \Lambda^2(S^{-})\, , \qquad \L^{-1}_Q  \simeq \Lambda^2(S^{+}_c) \simeq \Lambda^2(S^{-}_c)\, .
\end{equation*}
\end{prop}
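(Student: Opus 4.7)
The plan is to reduce the statement to a computation of one-dimensional characters of $\Spin^c(3,1)$, using that $\Lambda^2$ of a rank-two associated bundle is the associated bundle of the determinant representation.

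First I would observe that since $\Sigma^{\pm}_0$ is a two-dimensional complex vector space, taking top exterior powers commutes with the associated-bundle construction, so
\be
\Lambda^2(S^{\pm}) \,\simeq\, Q\times_{\det \tau^{\pm}}\C\, ,\qquad \Lambda^2(S^{\pm}_c)\,\simeq\, Q\times_{\det \tau^{\pm}_c}\C\, ,
\ee
and the proposition reduces to identifying the one-dimensional characters $\det \tau^{\pm}$ and $\det \tau^{\pm}_c$ of $\Spin^c(3,1)$ with the characters $l$ and $l^{-1}$ respectively.

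Next I would split an arbitrary element $[g,z]\in \Spin^c(3,1)$ as the product of $[g,1]$ and $[1,z]$ and compute the determinant of $\tau^{\pm}$ on each piece separately. For $[1,z]$ the element $z\in \U(1)\subset \mCl(V,h)$ is a complex scalar, so $\tau^{\pm}([1,z]) = z\,\mathrm{Id}_{\Sigma^{\pm}_0}$ and hence $\det \tau^{\pm}([1,z]) = z^2$. For $[g,1]$ with $g\in \Spin(V,h)$ I would use the identification $\Spin(3,1)\simeq \SL(2,\C)$ (as a real Lie group): under this identification the irreducible complex two-dimensional representations $\tau^{\pm}|_{\Spin(V,h)}$ are the standard representation of $\SL(2,\C)$ and its complex conjugate, both of which have trivial determinant since $\SL(2,\C)$ is perfect and hence admits no nontrivial one-dimensional representations. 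Combining,
\be
\det \tau^{\pm}([g,z]) \,=\, z^2 \,=\, l([g,z])\, ,
\ee
which identifies $\Lambda^2(S^{\pm})$ with the complex line bundle $\L_Q = Q\times_l \C$.

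For the $S^{\pm}_c$ half, the cleanest route is to invoke the previous lemma. The map $\mathfrak{c}\colon S^{\pm}\to S^{\mp}_c$ is a complex-antilinear bundle isomorphism, hence induces a complex-antilinear bundle isomorphism $\Lambda^2 \mathfrak{c}\colon \Lambda^2 S^{\pm}\to \Lambda^2 S^{\mp}_c$, which means $\Lambda^2 S^{\mp}_c$ is isomorphic to the complex conjugate line bundle of $\Lambda^2 S^{\pm}\simeq \L_Q$. Since the complex conjugate of the $\U(1)$-line bundle $\L_Q$ is its inverse $\L_Q^{-1}$, this gives $\Lambda^2 S^{\pm}_c\simeq \L_Q^{-1}$. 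Alternatively, one can redo the direct character computation: from $\tau^{\pm}_c([g,z])=\tau^{\pm}([g,\bar{z}])$ one gets $\det \tau^{\pm}_c([g,z])=\bar z^2 = z^{-2}$ on $\U(1)$, which is $l^{-1}$.

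The only nontrivial step is the vanishing of $\det \tau^{\pm}$ on $\Spin(V,h)$, and I would record the short argument above via $\Spin(3,1)\simeq \SL(2,\C)$; everything else is formal manipulation with associated bundles, so I do not expect serious obstacles.
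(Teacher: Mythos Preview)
Your proposal is correct and follows essentially the same approach as the paper: both reduce the statement to computing the determinant characters $\det\tau^{\pm}$ and $\det\tau^{\pm}_c$ on $\Spin^c(3,1)$ and showing they equal $z^2$ and $\bar z^2$ respectively. The paper simply asserts $\det(\gamma(g))=1$ for $g\in\Spin(3,1)$, whereas you justify this via the identification with $\SL(2,\C)$ (strictly speaking $\Spin_0(3,1)\simeq\SL(2,\C)$, but this is harmless here); your alternative route to $\Lambda^2 S^{\pm}_c\simeq \L_Q^{-1}$ via the antilinear isomorphism $\mathfrak{c}$ is a nice extra observation not in the paper.
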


\begin{proof}
The result follows from the explicit form of the determinant representation of $\tau^{\pm}\colon \Spin^c(3,1)\to \Aut(\Sigma^{\pm}_0)$ and $\tau^{\pm}_c\colon \Spin^c(3,1)\to \Aut(\Sigma^{\pm}_0)$, which reads:
\begin{eqnarray*}
\det(\tau^{\pm}([g,z])) & = &  z^2 \, \in \U(1)\, , \\
\det(\tau^{\pm}_c[g,z])) & = &  \bar{z}^{2}\, \in \U(1)\, ,
\end{eqnarray*}

\noindent
where we have used that $\det(\gamma(g)) = 1$ for all $g\in \Spin(3,1)$.
\end{proof}

\noindent
To construct chiral supergravity it is convenient to endow $S$ with the sesquilinear pairing $(-,-)$ and the bilinear pairing $\langle -, -\rangle$ introduced in Section \ref{sec:CliffordAlgebras} on the representation spaces $\Sigma$ and $\Sigma^{\pm}_0$. Since they are respectively invariant under $\Spin^c_0(3,1)$ and $\Spin_0(3,1)$ transformations, for this to be possible we need the $\Spin^c(3,1)$ structure $Q$ to further reduce to a $\Spin^c_0(3,1)$ structure. This is in general obstructed.

\begin{prop}
\label{prop:Spinc0}
A $\Spin^c(3,1)$ structure $Q$ on $(M,g)$ reduces to a $\Spin^c_0(3,1)$ structure if and only if:
\begin{equation*}
w^{-}_1(M,g) = 0\, ,
\end{equation*}

\noindent
where $w^{-}_1(M,g)$ denotes the first Stiefel-Whitney class of the $\O(1)$ principal bundle associated to any $\O(1) \times\O(3)$ reduction of the orthonormal frame bundle. 
\end{prop}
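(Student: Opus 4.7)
The strategy is to translate the existence of a $\Spin^c_0(3,1)$-reduction of $Q$ into the existence of an $\SO_0(3,1)$-reduction of the underlying oriented orthonormal frame bundle, and then identify the resulting obstruction with $w_1^-(M,g)$.

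First, I would verify the group-theoretic input: the double cover $\pi\colon \Spin^c(3,1)\to \SO(3,1)$, $[g,z]\mapsto \Ad_g$, has connected kernel $\{[1,z]:z\in \U(1)\}\cong \U(1)$, hence induces an isomorphism of component groups
\begin{equation*}
\Spin^c(3,1)/\Spin^c_0(3,1)\;\xrightarrow{\sim}\; \SO(3,1)/\SO_0(3,1)\;\cong\; \mathbb{Z}/2\, .
\end{equation*}
Because the $\U(1)$-fibers of the projection $Q\to \Fr(M,g)$ onto the oriented orthonormal frame bundle sit inside $\Spin^c_0(3,1)$, this descends to a canonical isomorphism of principal $\mathbb{Z}/2$-bundles $Q/\Spin^c_0(3,1)\cong \Fr(M,g)/\SO_0(3,1)$. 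Since a reduction of a principal bundle to a normal subgroup corresponds to a section of the associated quotient, $Q$ admits a $\Spin^c_0(3,1)$-reduction iff the $\mathbb{Z}/2$-bundle $\Fr(M,g)/\SO_0(3,1)$ is trivializable.

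Second, I would identify this $\mathbb{Z}/2$-bundle with the $\O(1)$-bundle in the statement. Any $\O(1)\times \O(3)$-reduction of the orthonormal frame bundle corresponds to an orthogonal splitting $TM = L\oplus L^\perp$ with $L$ a real line bundle of timelike directions, and the associated principal $\O(1)$-bundle $P$ is the orientation bundle of $L$, whose first Stiefel--Whitney class is by definition $w_1^-(M,g)$. Since $\SO_0(3,1)$ is characterized inside $\SO(3,1)$ as the subgroup stabilizing both the time and space orientations, the quotient map $\SO(3,1)\to \mathbb{Z}/2$ restricts on $(\O(1)\times \O(3))\cap \SO(3,1)$ to the sign projection $(\epsilon,A)\mapsto \epsilon$, producing a canonical isomorphism $\Fr(M,g)/\SO_0(3,1)\cong P$ of principal $\mathbb{Z}/2$-bundles, independent of the chosen reduction. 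Triviality of $P$ is equivalent to $w_1^-(M,g) = 0$, which together with the first step proves the proposition.

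The main subtle point I expect to need to pin down is the last identification --- both that the isomorphism $\Fr(M,g)/\SO_0(3,1)\cong P$ is well-defined and that it does not depend on the auxiliary $\O(1)\times \O(3)$-reduction. This reduces to the elementary but important group-theoretic fact that $\SO_0(3,1)$ is the simultaneous stabilizer inside $\SO(3,1)$ of the $\O(1)$-line and $\O(3)$-plane orientations.
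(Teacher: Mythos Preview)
Your argument is correct and rests on the same key fact as the paper's proof: the kernel $\U(1)$ of $\Spin^c(3,1)\to\SO(3,1)$ lies in the identity component, so a $\Spin^c_0(3,1)$-reduction of $Q$ corresponds exactly to an $\SO_0(3,1)$-reduction of the oriented orthonormal frame bundle, which in turn is equivalent to time-orientability, i.e.\ $w_1^-(M,g)=0$.

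The paper's proof is more direct and does not pass through the quotient $\mathbb{Z}/2$-bundle: for the forward direction it simply observes that the image of a $\Spin^c_0(3,1)$-structure under the projection to the frame bundle is an $\SO_0(3,1)$-reduction (hence $(M,g)$ is time-oriented), and for the converse it takes the preimage of an $\SO_0(3,1)$-reduction under $Q\to\Fr(M,g)$ to obtain the desired $\Spin^c_0(3,1)$-reduction. Your obstruction-theoretic framing via $Q/\Spin^c_0(3,1)\cong\Fr(M,g)/\SO_0(3,1)\cong P$ is a cleaner packaging of exactly this image/preimage correspondence; it has the advantage of making the independence of the choice of $\O(1)\times\O(3)$-reduction manifest and of identifying the obstruction class explicitly, at the cost of slightly more machinery than the paper uses.
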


\begin{proof}
The group $\Spin_0(3,1)$ is the universal cover of $\SO_0(3,1)$, where $\SO_0(3,1)$ denotes the connected component of $\SO(3,1)$ containing the identity. The group $\SO_0(3,1)$ is in particular time- and space-orientation preserving. If $Q$ is a $\Spin^c_0(3,1)$ structure on $(M,g)$, its image inside the orthonormal frame bundle induces a reduction to $\SO_0(3,1)$, implying that $(M,g)$ is time-oriented and space-oriented. Hence $w^{-}_1(M,g) = 0$. To prove the converse, we use that $(M,g)$ is oriented and note that if in addition $w^{-}_1(M,g) = 0$, then the frame bundle admits a reduction to $\SO_0(3,1)$. This reduction lifts to a $\Spin^c_0(3,1)$ structure by taking its preimage through the projection of the $\Spin^c(3,1)$ structure $Q$ to the oriented orthonormal frame bundle of $(M,g)$.
\end{proof}

\noindent
We will assume that the obstruction is satisfied and use for simplicity the same symbol $Q$ to denote the reduced $\Spin^c_0(3,1)$ structure. Note that $w^{-}_1(M,g) = 0$ is equivalent to $(M,g)$ being time-orientable, a condition which is usually assumed in Lorentzian geometry as part of the definition of a \emph{space-time}, see for example \cite{LorentzianGeometry}. In particular we will assume that, given a choice of Lorentzian metric $g$, $M$ is endowed with a fixed time-orientation and space-orientation. 

We define:
\begin{eqnarray}
\label{eq:bilinears}
& \cO &\colon  S \times S \to \mathrm{L}_Q\, , \quad [q,\xi_{0}\otimes z]\times [q,\eta_{0}\otimes w]\mapsto \langle \xi_{0},\eta_{0}\rangle\, z w\, , \\
& \cB &\colon S \times S \to \underline{\C}\, , \quad [q,\xi_{0}\otimes z]\times [q,\eta_{0}\otimes w]\mapsto (\xi_{0}\otimes z, \eta_{0}\otimes w) = \langle\xi_{0},\eta_{0}\rangle\, z \bar{w}\, , 
\end{eqnarray}

\noindent
for all $q\in Q$, $\xi_{0}, \eta_{0} \in \Sigma_{\R}$ and $z, w \in \C$. Note that $\cO$ is complex bilinear and $\cB$ is a non-degenerate  Hermitian inner product of split signature $(2,2)$ for which the chiral bundles $S^{\pm}$ are maximally isotropic vector subbundles of $S$. 

The tensor product connection of the Levi-Civita connection $\nabla^g$ on $(M,g)$ with any connection $\nabla_A$ on $\P_Q$ defines a connection on $\Fr^g_0(M)\times_M \P_Q$ which we denote by $\nabla^g_A$, where $\Fr^g_0(M)$ denotes principal $\SO_0(3,1)$ bundle of restricted $g$-orthonormal frames. Hence $\Fr^g_0(M)\times_M \P_Q$ is an $\SO_0(3,1)\times \U(1)$ principal bundle. The connection $\nabla^g_A$ on $\Fr^g_0(M)\times_M \P_Q$ canonically lifts to a connection on the $\Spin^c_0(3,1)$ bundle $Q$ which in turn induces a connection on $S$ which preserves the chiral splitting $S^{\pm}$. For simplicity, we denote the lifted connection by the same symbol.

We introduce now the description of $\Spin^c_0(3,1)$ structures that will be used in the definition of chiral triple. Let $M$ be an oriented four-manifold, and let $\P_{\Gl_{+}}$ denote the associated principal $\Gl_{+}(4,\mathbb{R})$ bundle of oriented frames, where $\Gl_{+}(4,\mathbb{R})$ denotes the maximal connected subgroup of $\Gl(4,\mathbb{R})$. We have $\pi_1(\Gl_{+}(4,\mathbb{R})) = \mathbb{Z}_2$ and $\Gl_{+}(4,\mathbb{R})$ fits in the following short exact sequence
\begin{equation*}
1\to \mathbb{Z}_{2} \to \widetilde{\Gl}_{+}(4,\mathbb{R}) \xrightarrow{\tilde{\lambda}} \Gl_{+}(4,\mathbb{R}) \to 1\, ,
\end{equation*}

\noindent
where $\widetilde{\Gl}_{+}(4,\mathbb{R})$ is the universal covering of $\Gl_{+}(4,\mathbb{R})$ and $\tlambda$ denotes the associated cover map. We define the group $\widetilde{\Gl}^c_{+}(4,\mathbb{R})$ as:
\begin{equation*}
\widetilde{\Gl}^c_{+}(4,\mathbb{R}) = \widetilde{\Gl}_{+}(4,\mathbb{R})\cdot \U(1) = \left( \widetilde{\Gl}_{+}(4,\mathbb{R})\times \U(1)\right)/\mathbb{Z}_2\, ,
\end{equation*}

\noindent
which fits into the following short exact sequence:
\begin{equation*}
1\to \mathbb{Z}_{2} \to \widetilde{\Gl}^c_{+}(4,\mathbb{R}) \xrightarrow{\tilde{\lambda}_c} \Gl_{+}(4,\mathbb{R})\times \U(1) \to 1\, ,
\end{equation*}

\noindent
where $\tilde{\lambda}_c$ denotes a $\mathbb{Z}_2$ cover map given by $[a,u]\mapsto (\tilde{\lambda}(a),u^2) $ for every $[a,u]\in  \widetilde{\Gl}^c_{+}(4,\mathbb{R})$.

\begin{remark}
Note that $\widetilde{\Gl}^c_{+}(4,\mathbb{R})$ is not a matrix group, since it does not admit any finite-dimensional faithful representation.
\end{remark}

\begin{definition}
A $\widetilde{\Gl}^c_{+}(4,\mathbb{R})$ structure on $M$ is a pair $(\P_{\widetilde{\Gl}^c_{+}}, \widetilde{\Lambda}_c)$ where $\widetilde{\P}_{\widetilde{\Gl}^c_{+}}$ is principal bundle with structure group $\widetilde{\Gl}^c_{+}(4,\mathbb{R})$ and $\widetilde{\Lambda}_c$ is a principal bundle map:
\begin{equation*}
\widetilde{\Lambda}_c \colon \P_{\widetilde{\Gl}^c_{+}} \to \P_{\Gl_{+}}\, ,
\end{equation*}

\noindent
satisfying:
\begin{equation*}
\widetilde{\Lambda}_c(q u) = \widetilde{\Lambda}_c(q) \tlambda_c(u)\, ,
\end{equation*}

\noindent
for every $q\in \P_{\widetilde{\Gl}^c_{+}}$ and every $u\in \widetilde{\Gl}^c_{+} (4,\mathbb{R})$.
\end{definition}

\noindent
To every $\widetilde{\Gl}^c_{+}(4,\mathbb{R})$ structure we can associate a determinant complex line bundle $\widetilde{\L}$ and a characteristic principal $\U(1)$ bundle $\widetilde{\P}$, just in the same way as we did for $\Spin^c(3,1)$ structures.

\begin{prop}
\label{prop:GLSpin}
A fixed choice of $\widetilde{\Gl}^c_{+}(4,\mathbb{R})$ structure $(\P_{\widetilde{\Gl}^c_{+}}, \widetilde{\Lambda}_c)$ on $M$ induces, for every Lorentzian metric $g$ such that $w^{-}_1(M,g) = 0$, a canonical $\Spin^c_0(3,1)$ structure $Q$ on $(M,g)$. Furthermore the associated characteristic and determinant bundles are isomorphic:
\begin{equation*}
\widetilde{\P} \simeq \P_Q\, , \qquad \widetilde{\L} \simeq \L_Q\, .
\end{equation*}
\end{prop}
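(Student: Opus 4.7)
The plan is to exhibit $\Spin^c_0(3,1)$ as a Lie subgroup of $\widetilde{\Gl}^c_+(4,\mathbb{R})$ and then obtain $Q$ as the preimage under $\widetilde{\Lambda}_c$ of a canonical reduction of $\P_{\Gl_+}\times_M \widetilde{\P}$ determined by $g$. First I would observe that the inclusion $\SO_0(3,1)\hookrightarrow \Gl_+(4,\mathbb{R})$ induces an isomorphism on fundamental groups (both deformation retract to maximal compacts via $\SO(3)\hookrightarrow \SO(4)$), so $\tilde{\lambda}^{-1}(\SO_0(3,1))$ is precisely the universal cover $\Spin_0(3,1)$, realized as a Lie subgroup of $\widetilde{\Gl}_+(4,\mathbb{R})$. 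Quotienting by the diagonal $\mathbb{Z}_2$ then yields an embedding $\Spin^c_0(3,1)\hookrightarrow \widetilde{\Gl}^c_+(4,\mathbb{R})$, with $\tilde{\lambda}_c$ restricting to the standard double cover $\Spin^c_0(3,1)\to \SO_0(3,1)\times \U(1)$; crucially,
\begin{equation*}
\tilde{\lambda}_c^{-1}(\SO_0(3,1)\times \U(1)) = \Spin^c_0(3,1)\, \subset\, \widetilde{\Gl}^c_+(4,\mathbb{R})\, .
\end{equation*}

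Next, invoking the hypothesis $w^{-}_1(M,g)=0$ together with the fixed orientation on $M$, the proof of Proposition \ref{prop:Spinc0} supplies a canonical reduction of $\P_{\Gl_+}$ to the restricted $g$-orthonormal frame bundle $\Fr^g_0(M)$ with structure group $\SO_0(3,1)$. Combining this with the characteristic $\U(1)$-bundle $\widetilde{\P}$ of the given $\widetilde{\Gl}^c_+$-structure produces a principal $\SO_0(3,1)\times \U(1)$-subbundle $\Fr^g_0(M)\times_M \widetilde{\P}$ of $\P_{\Gl_+}\times_M \widetilde{\P}$, over which $\widetilde{\Lambda}_c$ is equivariant in the sense made precise by its defining relation. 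I would then set
\begin{equation*}
Q \eqdef \widetilde{\Lambda}_c^{-1}\bigl(\Fr^g_0(M)\times_M \widetilde{\P}\bigr) \subset \P_{\widetilde{\Gl}^c_+}\, .
\end{equation*}
A direct check of the equivariance relation shows that $Q$ is stable exactly under the subgroup $\tilde{\lambda}_c^{-1}(\SO_0(3,1)\times \U(1)) = \Spin^c_0(3,1)$, so $Q$ is a principal $\Spin^c_0(3,1)$-subbundle; its projection to $\Fr^g_0(M)$ provided by $\widetilde{\Lambda}_c|_Q$ exhibits $Q$ as a $\Spin^c_0(3,1)$-structure on $(M,g)$. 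Canonicity follows from the absence of any auxiliary choice in the construction.

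Finally, to verify $\widetilde{\P}\simeq \P_Q$ and $\widetilde{\L}\simeq \L_Q$, I would note that the $\U(1)$-valued character on $\widetilde{\Gl}^c_+(4,\mathbb{R})$ defining $\widetilde{\P}$ and $\widetilde{\L}$ (namely $[a,u]\mapsto u^2$) restricts under the inclusion $\Spin^c_0(3,1)\hookrightarrow \widetilde{\Gl}^c_+(4,\mathbb{R})$ to the homomorphism $l\colon [g,z]\mapsto z^2$ used to define $\P_Q$ and $\L_Q$. The tautological map $Q\times_l\U(1)\to \P_{\widetilde{\Gl}^c_+}\times_{\tilde{l}}\U(1)$, $[q,z]\mapsto [q,z]$, is therefore a well-defined morphism of $\U(1)$-bundles over $M$, hence an isomorphism; the same argument handles the determinant bundle. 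The main technical obstacle I anticipate is cleanly extracting the equivariance of $\widetilde{\Lambda}_c$ with respect to \emph{both} factors of $\tilde{\lambda}_c$ (so that the preimage is a genuine $\Spin^c_0(3,1)$-subbundle rather than only projecting to an $\SO_0(3,1)\times \U(1)$-object), which reduces to verifying that the defining kernel $\mathbb{Z}_2$ of $\widetilde{\Gl}^c_+(4,\mathbb{R})$ coincides with the one used in forming $\Spin^c_0(3,1)$ — a fact visible from the matching generators $(-1,-1)$ on both sides.
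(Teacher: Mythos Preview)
Your proposal is correct and follows essentially the same route as the paper: the paper also defines $Q$ as the preimage of $\Fr^g_0(M)$ under $\widetilde{\Lambda}_c$ and deduces the bundle isomorphisms from the fact that the character $[g,z]\mapsto z^2$ on $\Spin^c_0(3,1)$ factors through the inclusion into $\widetilde{\Gl}^c_+(4,\mathbb{R})$. Your formulation $Q=\widetilde{\Lambda}_c^{-1}(\Fr^g_0(M)\times_M\widetilde{\P})$ is equivalent to the paper's $Q=\widetilde{\Lambda}_c^{-1}(\Fr^g_0(M))$ since the $\widetilde{\P}$-factor imposes no additional constraint (it is the full $\U(1)$-quotient), and you supply extra detail the paper omits, namely the reason $\tilde{\lambda}^{-1}(\SO_0(3,1))=\Spin_0(3,1)$ inside $\widetilde{\Gl}_+(4,\mathbb{R})$.
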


\begin{proof}
A Lorentzian metric $g$ on $M$ induces a $\SO(3,1)$ reduction of the bundle of oriented frames of $M$, which, using that by assumption $w^{-}_1(M,g) = 0$, further reduces to a $\SO_0(3,1)$ structure $\mathrm{Fr}^g_{0}(M)$. Then:
\begin{equation*}
\iota \colon \mathrm{Fr}^g_0(M)\hookrightarrow \P_{\Gl_{+}}\, ,
\end{equation*} 

\noindent
is a connected embedded submanifold and:
\begin{equation*}
Q\eqdef \tilde{\Lambda}_c^{-1}(\mathrm{Fr}^g_0(M))\hookrightarrow \P_{\widetilde{\Gl}^c_{+}}\, ,
\end{equation*} 

\noindent
defines a reduction of $\P_{\widetilde{\Gl}^c_{+}}$ to a $\Spin^c_0(3,1)$ principal bundle $Q$. The restriction of $\tilde{\Lambda}_c$ to $Q$ defines a bundle covering map:
\begin{equation*}
\Lambda_c\colon Q\to \mathrm{Fr}^g_{0}(M)\, ,
\end{equation*}

\noindent
satisfying the equivariance property with respect to $\lambda_c\colon \Spin^c_0(3,1)\to \SO_0(3,1)$ which makes the pair $(Q,\Lambda_c)$ into a $\Spin^c_0(3,1)$ structure on $(M,g)$. The isomorphisms $\widetilde{\P} \simeq \P_Q$ and $\widetilde{\L} \simeq \L_Q$ follow from the fact that the homomorphism:
\begin{equation*}
\Spin^c_0(3,1) \to \U(1)\, , \qquad [g,z] \mapsto z^2\, , 
\end{equation*}

\noindent
factors through the inclusion $\Spin^c_0(3,1) \hookrightarrow\widetilde{\Gl}^c_{+}(4,\mathbb{R})$.
\end{proof}

\noindent
The choice of a $\widetilde{\Gl}^c_{+}(4,\mathbb{R})$ structure allows to define a canonical $\Spin^c_0(3,1)$ structure associated to every Lorentzian metric $g$. This fact will play a relevant role in the definition of chiral supergravity and the notion of chiral triple, see Definition \ref{def:chiraltriple}. 


\subsection{Chiral triples and the scalar potential}


Once we have described the spinor bundles that shall be used to formulate the geometric model for the Killing spinor equations of chiral supergravity, we need to introduce the concept of \emph{chiral triple} on a complex manifold $\cM$, which plays the role of target space for the non-linear sigma model of the theory. This is the so-called \emph{scalar manifold} in the physical literature. Let us fix a complex manifold $\cM$ of K\"ahler type, that is, admitting a K\"ahler metric, of real dimension $2n$. When necessary, we will denote the complex structure on $\cM$ by $\cI$.

\begin{definition}
\label{def:positivecL}
Let $(\cL,\cH)$ be a Hermitian holomorphic line bundle over $\cM$, with Hermitian structure $\cH$. We say that $(\cL,\cH)$ is \emph{positive} (or has positive curvature) if the Chern curvature $\Theta$ associated to $\cH$ defines a Riemannian metric $\cG$ on $\cM$ through the following formula:
\begin{equation*}
2\pi \, \cG \eqdef  i\,\Theta \circ (\mathrm{Id}\otimes \cI)\, .
\end{equation*}

\noindent
We say, that $(\cL,\cH)$ is \emph{negative} (or has negative curvature) if its dual, equipped with the induced holomorphic and Hermitian structures, is positive. If $(\cL,\cH)$ is negative, we define the associated positive-definite metric as follows:
\begin{equation*}
2\pi \, \cG \eqdef - i\,\Theta \circ (\mathrm{Id}\otimes \cI)\, .
\end{equation*}

\noindent
To ease the notation we will sometimes denote the holomorphic Hermitian line bundle $(\cL,\cH$) simply by $\cL_{\cH}$.
\end{definition}

\noindent
Recall that:
\begin{equation}
\label{eq:KahlerChern}
\mathcal{V} \eqdef \frac{i}{2\pi} \Theta\in \Omega^{1,1}_{\mathbb{Z}}(\cM)\, ,
\end{equation}

\noindent
defines a real $(1,1)$ integral closed form whose associated de Rham cohomology class $[\cV]$ is equal to:
\begin{equation*}
[\cV] = (j\circ c_1)(\cL)\in j(H^2(\cM,\Z))\, .
\end{equation*}

\noindent
Here $c_1\colon \Pic(\cM)\to H^2(\cM,\Z)$ denotes the first Chern class map and $j\colon H^2(\cM,\Z)\rightarrow H^{2}(\cM,\C)$ denotes the canonical homomorphism mapping the singular integral cohomology of $\cM$ into its de Rahm cohomology with coefficients in $\C$. Given a holomorphic Hermitian line bundle $\cL_{\cH}$ we denote the associated principal $\U(1)$ bundle by $\cP_{\cH}$. Recall that the Chern connection $\cD$ of $(\cL,\cH)$ is induced by a unique connection on $\cP_{\cH}$, which we denote by $\cA$.

\begin{definition}
\label{def:chiraltriple}
Let $\cM$ be a complex manifold of K\"ahler type and $M$ an oriented four-manifold. A {\em chiral triple} on $(M,\cM)$ is a  tuple $\mathfrak{Q}\eqdef (\cL_{\cH}, \P_{\widetilde{\Gl}^c_{+}},\cW)$, where:

\begin{itemize}
\item $\cL_{\cH} \eqdef (\cL,\cH)$ is a negative holomorphic Hermitian line bundle on $\cM$ and $\P_{\widetilde{\Gl}^c_{+}}$ is a $\widetilde{\Gl}^c_{+}(4,\mathbb{R})$ structure on $M$ for which there exists a smooth map $\varphi \colon M\to \cM$ such that:
\begin{equation}
\label{eq:pullback}
\widetilde{\L} \simeq \cL^{\varphi}\, .
\end{equation}

\noindent
That is, the determinant line bundle $\widetilde{\L}$ of $\P_{\widetilde{\Gl}^c_{+}}$ is isomorphic with the pull back $\cL^\varphi$ by $\varphi$ of the holomorphic line bundle $\cL$. 

\

\item $\cW\in H^0(\cL)$ is a holomorphic section of $\cL$, which is usually referred to as the \emph{superpotential} in the physics literature.
\end{itemize} 
\end{definition}

\noindent
For future reference, we note the following. There is a canonical embedding:
\begin{equation*}
\iota\colon \widetilde{\P} = \P_{\widetilde{\Gl}^c_{+}}\times_l \U(1) \hookrightarrow \widetilde{\L} = \P_{\widetilde{\Gl}^c_{+}}\times_l\mathbb{C}\, ,\qquad [p,u]\mapsto [p,u]\, ,
\end{equation*}

\noindent
which is constructed by using the presentation of $\widetilde{\P}$ and $\widetilde{\L}$ as bundles associated to $\P_{\widetilde{\Gl}^c_{+}}$. Furthermore, there is a canonical projection:
\begin{equation*}
pr^{\varphi}_{\cH}\colon \cL^{\varphi}_{\cH} \to \cP^{\varphi}_{\cH}\, , \qquad l\mapsto \frac{l}{\cH^{\varphi}(l,l)^{\frac{1}{2}}}\, ,
\end{equation*}
and, for every choice of isomorphism $\Psi\colon \widetilde{\L}\xrightarrow{\simeq} \cL^{\varphi}$, there exists a canonical isomorphism of principal bundles:
\begin{equation*}
\Psi_{\cH}\colon \widetilde{\P}\xrightarrow{\simeq} \cP^{\varphi}_{\cH}\, ,
\end{equation*}

\noindent
defined as follows:
\begin{equation*}
\Psi_{\cH} \eqdef pr_{\cH}^{\varphi}\circ\Psi\circ\iota \, .
\end{equation*}

\noindent
Conversely, every isomorphism $\Psi_{\cH}\colon \widetilde{\P}\xrightarrow{\simeq} \cP^{\varphi}_{\cH}$ induces a canonical isomorphism of complex line bundles $\Psi\colon \widetilde{\L}\xrightarrow{\simeq} \cL^{\varphi}$ in the usual way by using that $\widetilde{\L}$ and $\cL^{\varphi}$ are complex line bundles associated to $\widetilde{\P}$ and $\cP^{\varphi}_{\cH}$. For simplicity we will denote $\Psi_{\cH}$ simply by $\Psi$.

\begin{remark}
A chiral triple $\mathfrak{Q}$ induces, for any choice of Lorentzian metric $g$ on the oriented four-manifold $M$ such that $w_1^{-}(M,g) = 0$, a canonical  $\Spin^c_0(3,1)$ structure $Q_g$, whose determinant line bundle and characteristic bundles are respectively isomorphic to those of $\P_{\widetilde{\Gl}^c_{+}}$, see Proposition \ref{prop:GLSpin}. For simplicity, we denote the $\Spin^c_0(3,1)$ structure $Q_g$ simply by $Q$, with the implicit understanding that, given a Lorentzian metric $g$ on $M$ with $w_1^{-}(M,g) = 0$, the chosen $\Spin^c_0(3,1)$ structure $Q$ on $(M,g)$ corresponds with the canonical $\Spin^c_0(3,1)$ structure determined by $\mathfrak{Q}$ on $(M,g)$. In particular, we have:
\begin{equation*}
\P_Q \simeq \cP^{\varphi}_{\cH}\, , \qquad \L_Q \simeq \cL^{\varphi}\, .
\end{equation*}

\noindent
We will only consider Lorentzian metrics whose associated bundle of time-like lines is trivializable, that is, satisfying $w_1^{-}(M,g) = 0$.
\end{remark}

\begin{remark}
Note that a given K\"ahler manifold $\cM$ need not admit, in general, any chiral triple for any four-manifold $M$. If $\cM$ is compact, Kodaira's embedding theorem implies that a necessary condition for $\cM$ to admit a chiral triple is that it be projective, since it must admit a positive line bundle. Therefore, non-algebraic complex manifolds of K\"ahler type, such as non-algebraic tori or K3 surfaces, give a large number of compact K\"ahler manifold not admitting chiral triples. The existence of a chiral triple is also obstructed for non-compact K\"ahler manifolds. To see this, we use an argument of Verbitsky on the non-existence of positive holomorphic line bundles on certain non-compact K\"ahler manifolds. Consider for instance a K3 surface with no non-zero integral $(1,1)$ classes and remove a point $p$. This is a non-compact K\"ahler manifold which does not admit positive line bundles. Indeed, by the previous assumptions, a positive line bundle would imply the existence of an exact positive form $\omega$. By Sibony's Lemma, see \cite[Theorem 5.1]{Verbitsky}, such $\omega$ would be locally integrable around $p$. Therefore, applying Skoda-El Mir Theorem, the trivial extension of $\omega$ to the original K3 surface is a closed, positive and hence exact current. This is impossible, because the K3 surface is closed and K\"ahler. 
\end{remark}

\noindent
In the following we will assume that $(M,\cM)$ does admit chiral triples. Given a chiral triple $\mathfrak{Q}$ we denote by $\Map(\mathfrak{Q})\subset C^{\infty}(M,\cM)$ the set of all smooth maps from $M$ to $\cM$ satisfying condition \eqref{eq:pullback} for $\mathfrak{Q}$. Note that if $\varphi\in\Map(\mathfrak{Q})$ then all maps in the same homotopy class as $\varphi$ also belong to $\Map(\mathfrak{Q})$. We will refer to the elements of $\Map(\mathfrak{Q})$ as \emph{scalar maps}. The differential of a scalar map is a real vector bundle map:
\begin{equation*}
\dd\varphi\colon TM \to T\cM\, ,
\end{equation*}

\noindent
from $TM$ to the \emph{real} tangent bundle $T\cM$ of $\cM$. We can consider $\dd\varphi$ as a map from $TM$ to the complexification $T_{\mathbb{C}}\cM = T\cM\otimes \mathbb{C}$ of $T\cM$ by composition with the canonical injection $T\cM\hookrightarrow T_{\mathbb{C}}\cM$. Splitting:
\begin{equation*}
T_{\mathbb{C}}\cM = T^{1,0}\cM \oplus T^{0,1}\cM\, ,
\end{equation*} 

\noindent
in the usual way by using $\cI$, we define:
\begin{equation*}
\dd\varphi^{1,0}\colon TM \to T^{1,0}\cM\, , \qquad \dd\varphi^{0,1}\colon TM \to T^{0,1}\cM\, , 
\end{equation*}

\noindent
respectively by projection onto the factors $T^{1,0}\cM$ and $T^{0,1}\cM$. As we will see later, these projections will be necessary for the global formulation of the Killing spinor equations of chiral supergravity. A choice of chiral triple will unambiguously define a bosonic sector of $\cN=1$ chiral supergravity and its associated Killing spinor equations. Let $\norm{\cdot}_{\cH}$ denote the norm induced by $\cH$ on $\cL$. The Hermitian metrics $\cG$ and $\cH$ induce a Hermitian metric on the vector bundle $T^{\ast}_{\mathbb{C}}\cM \otimes \cL$, whose norm we denote by $\norm{\cdot}_{\cH,\cG}$. 

\begin{definition}
The {\em scalar potential} associated to the chiral triple $\mathfrak{Q}$ is the smooth real-valued function $\Phi_k \in C^\infty(\cM)$ defined through:
\begin{equation*}
\label{ScalarPotential}
\Phi_k \eqdef \norm{\cD \cW}^2_{\cH,\cG}-k \norm{\cW}^2_{\cH}\, ,
\end{equation*}

\noindent
for a real constant $k > 0$, where $\cD$ denotes the Chern connection of $\cL_{\cH}$.
\end{definition}

\noindent
Given a scalar map $\varphi\colon M\to \cM$, we denote by $\Phi^{\varphi}_k \eqdef \Phi_k\circ \varphi$ the pull-back of $\Phi_k$.


\subsection{Bosonic sector and Killing spinor equations}


We are ready to introduce the equations of motion and Killing spinor equations defining the bosonic sector of chiral supergravity associated to a particular chiral triple $\mathfrak{Q}$. The equations of motion follow from a variational principle involving a Lorentzian metric $g$ and a scalar map $\varphi\colon M\to \cM$.

\begin{definition}
The \emph{configuration pre-sheaf} $\mathrm{Conf}_{\mathfrak{Q}}$ of $\cN=1$ chiral supergravity associated to $\mathfrak{Q}$ is defined as the pre-sheaf of sets that assigns to every open set $U\subset M$ the following set:
	
\begin{equation*}
\mathrm{Conf}_{\mathfrak{Q}}(U) \eqdef \left\{ (g,\varphi) \,\, |\,\, g\in \mathrm{L}_0(U)\, , \,\, \varphi \in \Map_U(\mathfrak{Q})\right\}\, ,
\end{equation*}
	
\noindent
where $\mathrm{L}_0(U)$ denotes the space of Lorentzian metrics on $U\subset M$ such that $w^{-}_1(U,g) = 0$ and $\Map_U(\mathfrak{Q})$ denotes the set of smooth maps from $U$ to $\cM$ that satisfy condition \eqref{eq:pullback}.
\end{definition}

\noindent
Given a scalar map $\varphi\in \Map(\mathfrak{Q})$, we denote by $T\cM^{\varphi}$ the pull-back of $T\cM$ by $\varphi$. We equip $T\cM^{\varphi}$ with the bundle pull-back metric $\cG^{\varphi}$ of $\cG$. Likewise, we respectively denote by $\cW^{\varphi}$, $\cH^{\varphi}$ and $\cD^{\varphi}$ the bundle pull-backs of $\cW$, $\cH$ and $\cD$ by $\varphi$. In general, for any section $s$ of a bundle over $\cM$ we denote by $s^{\varphi} \eqdef s\circ \varphi$ the corresponding section of the pull-back bundle. We denote by $\norm{-}_{g,\cG}$ the norm induced by $g$ and $\cG^{\varphi}$ on $TM\otimes T\cM^{\varphi}$ and associated tensor powers. Inspired by the local formulation of standard chiral $\cN=1$ supergravity \cite{FreedmanProeyen,Ortin} we introduce the following definition. 

\begin{definition}
\label{def:BosSugraDef}
The Lagrange density $\mathfrak{Lag}\colon \Conf_{\mathfrak{Q}}(U)\to \Omega^4(U)$ of chiral $\cN=1$ ungauged supergravity associated to the scalar manifold $(\cM,\mathfrak{Q})$ is given by:
\begin{equation}
\label{eq:S}
\mathfrak{Lag}[g,\varphi]=\left[ \mathrm{R}_{g} - \norm{\dd \varphi}^2_{g,\cG} - \Phi^{\varphi}_k\right] \vol_g\, , \qquad (g,\varphi)\in \Conf_{\mathfrak{Q}}(U)\, ,
\end{equation}

\noindent
for every open subset $U$ of $M$, where $\vol_g$ is the Lorentzian volume form of $(M,g)$ and $\mathrm{R}_{g}$ is the scalar curvature of $g$.
\end{definition}

\noindent
The following Proposition follows from direct computation by using standard theory of variations, so we leave its proof to the reader.

\begin{prop}
The Euler-Lagrange equations associated to the Lagrangian \eqref{eq:S} is given by:
\begin{itemize}
\item The \emph{Einstein equations}:	
\begin{equation}
\label{eq:Einsteineqs}
\cE_{E}(g,\varphi) \eqdef \G(g) - \mathrm{T}(g,\varphi) = 0\, ,
\end{equation}
where the energy-momentum tensor $\mathrm{T}(g,\varphi)\in\Gamma(M,\odot^{2}T^{\ast}M)$ given by:
\begin{equation*}
\mathrm{T}(g,\varphi)  = \varphi^{\ast} \cG - \frac{1}{2} \left( \norm{\dd \varphi}^2_{g,\cG} + \, \Phi^{\varphi}_k \right) g\, .
\end{equation*} 
	
\noindent
and $\G(g) = \mathrm{Ric}^g - \frac{\mathrm{R}^g}{2} g$ is the Einstein tensor.

\

\item The \emph{scalar equations}:
\begin{equation}
\label{eq:Scalareqs}
\cE_{S}(g,\varphi) \eqdef \Tr_g \nabla \dd \varphi - \frac{1}{2}(\mathrm{grad}_{\cG} \Phi_k)^{\varphi}= 0\, ,
\end{equation}

\noindent
where $\nabla$ denotes the connection induced by the Levi Civita connections on $(M,g)$ and $(\cM,\cG)$.
\end{itemize}

\

\noindent
for pairs $(g,\varphi)\in\Conf_{\mathfrak{Q}}(M)$.
\end{prop}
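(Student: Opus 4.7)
The plan is to derive the two Euler--Lagrange equations by computing separately the first variation of $\mathfrak{Lag}[g,\varphi]$ with respect to the Lorentzian metric $g$ and with respect to the scalar map $\varphi$. A preliminary point to address is the constraint $\widetilde{\L}\simeq \cL^{\varphi}$ built into $\Conf_{\mathfrak{Q}}(M)$: since the isomorphism class of $\cL^{\varphi}$ depends only on the homotopy class of $\varphi$ (via $c_1$) and $\widetilde{\L}$ is determined by the fixed $\widetilde{\Gl}^c_{+}$ structure inside $\mathfrak{Q}$, infinitesimal variations $\delta g$ and $\delta\varphi\in \Gamma(T\cM^{\varphi})$ of compact support are unconstrained and the usual variational calculus applies.

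For the metric variation I would split $\mathfrak{Lag}$ into the three summands $\mathrm{R}_g\vol_g$, $-\norm{\dd\varphi}^2_{g,\cG}\vol_g$, and $-\Phi^{\varphi}_k\vol_g$, and treat each separately. The first is the Einstein--Hilbert term; combining $\delta\vol_g=\tfrac{1}{2}g^{\mu\nu}\delta g_{\mu\nu}\vol_g$ with the Palatini identity produces the Einstein tensor $\G(g)$, modulo a divergence which integrates to zero against compactly supported $\delta g$. The second term is $-g^{\mu\nu}(\varphi^{\ast}\cG)_{\mu\nu}\vol_g$, whose metric variation is immediate and yields the contribution $\varphi^{\ast}\cG-\tfrac{1}{2}\norm{\dd\varphi}^2_{g,\cG}\, g$. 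The third term depends on $g$ only through $\vol_g$ and contributes $-\tfrac{1}{2}\Phi_k^{\varphi}\, g$. Assembling the three pieces produces precisely $\G(g)-\mathrm{T}(g,\varphi)$, so requiring the integral of the variation to vanish for all $\delta g$ yields \eqref{eq:Einsteineqs}.

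For the variation with respect to $\varphi$ only the kinetic and potential terms contribute. The kinetic term is the standard harmonic-map energy density with domain $(M,g)$ and target $(\cM,\cG)$, and the connection $\nabla$ induced by the two Levi--Civita connections is exactly what is needed to integrate by parts in a coordinate-free fashion; the resulting Euler--Lagrange contribution is, up to an overall factor, the tension field $\Tr_g\nabla\dd\varphi$ paired against $\delta\varphi$ via $\cG^{\varphi}$. For the potential term, the chain rule gives $\delta\Phi_k^{\varphi}=\cG^{\varphi}\!\bigl((\mathrm{grad}_{\cG}\Phi_k)^{\varphi},\delta\varphi\bigr)$, so its contribution is $(\mathrm{grad}_{\cG}\Phi_k)^{\varphi}$. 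Collecting terms and normalising the overall factor of $2$ produces exactly \eqref{eq:Scalareqs}.

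The computation is routine and presents no genuine obstacle beyond bookkeeping; what requires care is only the correct handling of signs, the pull-back conventions for $\cG^{\varphi}$ and $\cD^{\varphi}$ entering $\Phi_k^{\varphi}$, and ensuring that all boundary terms arising from integration by parts are controlled against compactly supported test variations on each $U\subset M$. The derivation is the standard one for an Einstein--scalar field system with a Riemannian target, specialised here to the K\"ahler target $(\cM,\cG)$ with the superpotential-induced potential $\Phi_k$ characteristic of $\cN=1$ chiral supergravity.
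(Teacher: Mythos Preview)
Your proposal is correct and follows exactly the standard variational approach that the paper invokes: the paper itself omits the proof entirely, stating only that it ``follows from direct computation by using standard theory of variations'' and leaving it to the reader. Your outline supplies precisely those routine details.
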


\begin{remark}
We denote by $\Sol_{\mathfrak{Q}}\subset \Conf_{\mathfrak{Q}}$ the pre-sheaf of solutions of \eqref{eq:Einsteineqs} and \eqref{eq:Scalareqs}. 	
\end{remark}

\noindent
We proceed now to introduce the Killing spinor equations of chiral $\cN=1$ supergravity. First, we need the following result. 

\begin{prop}
\label{prop:Cwelldefined}
For every isomorphism $\Psi\colon \L_Q \xrightarrow{\simeq} \cL^{\varphi}$ of complex line bundles, there exists canonical isomorphisms of complex vector bundles:
\begin{equation*}
\mathfrak{T}_{\Psi}\colon \cL^{\varphi}\otimes S^{\pm}_{c} \xrightarrow{\simeq} S^{\pm}\, .
\end{equation*}  
\end{prop}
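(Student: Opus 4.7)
The plan is to reduce the statement to a fibrewise identity between the representations $\tau^{\pm}$ and $\tau^{\pm}_c$ of $\Spin^{c}(3,1)$ on $\Sigma^{\pm}_0$, combined with the determinant character $l([g,z])=z^{2}$ defining $\L_Q$. First, using the explicit form of the representations recalled in Section \ref{sec:CliffordAlgebras} and the relation $z = z^{2}\bar z$ valid for $z\in\U(1)$, I would verify the identity
\begin{equation*}
\tau^{\pm}([g,z])\,\xi \;=\; l([g,z])\,\tau^{\pm}_c([g,z])\,\xi\, , \qquad \forall\,[g,z]\in\Spin^{c}(3,1)\, , \ \xi\in\Sigma^{\pm}_0\, ,
\end{equation*}
and check its compatibility with the $\mathbb{Z}_{2}$-quotient defining $\Spin^{c}(3,1)$, since both $l$ and the product $\tau^{\pm}(\tau^{\pm}_c)^{-1}$ are invariant under $(g,z)\mapsto(-g,-z)$. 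Equivalently, the linear map $m_{0}\colon \mathbb{C}\otimes_{\mathbb{C}}\Sigma^{\pm}_0\to\Sigma^{\pm}_0$, $w\otimes\xi\mapsto w\xi$, intertwines the representation $l\otimes\tau^{\pm}_c$ on the source with $\tau^{\pm}$ on the target; this is immediate since complex scalars commute with $\gamma^{\pm}(g)$.

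Second, I would apply the associated-bundle construction to the $\Spin^{c}_0(3,1)$-structure $Q$ determined by $\P_{\widetilde{\Gl}^c_{+}}$ via Proposition \ref{prop:GLSpin}. Using the presentations $\L_Q=Q\times_{l}\mathbb{C}$, $S^{\pm}_c=Q\times_{\tau^{\pm}_c}\Sigma^{\pm}_0$ and $S^{\pm}=Q\times_{\tau^{\pm}}\Sigma^{\pm}_0$, together with the canonical identification $\L_Q\otimes S^{\pm}_c\simeq Q\times_{l\otimes\tau^{\pm}_c}(\mathbb{C}\otimes\Sigma^{\pm}_0)$, the intertwiner $m_{0}$ globalises to a canonical isomorphism of complex vector bundles
\begin{equation*}
\mathfrak{m}\colon \L_Q\otimes S^{\pm}_c\xrightarrow{\ \simeq\ }S^{\pm}\, ,\qquad [q,w]\otimes[q,\xi]\longmapsto [q,w\xi]\, .
\end{equation*}

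Finally, I would define $\mathfrak{T}_{\Psi}\eqdef\mathfrak{m}\circ(\Psi^{-1}\otimes\mathrm{Id}_{S^{\pm}_c})$, which is an isomorphism of complex vector bundles because both factors are, and which is canonical once $\Psi$ is fixed. I do not expect a serious obstacle: the only genuine verification is the equivariance of $m_{0}$ used in the previous step, and the construction is essentially a rewriting of the canonical identification $S^{\pm}\simeq\L_Q\otimes S^{\pm}_c$ in a form that absorbs the chosen trivialisation $\Psi\colon \L_Q\xrightarrow{\simeq}\cL^{\varphi}$.
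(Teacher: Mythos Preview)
Your proposal is correct and is essentially the paper's own argument: the paper defines $\mathfrak{T}_{\Psi}$ by $[q,z]\otimes[q,\xi]\mapsto[q,z\xi]$ (after identifying $\cL^{\varphi}$ with $\L_Q=Q\times_l\mathbb{C}$ via $\Psi$) and checks well-definedness by a direct computation with representatives, which is exactly the equivariance of your intertwiner $m_0$ written out concretely. Your presentation is slightly more structured in separating the fibrewise identity $\tau^{\pm}=l\cdot\tau^{\pm}_c$ from the associated-bundle globalisation, but the content is the same.
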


\begin{proof}
Using the identification $\Psi\colon \L_Q \xrightarrow{\simeq} \cL^{\varphi}$, such canonical isomorphism is given by:
\begin{equation*}
\mathfrak{T}_{\Psi}\colon\cL^{\varphi}\otimes S^{\pm}_{c}\xrightarrow{\simeq} S^{\pm}\, , [q,z]\otimes [q,\xi]\mapsto [q,z\,\xi]\, ,
\end{equation*}
	
\noindent
for all $[q,\xi]\in S^{\pm}_c = Q\times_{\tau^{\pm}_c}\Sigma^{\pm}_0$ and $[q,z]\in \cL^{\varphi} = Q\times_{l}\C$. This is well-defined since:
\begin{equation*}
[q\, [g,w], (w^{-2} z)\, (w\, g^{-1})\,\xi ] = [q\, [g,w], w^{-1} g^{-1} \, z\,\xi ] = [q,z\, \xi]\, ,
\end{equation*}
	
\noindent
where we have used that any representative of the class $[q,z]\in \cL^{\varphi}$ is of the form $(q [g,w], w^{-2}\, z)$ and any representative of the class $[q,\xi]\in S^{\pm}_c$ is of the form $(q [g,w], \tau^{\pm}_c([g,w]^{-1})\,\xi) = (q [g,w], w g^{-1}\,\xi)$ for an element $[g,w]\in \Spin^c_0(3,1)$.
\end{proof}

\noindent
Let $\varphi\in \Map(\mathfrak{Q})$ be a scalar map. For every superpotential $\cW$, scalar map $\varphi$, Chern connection $\cD$ and isomorphism $\Psi\colon \cL^{\varphi}\xrightarrow{\simeq} \L_Q$ we define the morphisms of real vector bundles:
\begin{equation*}
\mathfrak{C}^{\Psi}_{\cW,\varphi}\colon S^{\pm}\to T^{\ast}M\otimes S^{\pm}\, , \quad \mathfrak{C}^{\Psi,\cD}_{\cW,\varphi}\colon S^{\pm}\to \Lambda^{1,0} T^{\ast}\cM^{\varphi}\otimes S^{\mp}\, ,
\end{equation*}

\noindent
as follows:
\begin{equation*}
\mathfrak{C}^{\Psi}_{\cW,\varphi}(\epsilon)(v) \eqdef v\,\cdot\,\mathfrak{T}_{\Psi}\left(\cW^{\varphi} \otimes \mathfrak{c}(\epsilon)\right)\, , \qquad \mathfrak{C}^{\Psi,\cD}_{\cW,\varphi}(\epsilon) =  \mathfrak{T}_{\Psi} ( (\cD \cW)^{\varphi}\otimes \mathfrak{c}(\epsilon))\, , \quad \forall\, v\in \mathfrak{X}(M)\, ,\,\,\forall\, \epsilon\in \Gamma(S^{\pm}) 
\end{equation*}

\noindent
where in the definition of $\mathfrak{C}^{\Psi,\cD}_{\cW,\varphi}$ we have trivially extended $\mathfrak{T}_{\Psi}$ to sections of $\cL^{\varphi}\otimes S^{\pm}_{c}$ taking values on $\Lambda^{1,0} T^{\ast}\cM^{\varphi}$. Here the dot denotes Clifford multiplication $TM\otimes S\to S$.

\begin{remark}
Recall that the superscript $\varphi$ in $(\cD\cW)^{\varphi}$ denotes pull-back of $\cD\cW$ as a section of an abstract vector bundle instead of a one-form taking values on a complex line bundle. Hence, $(\cD^{1,0} \cW)^{\varphi}\in \Gamma(\Lambda^{1,0}T^{\ast}\cM^{\varphi} \otimes \cL^{\varphi})$, where $\Lambda^{1,0}T^{\ast}\cM^{\varphi}$ denotes the vector bundle over $M$ obtained from $\Lambda^{1,0}T^{\ast}\cM$ via bundle pull-back by $\varphi$.
\end{remark}

\begin{remark}
\label{remark:localiso}
For future reference, it is convenient to explicitly write the local form of the complex isomorphism $\mathfrak{T}_{\Psi}$ defined above. Let $E\colon U\to Q$ denote a local section of $Q$, with $U\subset M$ open. The local section $E$ canonically defines local frames  $u$, $\left\{ e_a\right\}_{a=1,2}$ and $\left\{ e^c_a\right\}_{a=1,2}$ of $\cP_{\cH}$, $S^{+}$ and $S^{-}_c$, respectively. A similar discussion applies if we consider instead $S^{-}$ and $S^{+}_c$. Furthermore, it can be seen that:
\begin{equation*}
\mathfrak{c}(e_a^c) = e_a\, , \qquad a = 1, 2\, .
\end{equation*}

\noindent
Let $l^{\varphi}$ denote the pull-back by $\varphi$ of a local holomorphic frame $l$ of $\cL$. Define $H^{\varphi}_l \eqdef \cH^{\varphi}(l^{\varphi},l^{\varphi})$. The local frame $E\colon U\to Q$ can be chosen such that:   
\begin{equation*}
u^{\varphi} = (H^{\varphi}_l)^{-\frac{1}{2}}\, l^{\varphi}\, ,
\end{equation*}

\noindent
where we are considering $u$ as a unitary section of $(\cL,\cH)$. The isomorphism of complex vector bundles $\mathfrak{T}_{\Psi}$ constructed in Proposition \ref{prop:Cwelldefined} can be evaluated at the homogeneous element $l^{\varphi}\otimes e^c_a$, yielding:
\begin{equation*}
\mathfrak{T}_{\Psi}(l^{\varphi}\otimes e^c_a) = (H^{\varphi}_l)^{\frac{1}{2}}\, \mathfrak{T}_{\Psi}(u^{\varphi}\otimes e^c_a) = (H^{\varphi}_l)^{\frac{1}{2}}\, e_a\, .
\end{equation*}

\noindent 
Any section $\eta \in \Gamma(S^{-}_c)$ can be locally written as:
\begin{equation*}
\eta = \eta^a\, e^c_a\, ,
\end{equation*}

\noindent
for some local complex valued smooth functions $\eta^a$. Extending $\mathfrak{T}_{\Psi}$ complex linearly we obtain:
\begin{equation*}
\mathfrak{T}_{\Psi}(l^{\varphi}\otimes \eta) = (H^{\varphi}_l)^{\frac{1}{2}}\,\mathfrak{T}_{\Psi}(\eta^a\, u^{\varphi}\otimes e^c_a) = (H^{\varphi}_l)^{\frac{1}{2}}\,\eta^a\, e_a\, ,
\end{equation*}

\noindent
which gives the local expression of $\mathfrak{T}_{\Psi}$. Let now $\epsilon \in \Gamma(S^{+})$ be a section of $S^{+}$, which we write as $\epsilon = \epsilon^a e_a$. We have:
\begin{equation*}
\mathfrak{T}_{\Psi}(l^{\varphi}\otimes \mathfrak{c}(\epsilon)) = (H^{\varphi}_l)^{\frac{1}{2}}\,\mathfrak{T}_{\Psi}(u^{\varphi}\otimes  \mathfrak{c}(\epsilon^a e_a)) = (H^{\varphi}_l)^{\frac{1}{2}}\,\bar{\epsilon}^a\, e_a\, ,
\end{equation*}

\noindent
This expression will be used in the proof of Proposition \ref{prop:localequivsugra}.
\end{remark}

\

\noindent
Given a chiral triple $\mathfrak{Q}$, a choice of isomorphism $\Psi\colon \L_Q \xrightarrow{\simeq} \cL^{\varphi}$ and a Lorentzian metric $g$ we have a canonical choice $\nabla^{g}_{\Psi^{\ast}\cA^{\varphi}}$ of connection on the spinor bundle $S$, which is constructed as follows. The choice of isomorphism $\Psi\colon \L_Q\to \cL^{\varphi}$ of complex line bundles induces a canonical isomorphism of principal $\U(1)$ bundles (which we denote by the same symbol) $\Psi\colon \P_Q\to \cP^{\varphi}_{\cH}$. We use this isomorphism to take the pull-back of $\cA^\varphi$ and define a connection $\Psi^{\ast}\cA^{\varphi}$ on $\P_Q$. The connection $\Psi^{\ast}\cA^{\varphi}$ on $\P_Q$ together with the Levi-Civita connection of $(M,g)$ yield, in the usual way through tensor product and lifting, a unique associated connection $\nabla^{g}_{\Psi^{\ast}\cA^{\varphi}}$ on $S$. For ease of notation we denote $\nabla^{g}_{\Psi^{\ast}\cA^{\varphi}}$ simply by $\nabla^\varphi$. Recall that we defined $\cA$ as the unique unitary connection on $\cP_{\cH}$ to which the Chern connection $\cD$ on $\cL_{\cH}$ is associated in the standard way, and that the superscript $\varphi$ denotes pull-back by $\varphi$.

\begin{definition}
We define the \emph{extended configuration} pre-sheaf $\Conf_{\mathfrak{Q},E}$ as the pre-sheaf of sets which assigns, to every open set $U\subset M$, the following set:
\begin{equation*}
\Conf_{\mathfrak{Q},E} \eqdef \left\{ (g,\varphi,\epsilon, \Psi)\in \Conf_{\mathfrak{Q}}(U)\times \Gamma(S^{-}\vert_{U})\times \mathrm{Iso}(\mathrm{L}_{Q}|_U,\cL^{\varphi}|_U)\right\}\, ,
\end{equation*}
	
\noindent
where $\Psi\in \mathrm{Iso}(\mathrm{L}_{Q}|_U,\cL^{\varphi}|_U)$ is a $C^{\infty}$-isomorphism of complex line bundles over $U$. 
\end{definition}

\begin{definition}
\label{def:KSE}
Let $\mathfrak{Q}$ be a chiral triple on $(M,\cM)$. The \emph{Killing spinor equations} (KSE) associated to $\mathfrak{Q}$ are defined as follows:
\begin{equation}
\label{eq:KSE}
\nabla^\varphi \epsilon = \mathfrak{C}^{\Psi}_{\cW,\varphi}(\epsilon)\, , \qquad (\dd \varphi^{0,1})^{\flat}\cdot \epsilon = \mathfrak{C}^{\Psi,\cD}_{\cW,\varphi}(\epsilon)\, ,
\end{equation}
	
\noindent
for tuples $(g,\varphi,\epsilon,\Psi)\in \Conf_{\mathfrak{Q},E}$. Here the dot denotes Clifford multiplication $T^{\ast}M\otimes S\to S$ and $\flat$ the musical isomorphism $ T^{0,1}\cM^{\varphi} \simeq \Lambda^{1,0} T\cM^{\varphi}$ associated to the Hermitian metric $\cG^{\varphi}$. A spinor $\epsilon \in \Gamma(S^{-})$ satisfying Equations \eqref{eq:KSE} is called a \emph{supersymmetry spinor} or \emph{supersymmetry generator}.
\end{definition}

\begin{remark}
In the definition of Killing spinor equations we have allowed only for supersymmetry spinors $\epsilon\in \Gamma(S^{-})$ of negative chirality. In principle, either choice of chirality is allowed. The right choice exclusively depends on the conventions used: once the conventions have been fixed only one of the two chiralities, in our conventions negative chirality, will be compatible with the supersymmetric structure of the theory. This is one of the reasons why this supergravity theory is called \emph{chiral}.
\end{remark}

\begin{definition}
We define the pre-sheaf of \emph{supersymmetric configurations} as the pre-sheaf of sets $\Conf_{\mathfrak{Q},S}$ which to every open subset $U\subset M$ assigns the following set:
\begin{equation*}
\Conf_{\mathfrak{Q},S}(U) \eqdef \left\{ (g,\varphi,\epsilon, \Psi)\in \Conf_{\mathfrak{Q},E}(U) \quad \mid \quad \mathrm{KSE\,\, are\,\, satisfied}\right\}
\end{equation*}

\noindent
We say that $(g,\varphi)\in \Sol_{\mathfrak{Q}}(M)$ is a \emph{supersymmetric solution} if there exists at least one spinor $\epsilon\in\Gamma(S^{-})$ and isomorphism $\Psi\in \mathrm{Iso}(\mathrm{L}_{Q},\cL^{\varphi})$ such that $(g,\varphi,\epsilon,\Psi)\in \Conf_{\mathfrak{Q},S}(M)$. We denote the pre-sheaf of supersymmetric solutions by $\Sol_{\mathfrak{Q},S}$.
\end{definition}

\begin{remark}
Notice that we have not used the condition $w_1^{-}(M,g) = 0$. Indeed, this is not required to formulate neither the equations of motion nor the Killing spinor equations. However, it is crucial in order to construct the full supersymmetric Lagrangian, since it requires the use the sesquilinear pairing $\cB$ introduced in Equation \eqref{eq:bilinears}, which is only invariant under $\Spin^c_0(3,1)$, compare Proposition \ref{prop:Spinc0}. 
\end{remark}

\noindent
The following Proposition shows that chiral $\cN=1$ supergravity and its associated Killing spinor equations, as introduced in Definitions \ref{def:BosSugraDef} and \ref{def:KSE}, reproduce the well-known local formulation of four-dimensional ungauged $\cN=1$ supergravity coupled to chiral multiplets, which the reader can find explained in detail in References \cite{Ortin,FreedmanProeyen}. This result summarizes the underlying motivation for the structures and definitions introduced so far, showing in addition that we have constructed an allowed global extension of the local formulas defining ungauged $\cN=1$ chiral supergravity in four Lorentzian dimensions.

\begin{prop}
\label{prop:localequivsugra}
Let $\mathfrak{Q}$ be a chiral triple on $(M,\cM)$, $(U,x^{\mu})$ a local coordinate chart on $M$ and $(V, w^i)$ a holomorphic coordinate chart on $\cM$. The restriction of the bosonic Lagrangian
\begin{equation*}
\mathfrak{Lag}\colon \Conf_{\mathfrak{Q}}(U) \to C^{\infty}(U)
\end{equation*}

\noindent
and Killing spinor equations to the chart $U$ are respectively given by:
\begin{equation*}
\mathfrak{Lag}[g,\varphi]\vert_U =\left[ \mathrm{R}_{g} - 2\,\cG_{i j}(z,\bar{z})\, \partial_{\mu} z^i \partial^{\mu} \bar{z}^{j} -\Phi_k(z,\bar{z})\right] \vol_g\, ,
\end{equation*}

\noindent
and:
\begin{equation*}
\nabla^\varphi_{\mu} \epsilon = e^{\mathcal{K}(z,\bar{z})/2} W(z)\gamma_{\mu} \bar{\epsilon}\, , \qquad \cG_{i k} ( \partial_{\mu} \bar{z}^{k}) \gamma^{\mu} \epsilon = e^{\mathcal{K}(z,\bar{z})/2} \cD_i W(z) \bar{\epsilon} \, ,
\end{equation*}

\noindent
where $\left\{ z^i \right\}$ denotes the local coordinate expression of $\varphi$, $\cG_{i j} = \cG(\frac{\partial}{\partial w^i}, \frac{\partial}{\partial \bar{w}^{j}}) =  \partial_{w^i} \partial_{\bar{w}^{j}} \mathcal{K}$ denotes the restriction of $\cG$ to $U$, and $\mathcal{K}$ is an associated K\"ahler potential. Here we assume $\varphi(U)\subset V$. Hence, Definition \ref{def:BosSugraDef} and Definition \ref{def:KSE} locally reproduce the standard local bosonic sector and Killing spinor equations of $\cN=1$ chiral supergravity.
\end{prop}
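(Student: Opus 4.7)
The plan is to evaluate every intrinsic object appearing in Definition~\ref{def:BosSugraDef} and Definition~\ref{def:KSE} in a pair of carefully chosen local frames and match the resulting coordinate expressions term by term with the formulas on the right-hand sides. I would first choose the local section $E\colon U \to Q$ provided by Remark~\ref{remark:localiso} together with a local holomorphic frame $l$ of $\cL$ over $V$, so that the associated unitary section satisfies $u^\varphi = (H_l^\varphi)^{-1/2}\, l^\varphi$, where $H_l \eqdef \cH(l,l)$. Using Definition~\ref{def:positivecL} and the identification \eqref{eq:KahlerChern}, the Chern curvature of $\cL_\cH$ is locally $-\partial\bpd \log H_l$, so that, modulo the K\"ahler-gauge freedom inherent to the choice of $l$, a local K\"ahler potential is given by $\mathcal{K} = \log H_l$ and the K\"ahler metric reads $\cG_{ij} = \partial_{w^i}\partial_{\bar w^j}\mathcal{K}$. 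With this identification, the kinetic term $\|\dd\varphi\|_{g,\cG}^2$ expanded along $T_\C \cM = T^{1,0}\cM \oplus T^{0,1}\cM$ in the chart $(w^i)$ immediately yields $2\,\cG_{ij}\,\pd_\mu z^i \pd^\mu \bar z^j$.

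Next I would compute $\Phi^\varphi_k$ by writing $\cW = W\cdot l$ for a local holomorphic function $W$, so that $\cD\cW = \cD_i W\, dw^i \otimes l$ with $\cD_i W = \pd_{w^i} W + (\pd_{w^i}\mathcal{K})\, W$. The Hermitian structure on $\Omega^{1,0}\cM \otimes \cL$ induced by $\cG$ and $\cH$ then gives
\[
\|\cW\|_\cH^2 = e^{\mathcal{K}}\,|W|^2, \qquad \|\cD\cW\|_{\cH,\cG}^2 = e^{\mathcal{K}}\,\cG^{ij}\,\cD_i W\,\overline{\cD_j W},
\]
so that the pull-back $\Phi_k^\varphi = e^{\mathcal{K}}(\cG^{ij}\cD_i W\,\overline{\cD_j W} - k\,|W|^2)$ reproduces the standard F-term scalar potential of $\cN=1$ supergravity. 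Substituting these two expressions into \eqref{eq:S} completes the local verification of the bosonic Lagrangian.

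For the Killing spinor equations, I would apply the local formula for $\mathfrak{T}_\Psi$ recorded in Remark~\ref{remark:localiso}, namely
\[
\mathfrak{T}_\Psi(l^\varphi \otimes \mathfrak{c}(\epsilon)) = (H_l^\varphi)^{1/2}\,\bar\epsilon^a\, e_a = e^{\mathcal{K}/2}\,\bar\epsilon^a\, e_a,
\]
to evaluate the two Clifford bundle morphisms of Definition~\ref{def:KSE}. Acting on $\epsilon = \epsilon^a e_a \in \Gamma(S^{-})$ one obtains
\[
\mathfrak{C}^{\Psi}_{\cW,\varphi}(\epsilon)(\pd_\mu) = e^{\mathcal{K}/2}\,W\,\gamma_\mu\,\bar\epsilon,
\]
so the first KSE reduces to $\nabla^\varphi_\mu \epsilon = e^{\mathcal{K}/2} W\,\gamma_\mu\,\bar\epsilon$. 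For the second KSE I would expand $\dd\varphi^{0,1} = \pd_\mu \bar z^k\, dx^\mu \otimes \pd_{\bar w^k}$ and apply the musical isomorphism induced by $\cG^\varphi$, obtaining $(\pd_{\bar w^k})^\flat = \cG_{ik}\, dw^i$. Clifford-multiplying $(\dd\varphi^{0,1})^\flat = \cG_{ik}\,\pd_\mu \bar z^k\, dx^\mu \otimes dw^i$ on $\epsilon$ gives the coefficient of $dw^i$ equal to $\cG_{ik}(\pd_\mu \bar z^k)\,\gamma^\mu \epsilon$, while the coefficient of $dw^i$ on the right-hand side $\mathfrak{C}^{\Psi,\cD}_{\cW,\varphi}(\epsilon) = e^{\mathcal{K}/2}\,\cD_i W\,\bar\epsilon^a\, dw^i \otimes e_a$ is $e^{\mathcal{K}/2}\,\cD_i W\,\bar\epsilon$; matching these two produces the second KSE as stated.

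The main difficulty is bookkeeping rather than conceptual: pinning down the precise sign and normalization conventions relating $\mathcal{K}$ to $\log H_l$, tracking the Kähler-gauge dependence of the local frame $l$, and checking that the musical isomorphism on $T^{0,1}\cM^\varphi$ is the one that intertwines $(\dd\varphi^{0,1})^\flat$ with the form $\cD W$ in such a way that $\cD_i W$ (and not, e.g., $\overline{\cD_i W}$) appears on the right-hand side. Once the identification $H_l = e^{\mathcal{K}}$ and the local form of $\mathfrak{T}_\Psi$ from Remark~\ref{remark:localiso} are in place, the rest of the argument is a direct index-by-index translation between the intrinsic geometric operators of Definitions~\ref{def:BosSugraDef}--\ref{def:KSE} and the standard coordinate formulas of \cite{Ortin,FreedmanProeyen}.
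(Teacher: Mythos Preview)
Your sketch is correct and follows essentially the same route as the paper's proof: choose a local section of $Q$ and a holomorphic frame $l$ of $\cL$, identify $H_l=e^{\mathcal K}$ via the Chern curvature, expand $\|\dd\varphi\|^2_{g,\cG}$ and $\Phi_k$ in coordinates, and then use the local formula for $\mathfrak T_\Psi$ from Remark~\ref{remark:localiso} to unwind $\mathfrak C^{\Psi}_{\cW,\varphi}$ and $\mathfrak C^{\Psi,\cD}_{\cW,\varphi}$. The only step the paper includes that you leave implicit is the explicit local decomposition $\nabla^\varphi_\mu\epsilon=\nabla^g_\mu\epsilon+\tfrac12\cA^\varphi_\mu\epsilon$ with $\cA^\varphi_\mu=\tfrac12(\partial_\mu z^i\partial_i\mathcal K-\partial_\mu\bar z^i\bar\partial_i\mathcal K)$, which is needed to fully match the ``$\U(1)$-coupled covariant derivative'' of the physics literature; you may want to add this line to complete the identification with the standard formulation.
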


\begin{proof}
The complexified tangent bundle $T_{\mathbb{C}}\cM$ is locally the complex span of $\left\{\partial_{w^i}, \partial_{\bar{w}^i}\right\}$, whereas the tangent bundle $TM$ is locally the real span of $\left\{\partial_{x^\mu}\right\}$. We have:
\begin{equation*}
\dd\varphi(\partial_{\mu}) = \partial_{\mu} z^i\, \partial_{w^i} + \partial_{\mu} \bar{z}^i\, \partial_{\bar{w}^i}\, ,
\end{equation*}

\noindent
where for simplicity we denote by:
\begin{equation*}
z^i \eqdef w^i \circ \varphi\colon U \subset\R^4 \to \C\, ,
\end{equation*}

\noindent
the local components of $\varphi$. Hence:
\begin{equation*}
(\dd\varphi)^{1,0}(\partial_{\mu}) = \partial_{\mu} z^i\, \partial_{w^i}\, , \qquad (\dd\varphi)^{0,1}(\partial_{\mu}) = \partial_{\mu} \bar{z}^i\, \partial_{\bar{w}^i}\, .
\end{equation*}

\noindent
Explicit computation gives:
\begin{equation*}
\norm{\dd \varphi}^2_{g,\cG}|_{U} = 2\, g^{\mu\nu}\,\partial_{\mu} z^i \partial_{\nu}\bar{z}^{k} \,\cG_{i k}(z,\bar{z})\, ,
\end{equation*}

\noindent
which corresponds with the standard sigma model appearing in the local formulation of chiral supergravity. Let $ l\colon V\to \cL$ be a holomorphic local trivializing section of $\cL$. We set $K_{l}(w,\bar{w}) \eqdef \cH(l,l)\in C^{\infty}(V)$ and write $\cW\vert_V = W(w)\, l$ for some local holomorphic complex function $W(w)$ on $V$. We have:
\begin{equation*}
\left.\norm{\cW}^2_{\cH}\right|_V = K_{l}(w,\bar{w})\, |W(w)|^{2} 
\end{equation*}   

\noindent
Furthermore, we write $\cD l = \mathcal{Q}\otimes l$ for the Chern connection $\cD$, where $\mathcal{Q}\in\Omega^{1,0}(X)$ is a local $(1,0)$ form, which in local coordinates $\left\{ w^i\right\}$ is given by:
\begin{equation*}
\mathcal{Q} = \partial \log\, K_l(w,\bar{w}) = \partial_i \log\, K_l(w,\bar{w})\, \dd w^i\, .
\end{equation*}

\noindent
The associated Chern curvature is locally given by $\Theta(\cD) = - \partial \bar{\partial}\, \log\, K_l(w,\bar{w})$. By the definition of chiral triple, the two-form:
\begin{equation*}
\frac{i}{2\pi}\,\partial \bar{\partial}\, \log\, K_l(w,\bar{w})\, ,
\end{equation*}

\noindent
is the K\"ahler form associated to $\cG$ and $\cI$. Hence, we can write:
\begin{equation*}
\frac{i}{2\pi}\, \partial \bar{\partial}\, \log\, K_l(w,\bar{w}) = \frac{i}{2\pi}\,  \partial \bar{\partial}\, \mathcal{K}(w,\bar{w})\, ,
\end{equation*}
for some local K\"ahler potential $\mathcal{K}(w,\bar{w})\in C^{\infty}(V)$. Integrating, we can express $K_l(w,\bar{w})$ in terms of $\mathcal{K}(w,\bar{w})$ as follows:
\begin{equation*}
K_l(w,\bar{w}) = e^{\mathcal{K}(w,\bar{w})}\, ,
\end{equation*}

\noindent
uniquely up to the usual local K\"ahler transformations of $\mathcal{K}(w,\bar{w})$. With these provisos in mind we obtain:
\begin{equation*}
\cD\cW\vert_V = \cD_i\cW\, \dd w^i = (\partial_i + \partial_i\, \mathcal{K}(w,\bar{w})) W(w)\, dw^i\otimes l\, ,
\end{equation*}

\noindent
whence: 
\begin{equation*}
\Phi_k = e^{\mathcal{K}(w,\bar{w})} \left[\cG^{i j}(w,\bar{w}) \cD_i W(w) \bar{\cD}_{j}\bar{W}(\bar{w}) - k\, W(w)\bar{W}(\bar{w})\right]\, ,
\end{equation*}

\noindent
which, for $k=3$, corresponds with the standard potential of local chiral supergravity. Since by definition $\nabla^{\varphi}$ is constructed by lifting the product connection of the Levi-Civita connection with the pull-back by $\varphi$ of $\cA$, we can write locally:
\begin{equation*}
\nabla^{\varphi}_{\mu} \epsilon = \nabla^g_{\mu} \epsilon + \frac{1}{2} \cA^{\varphi}_{\mu}(z,\bar{z}) \epsilon\, ,
\end{equation*}

\noindent
where $\nabla^g_{\mu} $ is the local lift to the spinor bundle of the Levi-Civita connection associated to $g$ and:
\begin{equation*}
\cA^{\varphi}_{\mu} = \frac{1}{2} (\partial_{\mu} z^i \partial_i \mathcal{K}(z,\bar{z}) - \partial_{\mu} \bar{z}^{i} \bar{\partial}_{i} \mathcal{K}(z,\bar{z}))\, .
\end{equation*}

\noindent
Note that the formula above gives the local form of the Chern connection with respect to a unitary frame instead of a holomorphic frame. This gives the local form of the standard \emph{$\U(1)$-coupled} covariant derivative appearing in the supersymmetry transformation of the gravitino of chiral supergravity. Let now $E\colon U\to Q$ denote a local section of $Q$, with $U\subset M$ open subset. As explained in Remark \ref{remark:localiso}, the local section $E$ canonically defines local frames $u$, $\left\{ e_a\right\}$ and $\left\{ e^c_a\right\}$ of $S^{-}$ and $S^{+}_c$, respectively. We choose $E$ such that the following is satisfied:
\begin{equation*}
u = \frac{l}{\cH(l,l)^{\frac{1}{2}}}
\end{equation*}

\noindent
Let $l^{\varphi}\colon U \to \cL^{\varphi}$ be the pull-back by $\varphi$ of the local holomorphic frame $l$ over $V\subset \cM$. Upon the use of Proposition \ref{prop:Cwelldefined} and Remark \ref{remark:localiso} we obtain:
\begin{equation*}
\mathfrak{T}_{\Psi}(\cW^{\varphi} \otimes \mathfrak{c}(\epsilon))\vert_{U} = W(z)\, \mathfrak{T}_{\Psi}(l^{\varphi}\otimes \mathfrak{c}(\epsilon^a e_a)) = W(z)\, \bar{\epsilon}^a\, \mathfrak{T}_{\Psi}(l^{\varphi}\otimes e^c_a) = W(z)\, e^{\mathcal{K}(z,\bar{z})/2} \bar{\epsilon}^a\, e_a\, ,
\end{equation*}

\noindent
compare Remark \ref{remark:localiso}. Similarly, we have:
\begin{eqnarray*}
\mathfrak{T}_{\Psi}((\cD \cW)^{\varphi}\otimes \mathfrak{c}(\epsilon))\vert_{U} = \cD_i W(z)\, \mathfrak{T}_{\Psi}(dz^{i}\otimes l^{\varphi}\otimes \mathfrak{c}(\epsilon^a e_a)) = \bar{\epsilon}^a\,\cD_i W(z)\,dz^{i}\otimes\mathfrak{T}_{\Psi}(l^{\varphi}\otimes e^c_a)\\ = e^{\mathcal{K}(z,\bar{z})/2}\cD_i W(z)\, \bar{\epsilon}^a\, dz^{i}\otimes e_a\, .
\end{eqnarray*}

\noindent
Using now that:
\begin{equation*}
(\dd\varphi^{0,1})^{\flat}\cdot \epsilon\vert_{U} = \cG_{i k} ( \partial_{\mu} \bar{z}^{k}) \gamma^{\mu} \epsilon\, ,
\end{equation*}

\noindent
we conclude.
\end{proof}

\noindent


\subsection{Chiral supergravity on a spin four-manifold}
\label{sec:spinfourmanifold}


Under the assumption that the $\Spin^c_0(3,1)$ structure $Q$ considered so far can be reduced to $\Spin_0(3,1)$ structure $Q_0$, we will show how now to formulate the Killing spinor equations in terms of the Dirac spinor bundle $S_0$ associated to $Q_0$ and a holomorphic line bundle $\cR$ over $\cM$. In the formalism of the previous subsection, the latter corresponds to a square root of the line bundle $\cL$. As a consequence, the pull-back $\cR^{\varphi}$ is the square root of $\cL^{\varphi}$. The existence of such square root follows from the relation $0 = w_2(M) = c_1(\L_Q)\mod 2$.

Hence, let $Q_0$ be a $\Spin_0(3,1)$ structure on $(M,g)$ and let $S_0$ denote the associated Dirac complex spinor bundle, which splits as $S_0 = S^{+}_0 \oplus S^{-}_0$ in terms of the chiral spinor bundles $S^{\pm}_0$. We denote by $(-,-)_0$ the $\Spin_0(3,1)$-invariant Hermitian product on $S_0$. Contrary to what happened in the $\Spin^c_0(3,1)$ case, the spinor bundle $S_0$ admits now a real structure, that is, a antilinear and involutive isomorphism $\mathfrak{c}_0\colon S^{\pm}_0\to S^{\mp}_0$. In this set up, the notion of chiral triple can be simplified in a way which we proceed to discuss. For the benefit of the reader we recall that, using the notion introduced in Section \ref{sec:CliffordAlgebras}, $S^0$, $S_0^{+}$ and $S^{-}_0$ are associated with the $\Spin_0(3,1)$ modules $\Sigma$, $\Sigma^{0,1}$ and $\Sigma^{1,0}$, respectively. 

\begin{definition}
Let $\cM$ be a complex manifold of K\"ahler type. A {\em chiral triple} defined on $\cM$ is a triple $(\cR, \cH_{\frac{1}{2}},\cW)$, where $(\cR,\cH_{\frac{1}{2}})$ is a negative (in the sense of Definition \ref{def:positivecL}) Hermitian holomorphic line bundle, with Hermitian structure $\cH_{\frac{1}{2}}$, and $\cW$ is a holomorphic section of $\cL \eqdef \cR\otimes \cR$.
\end{definition}

\begin{remark}
Note that the previous definition does not impose any conditions on the admissible scalar maps $\varphi\colon M\to \cM$ and, in fact, it is independent of $M$ as long as the latter $M$ is spin.
\end{remark}

\noindent
We equip $\cL$ with the Hermitian form $\cH \eqdef \cH_{\frac{1}{2}}\otimes \cH_{\frac{1}{2}}$ induced by $\cH_{\frac{1}{2}}$. Given any scalar map $\varphi\colon M\to \cM$, we define the \emph{(chiral) supergravity spinor bundle associated to $\varphi$} as follows:
\begin{equation*}
S \eqdef S_0\otimes \cR^{\varphi}\, , \qquad S^{\pm} \eqdef S^{\pm}_0\otimes \cR^{\varphi}\, ,
\end{equation*}

\noindent
where $\cR^{\varphi}$ denotes the pull-back of $\cR$ by $\varphi$, which we endow with the pull-back Hermitian form $\cH^{\varphi}_{\frac{1}{2}}$. Likewise, we define:
\begin{equation*}
S_{c} \eqdef S_0\otimes (\cR^{\varphi})^{-1}\, , \qquad S^{\pm}_{c} \eqdef S^{\pm}_0\otimes (\cR^{\varphi})^{-1}\, ,
\end{equation*}

\noindent
Here we have used the symbols $S^{\pm}$ and $S^{\pm}_c$ to denote the spinor bundles that correspond to those denoted by same symbols in the general $\Spin^c_0(3,1)$ case. Using the Hermitian form $(-,-)_0$ present on $S_0$ together with the Hermitian form $\cH_{\frac{1}{2}}$ on $\cR$, we define the Hermitian form $(-,-) = (-,-)_0\otimes \cH^{\varphi}_{\frac{1}{2}}$ on $S$. In addition, we define an extension of $\mathfrak{c}_0\colon S^{\pm}_0\to S^{\mp}_{0}$ to the supergravity spinor bundles $S^{\pm}$ and $S^{\pm}_c$ as follows:
\begin{equation*}
\mathfrak{c}\colon S^{\pm}\to S^{\mp}_{c}\, , \qquad \xi_{0}\otimes l^{\varphi}\mapsto \mathfrak{c}_{0}(\xi_0)\otimes \cH^{\varphi}_{\frac{1}{2}}(-,l^{\varphi})\, ,
\end{equation*}

\noindent
acting on homogeneous sections of $S^{\pm} = S^{\pm}_0\otimes \cR^{\varphi}$. With these definitions, we reproduce the action of $\mathfrak{c}$ as defined in the $\Spin^c_0(3,1)$ case, with the advantage that we can now isolate how it acts on each of the factors appearing in the definition of $S^+$ and $S^-$. The Killing spinor equations are now given formally by the same expression as in the general $\Spin^c_0(3,1)$ case, that is:
\begin{equation*}
\nabla^\varphi \epsilon = \mathfrak{C}^{\Psi}_{\cW,\varphi}(\epsilon)\, , \qquad (\dd \varphi^{0,1})^{\flat}\cdot \epsilon = \mathfrak{C}^{\Psi,\cD}_{\cW,\varphi}(\epsilon)\, .
\end{equation*}

\noindent
Note that in this situation the connection $\nabla^{\varphi}$ is an honest tensor product connection on the complex spinor bundle $S_0$ tensorized with a complex a line bundle.

\begin{ep}
Let $(M,g)$ be a four-dimensional Lorentzian spin four manifold and let us take $\cM$ to be the complex projective line $\mathbb{P}^1$. We have that $T\mathbb{P}^1\simeq \cO(2)$ is positive with respect to the Hermitian structure induced by the standard Fubini-Study metric $\cH$ and thus $(T\mathbb{P}^1,\cH)$ satisfies the positivity notion given in Definition \ref{def:positivecL}. Therefore, its dual bundle, that is, the canonical bundle $ K_{\mathbb{P}^1}$ is negative and furthermore has Chern number $(-2)$, whence it admits a (unique) holomorphic square root $ K^{1/2}_{\mathbb{P}^1}$. A series of chiral triples parametrized by a natural number $n\in \mathbb{N}_{\ast}$ is then given by $((K^{1/2}_{\mathbb{P}^1})^n ,\cH^n_{\frac{1}{2}}, \cW = 0)$. We are forced to take $\cW$ to be the zero-section since a negative line bundle over a closed complex manifold is well-known to not admit any non-zero holomorphic sections. In this situation, it is sometimes physically admissible to generalize the notion of superpotential and allow for meromorphic sections. 
\end{ep}

\begin{ep}
Let $(M,g)$ be a four-dimensional Lorentzian spin four manifold and let $\cM_l$ be a hyperbolic Riemann surface of genus $\ell$. We have $\deg(T^{1,0}\cM_l) = 2-2\ell < 0$, and thus $T^{1,0}\cM_l$ is a negative holomorphic line bundle with respect to any Hermitian metric $\cH_{\frac{1}{2}}$ of constant curvature $(-1)$. A family of chiral triples parametrized by a natural number $n\in \mathbb{N}_{\ast}$ is then given by $(\cR = (T\cM_{\ell})^n ,\cH^n_{\frac{1}{2}}, \cW)$, where $\cW$ is any holomorphic section of $(T\cM_{\ell})^{2n}$. 
\end{ep}

\noindent
We are now in disposition to show that every (possibly non-compact) complex manifold $\cM$ admitting integral K\"ahler forms can be endowed with a chiral triple and thus can be considered as the target space of chiral $\cN=1$ supergravity on a spin manifold $M$. Let $(\cM,\omega)$ be a K\"ahler-Hodge manifold, which we define as a complex manifold equipped with an integral K\"ahler form. Then, using a classical theorem by Weil \cite{Weil}, there exists a complex Hermitian line bundle $(L,\cH,D)$ equipped with a unitary connection $D$ such that:
\begin{equation*}
\omega = - \frac{i}{2\pi} \Theta(D)\, ,
\end{equation*}

\noindent
where $\Theta(D)$ denotes the curvature of $D$. Since $\omega$ is of type $(1,1)$ the previous equation implies $\Theta(D)^{0,2} = 0$ whence $D^{0,1}$ defines a holomorphic structure on $L$. Denote by $\cL$ the corresponding holomorphic line bundle. Then $D$ is the Chern connection of $(\cL,\cH)$, which becomes a negative Hermitian holomorphic line bundle, as required in order to define a chiral triple on $(M,\cM)$. Hence, if $(M,g)$ is spin, $\cM$ can be considered as the target space of the non-linear sigma model of chiral $\cN=1$ supergravity.

\begin{remark}
The argument given above may not work if $(M,g)$ is not spin. If $(M,g)$ is not spin, we have to prove that there exists at least one integral K\"ahler two-form on $\cM$ whose associated negative Hermitian holomorphic line bundle $(\cL,\cH)$ is isomorphic via pull-back by some map $\varphi$ to the determinant line bundle of some $\Spin^c_0(3,1)$ structure $Q$ on $(M,g)$. This may not be possible, as the following example shows. 
\end{remark} 

\begin{ep}
\label{ep:spinStein}
Take $(M,g)$ to be a non-spin Lorentzian manifold admitting $\Spin_0^c(3,1)$ structures (for instance, $\PP^2$ minus a point $p$), and let $\cM$ be any non-compact Riemann surface. Recall that every holomorphic line bundle $\cL$ over $\cM$ is holomorphically trivial by \cite[Theorem 30.3]{Forster} and, on the other hand, by \cite[Corollary 26.8]{Forster} every such $\cM$ is Stein and thus admits global subharmonic functions. Hence, every holomorphic line bundle $\cL$ over $\cM$ admits a Hermitian metric $\cH$ of negative Chern-curvature. We conclude that the pull-back of any holomorphic line bundle $\cL$, in particular any negative line bundle $(\cL,\cH)$, is topologically trivial. Since $(M,g)$ is not spin, the trivial line bundle can never be isomorphic to the determinant line bundle of a $\Spin_0^c(3,1)$ structure on $(M,g)$. Hence, even though $\cM$ admits negative line bundles, the pair $(M,\cM)$ with $M$ as above does not admit any chiral triple. In particular, this means that the scalar manifold of a supergravity theory on a non-spin Lorentzian manifold cannot be an open Riemann surface.
\end{ep}


\subsection{Trivial scalar manifold}
\label{sec:Trivialscalarmanifold}


As an example of the general formulation introduced in the previous sections, we consider now the simplest case of chiral supergravity, which goes under the name of \emph{pure} (AdS) $\cN=1$ supergravity in the physics literature. Despite being the simplest case of chiral $\cN=1$ supergravity, the associated Killing spinor equations pose an interesting problem involving \emph{generalized} Killing spinors \cite{FriedrichKim,FriedrichKimII} of a particular type, which we describe in this section. Pure (AdS) $\cN=1$ supergravity is defined as the unique $\cN=1$ supergravity not coupled to any matter content, whence exclusively containing the gravitational supermultiplet. Accordingly, we take the scalar manifold $\cM = \left\{ p\right\}$ to be a point. Let $\mathfrak{Q} = (\cL_{\cH},Q,\cW)$ be a chiral triple over $(M,\left\{ p\right\})$. The fact that $\cM$ is a point implies that $\cL_\cH$ is holomorphically trivial and can be identified with the one-dimensional Hermitian complex vector space $(\mathbb{C},H)$, where $H$ denotes the standard Hermitian form on $\C$. Furthermore, the superpotential $\cW$ becomes a complex number which we denote by $w$. In addition, the scalar map $\varphi$ is necessarily constant and the pull-back of $\cL_{\cH}$ is the trivial Hermitian line bundle over $M$, which we denote again by $(\mathbb{C},H)$. Since, by definition of chiral triple, we must have $\mathrm{P}_Q \simeq \cP^{\varphi}_{\cH}$, we conclude that $\L_Q$ is trivial and $\cD^{\varphi}$ is the trivial connection, which in turn implies that the determinant bundles of $S^+$ and $S^{-}$ are trivial, compare Proposition \ref{prop:detS}. In particular, $w_2(\L_Q) = 0$ whence $w_2(M) = 0$, and the $\Spin^c_0(3,1)$ structure $Q$ reduces to a $\Spin_0(3,1)$ structure $Q_0$. Hence, we consider that $S = S^+ \oplus S^{-}$ is the complex spinor bundle associated to $Q_0$ in the usual manner. With these provisos in mind, the scalar potential $\Phi = -k \vert w\vert^2$ becomes a non-positive constant and hence the lagrangian of the theory reduces to the Hilbert-Einstein Lagrangian coupled to a non-positive cosmological constant, that is:
\begin{equation*}
\mathfrak{Lag}[g]= \mathrm{R}_{g} + k \vert w\vert^2\, ,
\end{equation*}

\noindent
where $w\in \mathbb{C}$ is denotes the superpotential and $k$ is a positive real constant. The equations of motion associated to the previous functional read:

\begin{equation*}
\mathrm{Ric}(g) = - \frac{k}{2} \vert w\vert^2 g \, ,
\end{equation*}

\noindent
which are the standard Einstein equations coupled to a non-positive cosmological constant. The Killing spinor equations in turn reduce to:
\begin{equation}
\label{eq:KSEM0}
\nabla^g_v \epsilon = w\, v\cdot \mathfrak{c}(\epsilon)\, ,\qquad \forall v\in \mathfrak{X}(M)\, ,
\end{equation}

\noindent
where $\nabla^g$ denotes the lift of the Levi-Civita connection to the spinor bundle. Therefore, the set $\Sol_{S}(M)$ of supersymmetric solutions on $M$ consists of pairs $(g,\epsilon)$, with $g$ a Lorentzian metric and $\epsilon$ a chiral spinor, such that:
\begin{equation*}
\Sol_{S}(M) = \left\{ (g,\epsilon)\,\, \vert \,\, \mathrm{Ric}(g) =  - \frac{k}{2} \vert w\vert^2\, g\, , \,\, \,\, \nabla^g_v \epsilon = w\, v\cdot \mathfrak{c}(\epsilon) \, , \,\, \forall v\in \mathfrak{X}(M) \right\}
\end{equation*}

\noindent
It is important to point out that Equation \eqref{eq:KSEM0} does not correspond to a \emph{standard} Killing spinor equation \cite{CahenGuttLemaireSpindel,Bar,Baum,BaumII} (for neither a real nor an imaginary Killing spinor) even if $w$ is real. This is due to the complex-conjugate bundle map $\mathfrak{c}$ appearing in \eqref{eq:KSEM0}. In order to see this explicitly, we define:
\begin{equation*}
\epsilon_{1} \eqdef \epsilon\in \Gamma(S^{-})\, , \qquad \epsilon_{2} \eqdef \mathfrak{c}(\epsilon)\in \Gamma(S^{+})\, .
\end{equation*}

\noindent
From \eqref{eq:KSEM0} we deduce that the spinors $\epsilon_{1}$ and $\epsilon_{2}$ satisfy:
\begin{equation}
\label{eq:KSEM0II}
\nabla^g_v \epsilon_1 = w\, v\cdot \epsilon_2\, , \qquad \nabla^g_v \epsilon_2 = \bar{w}\, v\cdot \epsilon_1\, .
\end{equation}

\noindent
Crucially, the second equation above involves the complex conjugate $\bar{w}$ of $w$ instead of $w$. This in turn implies that $|w|^2$, instead of $w^2$, appears in the Einstein constant of the corresponding integrability condition, which allows for $w$ to be any complex number instead of only real or purely imaginary, as it happens in the standard theory of Killing spinors \cite{CahenGuttLemaireSpindel,Bar,Baum,BaumII}. 

\noindent
Defining now the following complex endomorphism of $S$:
\begin{equation*}
T_w\colon \Omega^{0}(S) \to \Omega^1(S) \, , \qquad T_w(\epsilon_1\oplus \epsilon_{2})(v) = v\cdot (w\, \epsilon_2\oplus \bar{w}\, \epsilon_{1})\, ,
\end{equation*}

\noindent
we can rewrite the Killing spinor equations \eqref{eq:KSEM0II} as a particular case of a \emph{generalized} Killing spinor equation:
\begin{equation}
\label{eq:KSEM0T}
\nabla^{g}\eta = T_{w}(\eta)\, ,
\end{equation}

\noindent
where $\eta = \epsilon_{1}\oplus \epsilon_{2}\in \Gamma(S)$. Solutions to the Killing spinor equation \eqref{eq:KSEM0T} which satisfy $\epsilon_1 = \mathfrak{c}(\epsilon_{2})$ are admissible supersymmetric solutions of $\cN=1$ supergravity with trivial scalar manifold. The study and classification of such solutions will be considered in a separate publication.

\begin{remark}
Note that equation \eqref{eq:KSEM0} is required by supersymmetry, which motivates \eqref{eq:KSEM0II} as natural Killing spinor equations to study on a four-dimensional Lorentzian manifold.
\end{remark}


\subsection{Vanishing superpotential}
\label{sec:TrivialSuperpotential}


Another particularly important special case of chiral supergravity is given by taking the superpotential $\cW\in H^{0}(\cM,\cL)$ to be the zero section of the holomorphic line bundle $\cL$. When $\cW$ is taken to be the zero section, the action functional of the theory reduces to:
\begin{equation*}
\mathfrak{Lag}[g,\varphi]= \mathrm{R}_{g} - \norm{\dd \varphi}^2_{\cG, g} \, , \qquad (g,\varphi)\in \Conf_{\mathfrak{Q}}(M)\, ,
\end{equation*}
	
\noindent
and therefore the theory reduces to Einstein gravity coupled to a non-linear sigma model with target space given by the complex  manifold $\cM$. The Killing spinor equations reduce in turn to:
\begin{equation*}
\nabla^\varphi \epsilon = 0\, , \qquad \dd \varphi^{0,1}\cdot\epsilon = 0\, ,
\end{equation*}

\begin{remark}
Note that taking $\cW=0$ the complex-conjugate map $\mathfrak{c}$ disappears from the Killing spinor equations and no longer plays a role in the formulation of the theory. This is the main source of simplification in this particular case, as the role played by $\mathfrak{c}$ in the formulation of the general theory is one of the genuine aspects brought by chiral local supersymmetry in four Lorentzian dimensions.
\end{remark}

\noindent
When $\cW = 0$ Lorentzian manifolds $(M,g)$ equipped with a solution $(g,\varphi,\epsilon)$ to the equations above are particular instances of Lorentzian $\Spin^c(3,1)$ manifolds admitting parallel spinors. Simply connected and complete Lorentzian manifolds of this type have been studied and classified in the literature, see Reference \cite{Moroianu} for the Riemannian case and Reference \cite{Ikemakhen} for the pseudo-Riemannian case. We do not expect every $\Spin^{c}(3,1)$ manifold admitting a parallel spinor to admit a solution to the Killing spinor equations and, on the other hand, assuming completeness of $(M,g)$ rules out many physically interesting Lorentzian four-manifolds. Adapting the main Theorem of \cite{Ikemakhen} to our situation we obtain the following result.

\begin{prop}
Let $M$ be a geodesically complete and simply-conneted Lorentzian four-manifold admitting a supersymmetric solution $(g,\varphi,\epsilon)$ to $\cN=1$ chiral supergravity with vanishing superpotential. Then we can have at most the following possibilities:

\

\begin{enumerate}
	\item $(M,g)$ is isometric to four-dimensional flat Minkowski space.

\
	
	\item $(M,g)$ is isometric to $(M,g) \simeq (\R^2\times X, \eta_{1,1}\times h)$, where $\eta_{1,1}$ is the flat two-dimensional Minkowski metric and $X$ is a Riemann surface equipped with a K\"ahler metric $h$.

\

	\item The holonomy group $H$ of $(M,g)$ is a subgroup of the parabolic subgroup $\SO(2)\ltimes \mathbb{R}^2 \subset \SO_0(3,1)$.
\end{enumerate}.

\
\end{prop}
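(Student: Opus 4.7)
The strategy is to reduce the problem to the classical classification of Lorentzian four-manifolds admitting a parallel null vector field, and then invoke the holonomy classification that is the content of Ikemakhen's theorem. With $\cW=0$ the first Killing spinor equation of Definition \ref{def:KSE} becomes $\nabla^{\varphi}\epsilon=0$, so $\epsilon\in\Gamma(S^{-})$ is parallel for the tensor-product connection $\nabla^{\varphi}$ on $S^{-}$. The plan is to extract from $\epsilon$ a genuine vector field $V$ on $M$ which is nowhere zero, null, and parallel with respect to $\nabla^{g}$, and then apply the Lorentzian de Rham--Wu decomposition together with the known list of indecomposable Lorentzian holonomies admitting an invariant null direction to obtain exactly the three cases in the statement.

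Concretely, I would define the Dirac current $V\in\Gamma(TM)$ by
\begin{equation*}
g(V,X)\eqdef\mathrm{Re}\,\cB(\epsilon,X\cdot\epsilon),\qquad\forall\,X\in\mathfrak{X}(M),
\end{equation*}
where $\cB$ is the $\Spin^{c}_{0}(3,1)$-invariant sesquilinear pairing introduced in \eqref{eq:bilinears}. Because $\cB$ is invariant under the full $\Spin^{c}_{0}(3,1)$ action, with the two copies of $\epsilon$ carrying opposite $\U(1)$-weights by virtue of the complex conjugation built into $\cB$, the pull-back Chern connection contributions drop out and $V$ is a genuine real vector field on $M$, independent of the auxiliary isomorphism $\Psi$.

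Two properties of $V$ must then be verified. First, $V$ is null: this is the standard Fierz identity in four-dimensional Lorentzian signature, reflecting the fact that the chiral subbundles $S^{\pm}$ are maximally isotropic for $\cB$. Second, $V$ is $\nabla^{g}$-parallel: since $\cB$ is $\Spin^{c}_{0}(3,1)$-invariant it is automatically compatible with $\nabla^{\varphi}$, and the Clifford-derivation property together with $\nabla^{\varphi}\epsilon=0$ then forces $\nabla^{g}V=0$. Because $\epsilon$ is nowhere zero (being a non-trivial parallel section of $\nabla^{\varphi}$) and $\cB$ is non-degenerate on the off-diagonal pairing $S^{-}\otimes S^{+}$, one further checks that $V$ is nowhere zero. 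At this point $(M,g)$ is a geodesically complete, simply connected Lorentzian four-manifold carrying a nowhere-vanishing parallel null vector field, so its holonomy group is contained in the stabilizer of a null line, i.e.\ in $\SO(2)\ltimes\R^{2}\subset\SO_{0}(3,1)$. Invoking the Lorentzian de Rham--Wu decomposition then distinguishes the three listed cases: trivial holonomy yields flat Minkowski space; a non-trivial reducibility of the holonomy forces the splitting $(M,g)\simeq(\R^{2}\times X,\eta_{1,1}\times h)$ with $X$ a two-dimensional Riemannian manifold (automatically K\"ahler); and the remaining indecomposable possibility is exactly $H\subseteq\SO(2)\ltimes\R^{2}$.

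The main obstacle is the bookkeeping required for the cancellation of the pull-back Chern connection in the computation of $\nabla^{g}V$. This uses in an essential way the identity $c\circ\tau^{\pm}[g,z]=\tau^{\mp}[g,\bar{z}]\circ c$ from Section \ref{sec:CliffordAlgebras}, which translates at the connection level into the cancellation of the two $\U(1)$-connection contributions entering the derivative of $\cB(\epsilon,X\cdot\epsilon)$. A secondary ingredient is the careful invocation of Ikemakhen's theorem as the structural input asserting that on a geodesically complete, simply connected Lorentzian four-manifold the three listed possibilities exhaust the Lorentzian holonomies admitting a parallel null vector; once the Dirac current construction above is in place, the supergravity statement reduces to its purely Lorentzian-geometric counterpart. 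The second Killing spinor equation $(\dd\varphi^{0,1})^{\flat}\cdot\epsilon=0$ does not enter the holonomy classification but only restricts the admissible scalar maps $\varphi$ on each of the geometries in the three cases.
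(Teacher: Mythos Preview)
Your approach is correct but more explicit than the paper's. The paper simply states that the proposition follows by ``adapting the main Theorem of \cite{Ikemakhen} to our situation'' and gives no further argument; Ikemakhen's classification of pseudo-Riemannian $\Spin^c$ manifolds carrying parallel spinors is invoked as a black box, and the three items in the statement are read off directly from that list.

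Your route instead constructs the Dirac current $V$ of the parallel chiral spinor, verifies that it is null, nowhere vanishing, and $\nabla^g$-parallel (with the $\U(1)$ contributions cancelling because $\cB$ is sesquilinear and $\Spin^c_0$-invariant), and then appeals to the de~Rham--Wu decomposition together with the stabilizer $\SO(2)\ltimes\R^2$ of a null line in $\SO_0(3,1)$. This is essentially a self-contained re-derivation of the relevant special case of Ikemakhen's result, rather than a citation of it. The advantage of your approach is that it makes transparent why the $\Spin^c$ twisting is harmless for the holonomy classification of $(M,g)$: the auxiliary connection drops out of the Dirac current by the conjugation identity $c\circ\tau^{\pm}[g,z]=\tau^{\mp}[g,\bar z]\circ c$, so one is reduced to a purely metric statement about parallel null vectors. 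The paper's approach is shorter but hides this mechanism inside the cited theorem. Your remark that the second Killing spinor equation plays no role in the holonomy classification is also correct and worth keeping.
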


\begin{remark}
Every geodesically complete and simply connected supersymmetric solution must be of the form described by the previous proposition. However, the converse may not be true, since a supersymmetric solution requires $(M,g)$ to admit a parallel spinor with respect to the specific connection $\nabla^{\varphi}$, which is coupled to the scalar map $\varphi$, which is in turn required to satisfy its corresponding Killing spinor equation. It is indeed an interesting open problem to classify which of the Lorentzian four-manifolds specified above can carry supersymmetric solutions of chiral supergravity. We will consider this problem in detail in Section \ref{sec:redX} for the specific case $(2)$ above, namely for the case in which $M = \mathbb{R}^2\times X$. Case (1) admits the obvious solution given by taking $\varphi$ as a constant map.
\end{remark}


\section{Reduction to a Riemann surface}
\label{sec:redX}


In this section we consider the reduction of $\cN=1$ chiral supergravity to an oriented two-manifold $X$ by assuming that the space-time manifold $M$ is of the form:
\begin{equation*}
M = \mathbb{R}^2\times X\, , 
\end{equation*}

\noindent
and it is equipped with the product metric $g_4 = \delta_{1,1}\times g$, where $\delta_{1,1}$ denotes the flat Minkowski metric on $\mathbb{R}^{2}$. The reduction is natural in the sense that the type of spinorial structures and Clifford modules coincide in dimension $(3,1)$ and $(2,0)$ by Clifford periodicity, since:
\begin{equation*}
(3-1) = (2-0) = 2\mod 8\, ,
\end{equation*}

\noindent
which corresponds to the real case of simple type in the standard classification of real Clifford algebras and their modules. Hence, all the structures introduced in the formulation of chiral $\cN=1$ supergravity on a Lorentzian four-manifold $M$ exist also on $X$, and we can consider directly the formulation of the theory on $X$. This would be equivalent to perform a reduction trivially along the $\mathbb{R}^2$ factor in $M$. We leave the details to the reader and proceed instead by directly formulating chiral supergravity on $X$. Using the fixed orientation on $X$, every Riemannian metric $g$ on $X$ defines a canonical complex structure $J_g$ given by point-wise counter-clockwise rotation. We define $Q_{g}$ to be the anti-canonical $\Spin^c(2)$ structure associated to $g$, which means that the determinant line bundle of $Q_g$ is the canonical bundle of $(X,g)$, compare \cite{Friedrich} for the terminology. See Remarks \ref{remark:canonicalspinc} and \ref{remark:antiholomorphic} for the reasons behind this choice of $\Spin^c(2)$ structure. For each Riemannian metric $g$, we define $S$ to be the complex spinor bundle canonically associated to $Q_{g}$. The spinor bundle $S$ admits an explicit model given by:
\begin{equation*}
S = \Lambda^{\ast,0}(X)\, ,
\end{equation*}

\noindent
where the splitting is performed with respect to the complex structure $J_g$. Clifford multiplication is given by:
\begin{equation}
\label{eq:Cliffordm}
\beta\cdot \alpha = 2 \beta^{1,0}\wedge \alpha + \iota_{(\beta^{\sharp})^{1,0}} \alpha\, ,
\end{equation}

\noindent
for all $\alpha\in \Omega^{\ast,0}(X)$ and all $\beta\in \Omega^{1}(X)$. Here we have set:
\begin{equation*}
\beta^{1,0} = \frac{1}{2} (\beta - i J_g \beta)\, , \qquad (\beta^{\sharp})^{1,0} = \frac{1}{2}\left( \beta^{\sharp} - i J \beta^{\sharp}\right)
\end{equation*}

\noindent
with the musical isomorphism $\sharp$ taken with respect to the metric $g$. The determinant line bundle associated to the anti-canonical $\Spin^c(2)$ structure $Q_g$ is given by the canonical bundle $K_g$ of $(X,g)$:
\begin{equation*}
\L_{Q_g} = K_g = \Lambda^{1,0}(X)\, ,
\end{equation*}

\noindent
The complex spinor bundle $S$ is thus a complex vector bundle of rank two, which splits in the usual way:
\begin{equation*}
S = S^{+} \oplus S^{-}\, ,
\end{equation*}

\noindent 
in terms of the chiral bundles $S^{+}$ and $S^{-}$, of complex rank one. In the presentation $S = \Lambda^{\ast,0}(X)$ of the spinor bundle, the chiral spinor bundles $S^{\pm}$ respectively correspond with:
\begin{equation*}
S^{+} \simeq \Lambda^{even,0}(X) \simeq \Lambda^{0,0}(X)\, , \qquad S^{-} \simeq \Lambda^{odd,0}(X) \simeq \Lambda^{1,0}(X)\, , 
\end{equation*}

\noindent
whereas the chiral spinor bundles $S^{\pm}_{c}$ correspond with:
\begin{equation*}
S^{+}_{c} \simeq \Lambda^{0,1}(X)\, , \qquad S^{-}_{c} \simeq  \Lambda^{0,0}(X)\, . 
\end{equation*}

\noindent
As required, we have:
\begin{equation*}
S^{+} \simeq S^{+}_{c}\otimes K_g\, , \qquad S^{-} \simeq S^{-}_{c}\otimes K_g\, .
\end{equation*}

\begin{remark}
In the standard terminology used in the literature, $S$ corresponds to the complex spinor bundle associated to the anti-canonical $\Spin^c(2)$ structure of $(X,g)$, whereas $S_c$ corresponds to the complex spinor bundle associated to the canonical $\Spin^c(2)$ structure of $(X,g)$, see for example \cite{Friedrich} for more details.
\end{remark}

\begin{remark}
Recall that we have a canonical isomorphism of real vector bundles:
\begin{equation*}
K_g = \Lambda^{1,0}(X)\simeq T^{\ast}X\, , 
\end{equation*}

\noindent
where $T^{\ast}X$ denotes the real cotangent bundle of $X$. Hence, the isomorphism type of $K_g$ as a real vector bundle does not depend on $g$. Not only this, the isomorphism type of $K_g$ as a $C^{\infty}$ complex line bundle does not depend on $g$ either. To see this, note that if $X$ is compact the Chern number of $K_g$ is minus the Euler characteristic of $X$ whereas if $X$ is open every holomorphic line bundle over $(X,g)$, in particular $K_g$, is holomorphically trivial. 
\end{remark}

\noindent
In the set-up introduced above, the notion of chiral triple, see Definition \ref{def:chiraltriple}, simplifies. This is due to the fact that, by assumption, we have established a canonical choice of complex structure and $\Spin^c(2)$ structure for every Riemannian metric $g$ on $X$, whose associated characteristic line bundle is $K_g$. More precisely, since the isomorphism class of $K_g$ as a complex line bundle does not depend on $g$ and furthermore the definition of chiral triple only requires $\cL^{\varphi}_{\cH}$ to be $C^{\infty}$ isomorphic to $K_g$, we can define the isomorphism of complex line bundles $\Psi$ between $\cL^{\varphi}$ and the determinant line bundle $K_g$ of the given $\Spin^c(2)$ structure independently of the metric $g$. Consequently, we arrive to the following simplification of a chiral triple, which we proceed to define.

\begin{definition}
A chiral triple $(\cL, \cH, \cW)$ at $(X,\cM)$ consists of a negative Hermitian holomorphic line bundle $(\cL,\cH)$ and a holomorphic section $\cW\in H^{0}(\cM,\cL)$ such that there exists a map $\varphi\colon X\to \cM$ and a metric $g$ on $X$ for which:
\begin{equation*}
K_g \simeq \cL^{\varphi}\, ,
\end{equation*}

\noindent
as complex line bundles.
\end{definition}

\

\noindent
We fix a chiral triple $\mathfrak{Q} = (\cL, \cH,\cW)$ on $(X,\cM)$ and consider the associated chiral supergravity on $X$. The fact that $\L_{Q}= K_g $ implies, directly from the definition of chiral triple, that the pull-back of the holomorphic line bundle $\cL$ by a scalar map must be $C^{\infty}$-isomorphic to $\L_{Q}$:
\begin{equation*}
\L_{Q} = K_{g} \simeq \cL^{\varphi}\, ,
\end{equation*}

\noindent
For every choice of isomorphism $\Psi\colon K_{g}\to \cL^{\varphi}$, we endow $K_{g}$ with the pull-back connection $\Psi^{\ast}\cD^{\varphi}$ with respect to $\Psi$, where $\cD$ denotes the Chern connection on $\cL_{\cH}$ and $\cD^{\varphi}$ its pull-back by $\varphi$. For ease of notation, we will sometimes denote $\Psi^{\ast}\cD^{\varphi}$ simply by $\cD^{\varphi}$. Recall that the definition of chiral triple also establishes the existence of a $C^{\infty}$ isomorphism:
\begin{equation*}
\P_{Q}\simeq \cP^{\varphi}_{\cH}\, ,
\end{equation*}

\noindent
Indeed, every choice of isomorphism $\Psi\colon K_{g}\to \cL^{\varphi}$ induces a canonical isomorphism between $\P_{Q}$ and $\cP^{\varphi}_{\cH}$, which we denote for simplicity by the same symbol, namely $\Psi\colon \P_{Q}\to \cP^{\varphi}_{\cH}$. For the type of complex spinor $S$ we are considering, which is associated to the anti-canonical $\Spin^c(2)$ structure of $(X,g)$, $\P_{Q}$ corresponds with the principal bundle of unitary coframes defined by $g^{\ast}$ (the dual of $g$) on $T^{\ast}X$. On the other hand, the principal $\U(1)$ bundle $\cP^{\varphi}_{\cH}$ corresponds to the $\U(1)$ reduction induced on $\L_Q$ by the metric $\Psi^{\ast}\cH^{\varphi}$. The isomorphism of principal $\U(1)$ bundles $\Psi\colon \P_{Q}\to \cP^{\varphi}_{\cH}$ gives an isomorphism between these two reductions.

Using $\Psi$ we equip $\P_Q$ with the pull-back connection $\Psi^{\ast}\cA^{\varphi}$, where $\cA^{\varphi}$ denotes the pull-back by $\varphi$ of the $\U(1)$ connection $\cA$ associated to the Chern connection $\cD$. Again, for ease of notation, we will sometimes denote $\Psi^{\ast}\cA^{\varphi}$ simply by $\cA^{\varphi}$. Lifting the Levi-Civita connection $\nabla^g$ associated to $g$ and $\cA^{\varphi}$ to the spinor bundle $S$ we obtain a connection on $S$ which we denote by $\nabla^{\varphi}$. Furthermore, using the isomorphism $\Psi\colon K_{g} \to \cL^{\varphi}$, the pull-backed superpotential $\cW^{\varphi}$ can be identified with a complex one-form on $X$ of $(1,0)$ type. This complex one-form is in principle not holomorphic since $\varphi$ is only assumed to be a $C^{\infty}$-map. The formulas defining the action functional and Killing spinor equations of the theory are formally the same as in $(3,1)$ dimensions, namely:
\begin{equation*}
\mathfrak{Lag}[g,\varphi]= \mathrm{R}_{g} - \norm{\dd \varphi}^2_{\cG,g} - \Phi^{\varphi}_k\, , \qquad (g,\varphi)\in \Conf_{\mathfrak{Q}}(X)\, ,
\end{equation*}

\begin{equation}
\label{eq:KSEX}
\nabla^\varphi \epsilon = \mathfrak{C}^{\Psi}_{\cW,\varphi}(\epsilon)\, , \qquad (\dd \varphi^{0,1})^{\flat}\cdot \epsilon = \mathfrak{C}^{\Psi,\cD}_{\cW,\varphi}(\epsilon)\, ,	
\end{equation}
	
\noindent	
for tuples $(g,\varphi,\epsilon,\Psi)\in \Conf_{\mathfrak{Q},E}$. We remind the reader that the vector bundle maps $\mathfrak{C}^{\Psi}_{\cW,\varphi}\colon S^{+}\to T^{\ast}X\otimes S^{-}$ and $\mathfrak{C}^{\Psi,\cD}_{\cW,\varphi}\colon S^{+}\to \Lambda^{1,0} T^{\ast}\cM^{\varphi}\otimes S^{+}$ are defined as follows:
\begin{equation*}
\mathfrak{C}^{\Psi}_{\cW,\varphi}(\epsilon)(v) \eqdef v\cdot \mathfrak{T}_{\Psi}(\left(\cW^{\varphi} \otimes \mathfrak{c}(\epsilon)\right))\, , \quad \forall\, v\in \mathfrak{X}(X)\, , \qquad \mathfrak{C}^{\Psi,\cD}_{\cW,\varphi}(\epsilon) =   \mathfrak{T}_{\Psi}((\cD \cW)^{\varphi}\otimes \mathfrak{c}(\epsilon))\, ,
\end{equation*}

\noindent
for $\epsilon\in \Gamma(S^{+})$. Here $\mathfrak{T}_{\Psi}\colon \cL^{\varphi}\otimes S^{\pm}_{c} \to S^{\pm}$ denotes the canonical isomorphism constructed in Proposition \ref{prop:Cwelldefined}. For the reasons behind the choice of positive chirality for the supersymmetry generator $\epsilon$ we refer the reader to Remark \ref{remark:antiholomorphic}. The first Killing spinor equation in \eqref{eq:KSEX} can be written simply as:
\begin{equation*}
\mathrm{D} \,\epsilon = 0\, .
\end{equation*}

\noindent
in terms of the following \emph{real-linear} connection $\mathrm{D}$ on the real rank-two vector bundle $S$:
\begin{equation*}
\mathrm{D}\eqdef \nabla^{\varphi}  - \mathfrak{C}^{\Psi}_{\cW,\varphi}\colon S \to S\, ,
\end{equation*}

\noindent
Hence, if $\epsilon$ is non-zero at some point (which we will assume in the following) it will be non-zero everywhere. Note however that given such $\epsilon$, the section $i\epsilon\in \Gamma(S)$ may be non-parallel even if $\epsilon$ is. Using the fact that $X$ is a Riemann surface, the Einstein equations \eqref{eq:Einsteineqs} of the theory drastically simplify and can be written as follows:
\begin{equation*}
\varphi^{\ast}\cG = \frac{\norm{\dd\varphi}^2_{g,\cG}}{2} \, g\, , \qquad \Phi^{\varphi} = 0\, ,
\end{equation*}

\noindent
where we have used that $\G(g)$ is identically zero and $\mathrm{Tr}_g(\varphi^{\ast}\cG) = \norm{\dd\varphi}^2_{\cG,g} $. Vanishing of the potential $\Phi^{\varphi}$ is equivalent to:
\begin{equation*}
\norm{\cD\cW}^{2}_{\cH,\cG}\circ\varphi = c \norm{\cW}^{2}_{\cH}\circ \varphi\, .
\end{equation*}

\noindent
Condition $\Phi^{\varphi}_k = 0$ does not imply in general neither $\Phi_k = 0$ nor $\dd\Phi_k = 0$ identically on $\cM$. Hence, the equation of motion for $\varphi$ does not reduce in general to the standard harmonicity condition and instead we have:
\begin{equation*}
(\Tr_g \nabla\dd\varphi)^{\flat} =  \frac{1}{2} \dd \Phi_k\circ \varphi\, .
\end{equation*}

\noindent
for $\varphi\colon X\to \cM$. Therefore, maps $\varphi\colon X\to \cM$ satisfying the equations of chiral $\cN=1$ supergravity on $X$ are particular instances of harmonic maps with potential \cite{Lemaire,FardounRatto}. Because of this, we will call solutions $\varphi$ to the previous equation \emph{harmonic maps with potentail} $\Phi_k$ (rather than $\frac{1}{2}\Phi_k$ as in Reference \cite{FardounRatto}). We obtain:

\begin{cor}
Let $\mathfrak{Q}$ be a chiral triple and let $(X, \cM)$ be an oriented two-manifold. A pair $(g,\varphi)$ satisfies the equations of chiral supergravity associated to $\mathfrak{Q}$ on $X$ if and only if:
\begin{equation}
\label{eq:sugraX}
\varphi^{\ast}\cG = \frac{\norm{\dd\varphi}^2_{\cG,g}}{2} \, g\, , \qquad \norm{\cD \cW}^{2}_{\cH,\cG}\circ \varphi = c \norm{\cW}^{2}_{\cH}\circ\varphi\, , \qquad (\Tr_g \nabla\dd\varphi)^{\flat} =  \frac{1}{2} \dd \Phi_k\circ \varphi\, .
\end{equation}

\noindent
In particular, if $\dd \Phi_k\circ \varphi = 0$ then $\varphi\colon X\to \cM$ is a harmonic map.
\end{cor}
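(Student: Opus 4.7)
The plan is to derive the three equations of \eqref{eq:sugraX} as a direct consequence of the Einstein and scalar equations \eqref{eq:Einsteineqs} and \eqref{eq:Scalareqs} specialized to the case $\dim X = 2$, exploiting the fact that the Einstein tensor vanishes identically on a surface.

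First I would recall that on any two-manifold the Einstein tensor $\G(g) = \mathrm{Ric}^g - \tfrac{\mathrm{R}^g}{2} g$ vanishes identically, since the Ricci tensor of a surface satisfies $\mathrm{Ric}^g = \tfrac{\mathrm{R}^g}{2} g$. Consequently, the Einstein equation \eqref{eq:Einsteineqs} collapses to the vanishing of the energy-momentum tensor:
\begin{equation*}
\varphi^{\ast}\cG = \frac{1}{2}\left( \norm{\dd \varphi}^2_{g,\cG} + \Phi^{\varphi}_k \right) g.
\end{equation*}
Next I would take the trace with respect to $g$ of this identity. Using $\mathrm{Tr}_g(\varphi^{\ast}\cG) = \norm{\dd\varphi}^2_{g,\cG}$ and $\mathrm{Tr}_g(g) = 2$, both sides contain the term $\norm{\dd\varphi}^2_{g,\cG}$, and the resulting cancellation forces $\Phi^{\varphi}_k = 0$. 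This yields the second equation in \eqref{eq:sugraX} by unfolding the definition $\Phi_k \eqdef \norm{\cD\cW}^2_{\cH,\cG} - k\norm{\cW}^2_{\cH}$, and feeding $\Phi^\varphi_k = 0$ back into the traced Einstein equation produces the first equation of \eqref{eq:sugraX}.

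For the third equation I would start from the scalar equation \eqref{eq:Scalareqs} and apply the musical isomorphism $\flat$ associated to $\cG$. Since $\cG^{\flat}(\mathrm{grad}_{\cG}\Phi_k) = \dd\Phi_k$, the equation $\Tr_g\nabla\dd\varphi - \tfrac{1}{2}(\mathrm{grad}_{\cG}\Phi_k)^{\varphi} = 0$ becomes $(\Tr_g\nabla\dd\varphi)^{\flat} = \tfrac{1}{2}\dd\Phi_k\circ\varphi$. Conversely, if $(g,\varphi)$ satisfies the three equations of \eqref{eq:sugraX}, then the first and second recombine into the vanishing of $\mathrm{T}(g,\varphi)$, which together with $\G(g) = 0$ yields \eqref{eq:Einsteineqs}, while the third yields \eqref{eq:Scalareqs} by inverting $\flat$. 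The final sentence is immediate: if $\dd\Phi_k\circ\varphi = 0$, then the third equation reduces to $\Tr_g\nabla\dd\varphi = 0$, which is the tension-field characterization of harmonic maps $\varphi\colon X \to \cM$.

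The argument is essentially a tracing computation together with the musical identification; there is no substantive obstacle, and the only subtlety is the bookkeeping of the constant (the paper writes $c$ while the defining constant in $\Phi_k$ is $k$, so strictly $c = k$).
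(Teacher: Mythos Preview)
Your proof is correct and follows essentially the same approach as the paper: the discussion preceding the corollary in the paper uses exactly the vanishing of the Einstein tensor on a surface together with the trace identity $\mathrm{Tr}_g(\varphi^{\ast}\cG) = \norm{\dd\varphi}^2_{\cG,g}$ to extract $\Phi^{\varphi}_k = 0$ and the conformality relation, and then rewrites the scalar equation via the musical isomorphism. Your observation that the constant $c$ should be $k$ is also correct.
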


\begin{remark}
Recall that condition $\Phi^{\varphi}_k = 0$ implies:
\begin{equation*}
\varphi(X)\subset \Phi^{-1}_k(0)\subset \cM\, ,
\end{equation*}
	
\noindent
In principle, the zero level set of $\Phi_k$ may not be a smooth $(2n-1)$-dimensional submanifold of $\cM$, since it is not guaranteed that $0$ be a regular value of $\Phi_k$. However, regularity of the critical points of $\Phi_k$ is indeed of physical relevance in relation for example with the stabilization of moduli in string theory compactifications.
\end{remark}

\noindent
The following proposition settles the classification of solutions to chiral supergravity on $X$ in the simple case in which $\varphi$ is the constant scalar map.

\begin{prop}
Let $k >0$ and $X$ connected. A pair $(g,\varphi)$ with $\dd\varphi =0$ is a solution to Equations \eqref{eq:sugraX} if and only if $\Phi_k(q) = 0$ and $\dd\Phi_k\vert_{q} = 0$, where $q$ is the constant value of $\varphi$. In particular, a $\cN=1$ chiral supergravity with chiral triple $\mathfrak{Q}$ admits solutions with constant scalar map if and only if $0\in\mathbb{R}$ is a critical value of the scalar potential $\Phi_k\colon\cM\to \mathbb{R}$. 
\end{prop}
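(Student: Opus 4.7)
The plan is a direct substitution argument: since $X$ is connected and $\dd\varphi = 0$, the map $\varphi$ is constant, equal to some $q\in\cM$, and we need only check what each of the three equations of \eqref{eq:sugraX} reduces to under this hypothesis. I expect no real obstacle here; the content of the statement is essentially that the three equations decouple completely when $\dd\varphi$ vanishes.

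First I would observe that the Einstein-type equation $\varphi^{\ast}\cG = \tfrac{1}{2}\norm{\dd\varphi}^2_{\cG,g}\, g$ is trivially satisfied when $\dd\varphi = 0$, because both sides vanish identically. Next, the potential equation $\norm{\cD\cW}^2_{\cH,\cG}\circ \varphi = k\norm{\cW}^2_{\cH}\circ \varphi$ is by construction the vanishing of $\Phi_k\circ\varphi$, and for a constant map $\varphi \equiv q$ this pulls back to the single scalar condition $\Phi_k(q) = 0$. Finally, in the scalar equation $(\Tr_g \nabla\dd\varphi)^{\flat} = \tfrac{1}{2}\dd\Phi_k\circ\varphi$ the left hand side vanishes identically (since $\dd\varphi = 0$ implies $\nabla\dd\varphi = 0$), while the right hand side is the $\varphi$-pullback of the one-form $\tfrac{1}{2}\dd\Phi_k$, which for a constant map equals the constant section $\tfrac{1}{2}\dd\Phi_k|_q$; hence the equation is equivalent to $\dd\Phi_k|_q = 0$.

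The converse is immediate: given $q\in\cM$ with $\Phi_k(q) = 0$ and $\dd\Phi_k|_q = 0$, the pair consisting of any Riemannian metric $g$ on $X$ together with the constant map $\varphi\equiv q$ solves all three equations of \eqref{eq:sugraX}. The second assertion of the proposition then follows directly from the first, because by definition $0\in\mathbb{R}$ is a critical value of $\Phi_k\colon\cM\to\mathbb{R}$ precisely when there exists some $q\in\cM$ lying in $\Phi_k^{-1}(0)$ at which $\dd\Phi_k$ vanishes, which is exactly the condition characterizing constant-map solutions.
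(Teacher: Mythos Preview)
Your proof is correct and follows essentially the same approach as the paper's own proof, namely a direct substitution of $\dd\varphi = 0$ into each of the three equations of \eqref{eq:sugraX}. Your argument is in fact more detailed than the paper's, which merely asserts the equivalences without spelling them out; in particular you correctly identify that the right-hand side of the scalar equation is the section pullback of $\dd\Phi_k$ (a section of $T^{\ast}\cM^{\varphi}$), which for a constant map is the constant section $\dd\Phi_k|_q$, rather than the form pullback (which would vanish).
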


\begin{proof}
Condition $\dd\varphi = 0$ implies that the first equation \eqref{eq:sugraX} is automatically solved for any metric $g$ on $X$. The second equation in \eqref{eq:sugraX} is equivalent with $\Phi\vert_{q} = 0$, whereas the third equation in \eqref{eq:sugraX} is equivalent with $\dd\Phi\vert_{q} = 0$. 
\end{proof}

\noindent
The previous corollary recovers the well-known fact that gravity in two dimensions is \emph{topological}, which translates into the fact that in dimension two every metric satisfies the vacuum Einstein equations. In the following we will use the symbol $\norm{\dd\varphi}^2_{g, \cG} > 0$ to denote that $\norm{\dd\varphi}^2_{g, \cG}$ is nowhere vanishing.

\begin{lemma}
\label{lemma:conf}
Let $\norm{\dd\varphi}^2_{g, \cG} > 0$. If $(g,\varphi)$ is a solution to the first equation in \eqref{eq:sugraX} then $\varphi\colon (X,g) \to (\cM, \cG)$ is a conformal immersion.
\end{lemma}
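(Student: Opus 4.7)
The plan is to simply unpack the conformality hypothesis directly from the equation. Writing $\lambda \eqdef \tfrac{1}{2}\norm{\dd\varphi}^2_{g,\cG}$, the first equation of \eqref{eq:sugraX} reads $\varphi^{\ast}\cG = \lambda \, g$. By assumption $\lambda$ is a strictly positive smooth function on $X$, so it suffices to establish two things: first, that $\varphi$ is an immersion; second, that the relation $\varphi^{\ast}\cG = \lambda g$ with $\lambda>0$ coincides with the definition of a conformal map from $(X,g)$ to $(\cM,\cG)$.

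For the immersion property I would argue pointwise. Fix $p\in X$ and let $v\in T_pX$ be nonzero. Since $g$ is a Riemannian metric, $g_p(v,v)>0$, and therefore
\begin{equation*}
\cG_{\varphi(p)}(\dd\varphi_p(v),\dd\varphi_p(v)) = (\varphi^{\ast}\cG)_p(v,v) = \lambda(p)\, g_p(v,v) > 0,
\end{equation*}
which forces $\dd\varphi_p(v)\neq 0$. Hence $\dd\varphi_p\colon T_pX\to T_{\varphi(p)}\cM$ is injective at every $p$, i.e.\ $\varphi$ is an immersion.

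For the conformality part there is nothing more to do: the identity $\varphi^{\ast}\cG = \lambda\, g$ with $\lambda\in C^{\infty}(X,\R_{>0})$ is precisely the statement that $\varphi$ is a conformal map. I do not anticipate any real obstacle; the only subtle point is recognizing that positivity of $\norm{\dd\varphi}^2_{g,\cG}$ is exactly what promotes the tensorial identity from a degenerate pullback to a genuine conformal rescaling, and this is immediate from the chain of inequalities above.
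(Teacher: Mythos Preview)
Your proof is correct and follows the same approach as the paper's, which is essentially a one-line observation that the equation $\varphi^{\ast}\cG = \lambda g$ with $\lambda>0$ means $\varphi^{\ast}\cG$ is conformal to $g$ and hence $\varphi$ is a conformal immersion. You have simply unpacked the immersion part more explicitly than the paper does.
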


\begin{proof}
Since by assumption $\norm{\dd\varphi}^2_{g, \cG} \neq 0$ everywhere, the first equation in \eqref{eq:sugraX} implies that $\varphi^{\ast}\cG$ is conformal to $g$ and therefore $\varphi$ is a conformal immersion. 
\end{proof}

\begin{prop}
\label{prop:minimalimW0}
Let $\mathfrak{Q}$ be a chiral triple with vanishing superpotential $\cW$. A pair $(g,\varphi)$ with $\norm{\dd\varphi}^2_{g, \cG} > 0$ is a solution to the associated chiral supergravity if and only if $\varphi$ is a minimal immersion of $X$ into $\cM$.
\end{prop}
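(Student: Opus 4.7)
The key observation is that with $\cW=0$ the scalar potential $\Phi_k$ introduced after \eqref{ScalarPotential} vanishes identically on $\cM$, so both $\Phi^\varphi_k\equiv 0$ and $\dd \Phi_k\circ\varphi\equiv 0$ automatically. Consequently the three conditions \eqref{eq:sugraX} collapse to the conformality equation
\begin{equation*}
\varphi^{\ast}\cG = \tfrac{1}{2}\norm{\dd\varphi}^2_{g,\cG}\, g
\end{equation*}
together with the harmonic-map equation $\Tr_g \nabla \dd\varphi = 0$. The plan is then to identify this pair of conditions with the notion of a minimal immersion via the classical interplay between conformality, harmonicity and mean curvature on a two-dimensional domain.

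For the forward implication, assume $(g,\varphi)$ satisfies \eqref{eq:sugraX} with $\norm{\dd\varphi}^2_{g,\cG}>0$. By Lemma \ref{lemma:conf} the map $\varphi\colon (X,g)\to (\cM,\cG)$ is a conformal immersion, so in particular $g$ lies in the conformal class of the induced metric $\varphi^{\ast}\cG$. The third equation in \eqref{eq:sugraX} is the statement that $\varphi$ is $g$-harmonic. I would then invoke the conformal invariance of the harmonic map equation in two dimensions (the tension field of a map from a surface rescales by a positive function under a conformal change of the domain metric) to conclude that $\varphi$ is also harmonic with respect to $\varphi^{\ast}\cG$. The final step uses the classical identity that, for a conformal immersion of a Riemann surface into a Riemannian manifold, the tension field equals (a positive multiple of) the mean curvature vector of the image; thus its vanishing is exactly the minimality of the immersion.

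For the converse, assume $\varphi$ is a minimal immersion. Then $\varphi^{\ast}\cG$ is a genuine Riemannian metric on $X$ and, by the same correspondence as above, $\varphi$ is $\varphi^{\ast}\cG$-harmonic. Given a metric $g$ satisfying the first equation of \eqref{eq:sugraX}, one has $g=2\,\norm{\dd\varphi}^{-2}_{g,\cG}\,\varphi^{\ast}\cG$, which places $g$ in the conformal class of $\varphi^{\ast}\cG$; by the conformal invariance used above, $\varphi$ is then $g$-harmonic, so the third equation holds. Since $\cW=0$ trivializes the second equation, the pair $(g,\varphi)$ is a solution.

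The only non-trivial ingredients are the two classical facts invoked above: conformal invariance of the harmonic map equation on a surface, and the identification of the tension field of a conformal immersion with (twice) the mean curvature vector. Both are standard, so the main task is simply to record them carefully in the present notation; no delicate analytic issue arises because the condition $\norm{\dd\varphi}^2_{g,\cG}>0$ guarantees that conformal factors never degenerate and that $\varphi$ is genuinely an immersion.
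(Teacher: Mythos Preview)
Your proof is correct and follows the same approach as the paper: observe that $\cW=0$ kills the potential, reduce \eqref{eq:sugraX} to conformality plus harmonicity, and invoke the classical equivalence between conformal harmonic maps from a surface and minimal immersions (the paper cites \cite{Report1,Report2} for exactly this). The paper's own proof is a two-line appeal to Lemma~\ref{lemma:conf} and this equivalence; you have simply unpacked the latter into its constituent ingredients (conformal invariance of the tension field in dimension two, and the identification of the tension field with the mean curvature for a conformal immersion).

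One small point on the converse: your phrasing ``Given a metric $g$ satisfying the first equation of \eqref{eq:sugraX}'' reads as though you are \emph{assuming} the conformality equation rather than deriving it. In the proposition the pair $(g,\varphi)$ is fixed, and ``$\varphi$ is a minimal immersion of $X$ into $\cM$'' must be read as ``$\varphi:(X,g)\to(\cM,\cG)$ is minimal'', i.e.\ conformal and harmonic with respect to $g$. Under that reading the first equation is immediate (conformality is exactly $\varphi^{\ast}\cG=\lambda g$, and tracing gives $\lambda=\tfrac{1}{2}\norm{\dd\varphi}^2_{g,\cG}$), and the third follows directly from harmonicity; you do not need to pass through $\varphi^{\ast}\cG$ and conformal invariance at all. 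Your detour is not wrong, just unnecessary.
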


\begin{proof}
If $\cW$ is the zero section then $\Phi_k$ is identically zero and thus $(\mathrm{grad}_{\cG} \Phi_k)^{\varphi} = 0$, implying that the scalar equation for $\varphi$ reduces to the harmonicity condition:
\begin{equation*}
\Tr_g \nabla\dd\varphi = 0\, .
\end{equation*} 	

\noindent
Hence, using Lemma \ref{lemma:conf}, it follows that $\varphi\colon X\to \cM$ is a harmonic conformal immersion, which for a two-dimensional source is equivalent with $\varphi$ being a minimal immersion \cite{Report1,Report2}. 
\end{proof}

\noindent
For every solution $(g,\varphi)$ that does not satisfy condition $\norm{\dd\varphi}^2_{\cG,g} > 0$ there exists a non-empty closed subset $C\subset X$ of $X$ at which $\norm{\dd\varphi}^2_{\cG,g}\vert_C = 0$. Since $g$ is by assumption a non-degenerate Riemannian metric, the Einstein equation implies that the real rank of $\dd\varphi$ is zero over $C$ and two over $X\backslash C$. The critical set of $\varphi$ coincides with $C$ and does not contain points at which the rank of $\varphi$ is one. When restricted to $C$, the Einstein equation of chiral supergravity is automatically satisfied independently of the restriction of $g$ to $C$:
\begin{equation*}
\varphi^{\ast}\cG\vert_C = 0\, , \qquad \norm{\dd\varphi}^2_{\cG,g}\vert_C = 0\, ,
\end{equation*}

\noindent
Fixing $\varphi$, the Einstein equation becomes a quadratic algebraic equation $\varphi^{\ast}\cG = \frac{\norm{\dd\varphi}^2_{g,\cG}}{2} \, g$ for $g$ and can in principle be use to determine $g$ on $X\backslash C$. However, over $C$ the Einstein is automatically satisfied and provides no information on the behavior of $g$ over $C$, which is thus not determined over $C$. More explicitly, let us fix real coordinates $\left\{ x^a\right\}$, $a =1,2$, and $\left\{\phi^A \right\}$, $A =1,\hdots,2n$ around $p\in X$ and $\varphi(p)$, respectively. We define:
\begin{equation*}
 L_{ab}(p) \eqdef \partial_a \varphi^A(p)\, \partial_b \varphi^B(p)\, \cG_{AB}(p)\, ,
\end{equation*} 

\noindent
where $\varphi^A$, $\cG^A$ are the local expressions of the components of $\varphi$ and $\cG$ in the given coordinates $\left\{ x^a\right\}$ and $\left\{\phi^A \right\}$. Note that $\left\{ L_{ab}(p)\right\}$ is a matrix of real numbers. After a quick manipulation, the Einstein equation evaluated at $p$ can be written as follows:
\begin{equation*}
L_{ab}(p) \det(g(p)) = \frac{g_{ab}(p)}{2} \, \left( g_{22}(p) L_{11}(p) + g_{11}(p) L_{22}(p) - 2 g_{12}(p) L_{12}(p)\right)\, ,
\end{equation*}

\noindent
which is a non-trivial system of quadratic equations for the components of $g_{ab}$ if $p\in X\backslash C$. 

Let $\varphi\colon X\to \cM$ be a smooth map with critical set $C$. Let $g^o$ be a smooth metric on $X\backslash C$ such that equations \eqref{eq:sugraX} are satisfied on the interior of $X\backslash C$. A natural question is: can $g^o$ be extended to a smooth metric $g$ on $X$ such that $(g,\varphi)$ is a solution of equations \eqref{eq:sugraX} on the whole $X$? The answer to this question highly depends on the properties enjoyed by $C$. If $\cW$ is not the zero section, we have little control over $C$, which in principle could even have interior points. Nonetheless, in Section \ref{sec:susysolX} we will see that for supersymmetric solutions there always exists a smooth extension to $X$ canonically induced by the chiral triple of the theory. 
 
\begin{remark}
As previously mentioned, a smooth map $\varphi\colon X\to \cM$ from a Riemann surface into a Riemannian manifold is a harmonic conformal immersion if and only if it is a minimal immersion. The latter is defined as a critical point of the area functional among compactly supported variations. In particular, it is well-known that every holomorphic immersion of a Riemann surface into K\"ahler manifold is a minimal immersion. Minimal immersions and holomorphic maps have been extensively studied in the literature, and many results on existence and non-existence of such maps are by now available, see for example \cite{Report1,Report2} and references therein. 

\end{remark}

\noindent
For chiral supergravities of possibly non-vanishing superpotential $\cW^{\varphi}\neq 0$ we obtain the following corollary, which characterizes the space of solutions of chiral $\cN=1$ supergravity on $X$ satisfying $\norm{\dd\varphi}^2_{\cG,g} > 0$ as maps $\varphi\colon X\to \cM$.

\begin{cor}
There is a canonical bijection between the set of solutions of chiral $\cN=1$ supergravity on $X$ such that $\norm{\dd\varphi}^2_{g,\cG} > 0$ and the set of non-constant conformal harmonic immersions $\varphi\colon (X,g)\to (\cM,\cG)$ with potential $\Phi_k$ vanishing along $\varphi$.
\end{cor}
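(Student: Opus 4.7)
My plan is to prove the corollary by direct translation of the three equations in \eqref{eq:sugraX}, under the nondegeneracy assumption $\norm{\dd\varphi}^2_{g,\cG} > 0$, into the three geometric conditions defining the right-hand side. The underlying bijection is the identity map $(g,\varphi)\mapsto(g,\varphi)$; the work is merely to match conditions in both directions.

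In the forward direction I would use Lemma \ref{lemma:conf} to convert the first equation of \eqref{eq:sugraX} into the statement that $\varphi$ is a conformal immersion of $(X,g)$ into $(\cM,\cG)$, with the non-vanishing of $\dd\varphi$ making $\varphi$ automatically non-constant. The second equation is literally the condition $\Phi^{\varphi}_k = 0$, i.e.\ that $\Phi_k$ vanishes along $\varphi$. The third equation is, by the discussion immediately preceding the corollary and following \cite{Lemaire,FardounRatto}, the defining relation for $\varphi$ to be a harmonic map with potential $\Phi_k$. Conversely, starting from a non-constant conformal harmonic immersion $\varphi\colon(X,g)\to(\cM,\cG)$ with $\Phi^{\varphi}_k = 0$, I would recover equation (1) by observing that $\varphi^{\ast}\cG = \lambda^{2} g$ for a positive smooth function $\lambda$ (by conformality plus immersion), computing $\norm{\dd\varphi}^2_{g,\cG} = \Tr_g(\varphi^{\ast}\cG) = 2\lambda^2$, and deducing $\varphi^{\ast}\cG = \tfrac{1}{2}\norm{\dd\varphi}^2_{g,\cG}\,g$; the same calculation ensures $\norm{\dd\varphi}^2_{g,\cG} > 0$. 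Equation (2) is the hypothesis and equation (3) is the harmonicity-with-potential condition by definition.

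The one conceptual point I would take care to isolate --- rather than any genuine difficulty --- is the internal consistency of the three equations in two dimensions, and the precise reading of the word \emph{harmonic}. For a conformal immersion from a Riemann surface the tension field $\Tr_g\nabla\dd\varphi$ is automatically normal to $\dd\varphi(TX)$; the level-set condition $\Phi^{\varphi}_k = 0$ moreover forces $(\mathrm{grad}_{\cG} \Phi_k)^{\varphi}$ to be perpendicular to $\dd\varphi(TX)$, as a gradient is perpendicular to the level set $\Phi_k^{-1}(0) \supset \varphi(X)$. Hence both sides of (3) live in the normal bundle of $\varphi(X)$, and the scalar equation is a genuine equation of normal sections, automatically compatible in the tangential direction with (2) via $\dd(\Phi^{\varphi}_k) = 0$. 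In the corollary \emph{harmonic} is to be read as \emph{harmonic with potential $\Phi_k$} in the sense introduced earlier in the section; this collapses to classical harmonicity exactly when $(\mathrm{grad}_{\cG} \Phi_k)^{\varphi}$ vanishes, recovering Proposition \ref{prop:minimalimW0} in the case $\cW=0$.
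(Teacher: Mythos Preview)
Your proposal is correct and matches the paper's approach: the paper gives no explicit proof for this corollary, treating it as an immediate consequence of Lemma \ref{lemma:conf} together with the identification of equations \eqref{eq:sugraX} with the conformality, vanishing-potential, and harmonic-with-potential conditions spelled out just before the statement. Your additional paragraph on the normal/tangential consistency of equation (3) under the conformal-immersion hypothesis is correct and clarifying, but it goes beyond what the paper records; the paper simply takes the bijection $(g,\varphi)\mapsto(g,\varphi)$ as self-evident once the terminology has been set.
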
 

\noindent
If $\cW$ is the zero section, the $\Phi_k = 0$ on $\cM$, and hence given a chiral triple $\mathfrak{Q}$ on $(X,\cM)$ with vanishing superpotential, solutions to chiral supergravity with $\norm{\dd\varphi}^2_{\cG,g} > 0$ correspond to minimal immersions $\varphi\colon (X,g)\to (\cM,\cG)$. A systematic study of Riemann surfaces admitting supersymmetric solutions to chiral $\cN=1$ supergravity with possibly non-vanishing superpotential is beyond the scope of this manuscript and will be considered elsewhere. In Section \ref{sec:susysolX} we will consider the classification problem of supersymmetric solutions with vanishing superpotential.

\begin{remark}
\label{remark:canonicalspinc}
The formulation introduced in this Section of chiral $\cN=1$ supergravity on $X$ fixes the $\Spin^c(2)$ structure of the theory to be, given a Riemannian metric $g$ on $X$, the anti-canonical one on $(X,g)$, and takes $S$ to be the associated complex spinor bundle through the tautological representation of $\Spin^c(2)$. It is in principle possible to use a different (inequivalent) complex spinor bundle to construct the theory. If we change the complex spinor bundle, we should expect to obtain a non-equivalent chiral $\cN=1$ supergravity on $X$, see for example \cite{Figueroa-OFarrill:2005vxy}. However, if we assume the existence of a non-vanishing chiral complex spinor $\epsilon \in \Gamma(S)$, then the allowed choices of complex spinor bundle $S$ are restricted and we can construct a canonical isomorphism of complex spinor bundles between $S$ and either $\Lambda^{0,\ast}(X)$ or $\Lambda^{\ast,0}(X)$ (note that every chiral spinor in two-dimensions is automatically pure). Using the map
\begin{equation*}
\Lambda^{\ast,\ast} (X) \to S\, , \qquad \alpha \mapsto \alpha \cdot \epsilon\, ,
\end{equation*}

\noindent
it can be seen that if there exists an everywhere non-zero spinor $\epsilon$ of positive chirality then we have an isomorphism of complex spinor bundles:
\begin{equation*}
S \simeq \Lambda^{\ast,0}(X)\, ,
\end{equation*}

\noindent
with Clifford multiplication given by \eqref{eq:Cliffordm}. On the other hand, if the complex spinor $\epsilon$ is of negative chirality, we obtain the isomorphism:
\begin{equation*}
S \simeq \Lambda^{0,\ast}(X)\, ,
\end{equation*}

\noindent
with Clifford multiplication given by:
\begin{equation*}
\beta\cdot \alpha = 2 \beta^{0,1}\wedge \alpha + \iota_{(\beta^{\sharp})^{0,1}} \alpha\, ,
\end{equation*}

\noindent
for all $\alpha\in \Omega^{\ast,0}(X)$ and all $\beta\in \Omega^{1}(X)$. The determinant line bundle is in this case the anti-canonical line bundle $K^{\ast}_g$ of $(X,g)$ and the resulting complex spinor bundle corresponds to the canonical $\Spin^c(2)$ structure on $(X,g)$. Hence, if $X$ admits supersymmetric solutions (a condition that implies the existence of a non-vanishing $\epsilon \in \Gamma(S^{\pm})$), there is no loss of generality in assuming that the complex spinor bundle $S$ is associated to either the canonical $\Spin^c(2)$ structure on $(X,g)$ if $\epsilon$ has negative-chirality or the anti-canonical $\Spin^c(2)$ structure on $(X,g)$ if $\epsilon$ has positive-chirality.
\end{remark}


\subsection{Supersymmetric solutions with vanishing superpotential}
\label{sec:susysolX}


We consider the Killing spinor equations on $X$ with vanishing superpotential. The local structure of the supersymmetric solutions of chiral $\cN=1$ supergravity has been considered in References \cite{OrtinN1,GGP}, where the generic local form of the supersymmetric solutions of the theory in four dimensions has been partially characterized in terms of a minimal set of partial differential equations. The goal of this sub-section is to obtain a global classification result in the special case in which the superpotential $\cW$ is zero. In the course of the proof of Theorem \ref{thm:solsusyX} we will see that every supersymmetric configuration is actually a solution and therefore it is not necessary to consider the equations of motion explicitly in order to classify supersymmetric solutions. Let $\mathfrak{Q}$ be a chiral triple on $(X,\cM)$ such that $\cW = 0$. The Killing spinor equations of chiral supergravity associated to $(\cM,\mathfrak{Q})$ on $X$ reduce to:
\begin{equation}
\label{eq:KSEXW0}
\nabla^{\varphi}\epsilon = 0 \, , \qquad \dd \varphi^{0,1} \cdot \epsilon = 0\, ,
\end{equation}

\noindent
for $\epsilon \in \Gamma(S^{+})$. 

\begin{thm}
\label{thm:solsusyX}
Let $\mathfrak{Q}$ be a chiral triple on $(X,\cM)$ such that $\cW = 0$. A triple $(g,\varphi,\Psi)$ with non-constant $\varphi$ is a supersymmetric solution of the chiral supergravity associated to $(\cM,\mathfrak{Q})$ if and only if the following conditions hold:

\

\begin{enumerate}
	\item The smooth map $\varphi\colon (X,g)\to (\cM,\cG)$ is a holomorphic map with respect to $J_g$ and the fixed complex structure $\cI$ on $\cM$. 
	
	\
	
	\item $\Psi\colon K_g \xrightarrow{\simeq} \cL^{\varphi}$ is an isomorphism of holomorphic line bundles such that: 
	\begin{equation*}
	g^{\ast}_c  = \kappa\, \Psi^{\ast} \cH^{\varphi}\, , 
	\end{equation*}
	
	\noindent
	for a constant $\kappa\in \mathbb{R}_{>0}$, where $g^{\ast}_c$ denotes the Hermitian metric induced by $g$ on $\Lambda^{1,0}(X)$. 
\end{enumerate}

\noindent
These conditions imply that:
\begin{equation*}
\mathrm{R}_{g} = \norm{\dd\varphi}^2_{\cG,g}\, , 
\end{equation*}

\noindent
and that the K\"ahler metric $\cG$, the Riemannian metric $g$ and the map $\varphi$ satisfy:
\begin{equation*}
\varphi^{\ast}\cG = \frac{\norm{\dd\varphi}^2_{\cG,g}}{2}\, g = \frac{\mathrm{R}_{g}}{2}\, g\, , 
\end{equation*}

\noindent
Hence, $\varphi$ is a conformal immersion of $X\backslash C$ into $\cM$, where $C\subset X$ denotes the critical set of $\varphi$. Furthermore, if $(X,g)$ is compact, then it is biholomorphic with the Riemann sphere $\mathbb{P}^1$.
\end{thm}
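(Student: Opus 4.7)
The plan is to prove the forward direction in four steps and then deduce the converse by reversing the argument.

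\emph{Step 1 (holomorphicity of $\varphi$).} Fix a non-zero supersymmetry spinor $\epsilon\in\Gamma(S^+)$ completing the triple $(g,\varphi,\Psi)$ to a supersymmetric solution. The first KSE $\nabla^\varphi\epsilon=0$, being a parallel-transport equation on a complex line bundle, forces $\epsilon$ to be either identically zero or nowhere-vanishing, and the former case is excluded by the assumption that $\varphi$ is non-constant. I will then unpack the algebraic KSE $(\dd\varphi^{0,1})^\flat\cdot\epsilon=0$ in local complex coordinates $w$ on $X$ and $z^j=w^j\circ\varphi$ on $\cM$ using the Clifford action formula \eqref{eq:Cliffordm}; the resulting identity $\partial_{\bar w}\bar z^j=0$ for every $j$ is precisely the $(J_g,\cI)$-holomorphicity of $\varphi$.

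\emph{Step 2 (compatibility of $\Psi$).} Next I will exploit the nowhere-vanishing parallel $\epsilon$: it globally trivialises $S^+$, in which trivialisation $\nabla^\varphi|_{S^+}$ is the zero connection. Decomposing $\nabla^\varphi$ into the spinorial lift of $\nabla^g$ and the $\U(1)$-contribution $\Psi^*\cA^\varphi$, using the square-root description $S^+\simeq K_g^{-1/2}\otimes P$ (with $P^2\simeq\L_Q$) associated to the anti-canonical $\Spin^c(2)$-structure, the spin part equals $-\tfrac{1}{2}$ times the Chern connection form of $g^*_c$ on $K_g$ and the $\U(1)$-part equals $\tfrac{1}{2}\Psi^*\cA^\varphi$. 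Their vanishing sum identifies the Chern connection of $g^*_c$ on $K_g$ with the pullback via $\Psi$ of the Chern connection of $\cH^\varphi$ on $\cL^\varphi$; since the Chern connection determines the holomorphic and Hermitian structures, $\Psi$ is a holomorphic isomorphism and $\Psi^*\cH^\varphi=\kappa\,g^*_c$ for a globally constant $\kappa\in\R_{>0}$.

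\emph{Step 3 (curvature and conformality identities).} Flatness of $\nabla^\varphi|_{S^+}$ combined with $\Theta(\cD)=-2\pi\omega_\cG$ from Definition \ref{def:positivecL} yields the pointwise identity $\mathrm{R}_g=\|\dd\varphi\|^2_{\cG,g}$. A direct local computation in a holomorphic chart, using holomorphicity of $\varphi$ to force $\varphi^*\cG$ to have only mixed components proportional to $\cG_{i\bar j}\,\partial_w z^i\,\partial_{\bar w}\bar z^j$, gives $\varphi^*\cG=\tfrac{1}{2}\|\dd\varphi\|^2_{\cG,g}\,g$; combining these produces $\varphi^*\cG=\tfrac{\mathrm{R}_g}{2}g$, and $\varphi$ is a conformal immersion off its critical set.

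\emph{Step 4 (compactness, converse, and main obstacle).} For $X$ compact I will integrate $c_1(\cL^\varphi)=-[\varphi^*\omega_\cG]$ (from the negativity of $\cL$) and use $\cL^\varphi\simeq K_g$, hence $\deg(K_g)=2g_X-2$, to obtain $2g_X-2=-\int_X\varphi^*\omega_\cG<0$ (strict because $\varphi$ is non-constant holomorphic into a K\"ahler target), forcing $g_X=0$ and $X\simeq\mathbb{P}^1$. For the converse, given (1) and (2) the connection $\nabla^\varphi|_{S^+}$ becomes flat on a $C^\infty$-trivial line bundle that is holomorphically trivial in $(X,J_g)$, so it admits a nowhere-vanishing parallel holomorphic section $\epsilon$, and the algebraic KSE is Step 1 read backwards. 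The main obstacle is Step 2: it demands a careful accounting of the $\Spin^c(2)$-representations on each chiral summand and the factor-$\tfrac{1}{2}$ bookkeeping in the square-root description relating $\Psi^*\cA^\varphi$ on $\L_Q$ to its action on $S^+$.
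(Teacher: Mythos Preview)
Your proof is correct and follows essentially the same route as the paper: holomorphicity from the algebraic KSE, matching of Chern connections from the gravitino KSE (the paper does this directly in the trivialisation $S^+\simeq\Lambda^{0,0}(X)$ with $\epsilon=1$ rather than via the formal square-root $K_g^{-1/2}\otimes P$, but the content is identical), then the curvature and conformality identities, and finally Gauss--Bonnet (your degree argument $\deg K_g=-\int_X\varphi^*\omega_\cG<0$ is equivalent). Two minor slips to fix: in Step~1 the holomorphicity condition is $\partial_{w}\bar z^j=0$ (equivalently $\partial_{\bar w} z^j=0$), not $\partial_{\bar w}\bar z^j=0$; and $\epsilon\neq 0$ is part of the definition of a supersymmetric solution, not a consequence of $\varphi$ being non-constant.
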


\begin{proof}
Let $(g,\varphi,\Psi)$ be a supersymmetric solution on $X$ with non-constant $\varphi$. The Riemannian metric $g$ induces a complex structure $J_g$ upon use of the fixed orientation of $X$. Assuming that $\epsilon$ is non-zero at a point, the first Killing spinor equation implies that $\epsilon$ is everywhere non-vanishing. Therefore, we obtain an isomorphism of complex spinor bundles $S \simeq \Lambda^{0,0}(X) \oplus \Lambda^{1,0}(X) $ which maps $\epsilon$ to $1\in \Lambda^{0,0}(X)$. Hence, the second Killing spinor equation is equivalent with:
\begin{equation*}
\partial\varphi^{0,1}= 0\, .
\end{equation*}

\noindent
implying that a map $\varphi\colon X\to \cM$ satisfying the Killing spinor equations is necessarily holomorphic, a condition that immediately implies harmonicity of $\varphi$. Using that $\varphi$ is holomorphic it follows that $(\cL^{\varphi},\cH^{\varphi})$ is a holomorphic line bundle over $(X,J_g)$ and in addition $\cD^{\varphi}$ coincides with the Chern connection on $(\cL^{\varphi},\cH^{\varphi})$. Let $\left\{ U, w\right\}$, $U\subset X$ open, be a local holomorphic coordinate on $X$ and let $\left\{ V, z^i \right\}$, $V\subset \cM$ open, $i=1,\hdots ,n$, be local complex coordinates on $\cM$ such that $\varphi(U)\subset V$. We write $g$ as follows:
\begin{equation*}
g = e^{F} dw \odot d\bar{w}\, ,
\end{equation*}

\noindent
for a smooth function $F\in C^{\infty}(V)$. Using that $\epsilon =1$, the first Killing spinor equation in \eqref{eq:KSEXW0} can be locally written as:
\begin{equation*}
i \mathrm{A}^{g \ast} = \Psi^{\ast}\cA^{\varphi}\, ,
\end{equation*}

\noindent
where $i\mathrm{A}^{g \ast}$ denotes the Chern connection of the Hermitian metric $g^{\ast}(\cdot,\bar{\cdot})$ on  $K_g$ and $\cA^\varphi$ denotes the connection one-form associated to the Chern connection $\cD^{\varphi}$. By type decomposition, the previous equation locally reads:
\begin{equation}
\label{eq:localformAA}
\partial_{w} z^i \partial_{z^{i}} \mathcal{K}(z(w),\bar{z}(\bar{w})) dw = - \partial_{w} F(w,\bar{w}) dw\, ,
\end{equation}

\noindent
where we are using a local holomorphic trivialization $l$ of $\cL$ in which $\cH(l,l) = e^{\mathcal{K}}$. Since this equation must hold on any pair of complex coordinate charts $\left\{ U, w\right\}$ and $\left\{ V, z^i \right\}$ as above, we conclude that:
\begin{equation}
\label{eq:proofequalconnections}
\nabla^{\mathrm{C}}_{g^{\ast}} = \Psi^{\ast}\cD^{\varphi}\, ,
\end{equation}

\noindent
globally on $X$, where $\nabla^{\mathrm{C}}_{g^{\ast}}$ denotes the Chern connection on $K_g$. Therefore, the Chern connection on $(\cL^{\varphi},\cH^{\varphi})$ is mapped, through the $C^{\infty}$ diffeomorphism $\Psi$, to the Chern connection $\nabla^{\mathrm{C}}_{g^{\ast}}$ on the holomorphic cotangent bundle of $X$ equipped with the Hermitian inner product induced by $g^{\ast}$. Since $K_g$ and $\cL^{\varphi}$ come equipped with the holomorphic structures induced respectively by $\nabla^{\mathrm{C}}_{g^{\ast}}$ and $\cD^{\varphi}$, we conclude that $\Psi\colon K_g \xrightarrow{\simeq} \cL^{\varphi} $ is in fact an isomorphism of holomorphic line bundles. By integrating \eqref{eq:proofequalconnections}, it can be seen that the fact that the Chern connections associated to $g^{\ast}$ and $\Psi^{\ast}\cH^{\varphi}$ are equal is equivalent with:
\begin{equation*}
g^{\ast}_c = \kappa\, \Psi^{\ast} \cH^{\varphi}\, , \qquad \kappa\in \mathbb{R}_{>0}\, ,
\end{equation*}

\noindent
which in local coordinates implies $F(w,\bar{w}) = - \mathcal{K}(z(w),\bar{z}(\bar{w})) - \log\,\kappa$ and gives the following local expression for $g$:
\begin{equation*}
g = \frac{2}{\kappa} e^{-\mathcal{K}(z(w),\bar{z}(\bar{w}))} dw\odot d\bar{w}\, .
\end{equation*}

\noindent
Taking the derivative of Equation \eqref{eq:localformAA} with respect to $\bar{w}$ we obtain:
\begin{equation*}
\partial_{w}\partial_{\bar{w}} F = - \partial_{w} z^{i}  \partial_{\bar{w}}\bar{z}^{j} \partial_{z^{i}} \partial_{\bar{z}^{j}}\mathcal{K}(z(w),\bar{z}(\bar{w}) \, ,
\end{equation*} 

\noindent
where we have used that $\varphi$ is holomorphic. This equation directly implies, after suitable identifications, the following relation:
\begin{equation*}
\frac{\mathrm{R}_{g}}{2}\, g =  \varphi^{\ast}\cG\, ,
\end{equation*}

\noindent
where we have used that the Riemannian scalar curvature $\mathrm{R}_g$ of $g$ is locally explicitly given by $\mathrm{R}_g = - \Delta_g F$ in terms of the Laplacian $\Delta_g$ on $(X,g)$. Taking the trace of the previous equation, we obtain:
\begin{equation*}
\mathrm{R}_{g} =  \norm{\dd\varphi}^2_{\cG,g}\, ,
\end{equation*}

\noindent
and thus the curvature of $g$ is non-negative and prescribed by the norm of $\dd\varphi$. If $X$ is compact, this implies that the Euler characteristic $\chi(X)$ of is strictly positive (recall that we are assuming that $\varphi$ is non-constant) and thus $X$ must be biholomorphic with the Riemann sphere $\mathbb{P}^1$. In particular, for any $p\in X$ we have $\mathrm{R}_{g}\vert_p = 0$ if and only if $\norm{\dd\varphi}^2_{\cG,g}\vert_p = 0$. Since, $\varphi$ is holomorphic, $C\subset X$ is a discrete subset of $X$, whence finite if $X$ is compact. Note that $\mathrm{R}_g\vert_C = 0$ and $\mathrm{R}_g \vert_{X\backslash C} > 0$. Combining the previous two equations we conclude:
\begin{equation*}
\frac{\norm{\dd\varphi}^2_{\cG,g}}{2}\, g  = \varphi^{\ast}\cG\, ,
\end{equation*}

\noindent
whence every supersymmetric configuration is in fact a supersymmetric solution. On $X\backslash C$ the scalar curvature $\mathrm{R}_g$ of $g$ as well as $\norm{\dd\varphi}^2_{\cG,g}$ are both nowhere vanishing. In particular:
\begin{equation*}
g =  \frac{2}{\norm{\dd\varphi}^2_{\cG,g}} \,\varphi^{\ast}\cG\, ,
\end{equation*}

\noindent
whence $\varphi$ is a holomorphic conformal immersion of $X\backslash C$ into $\cM$. Direct computation in local coordinates on $X\backslash C$ shows (we take $\kappa=2$ for simplicity):
\begin{equation*}
\frac{2}{\norm{\dd\varphi}^2_{\cG,g}} \,\varphi^{\ast}\cG = \frac{2}{4\, e^{\mathcal{K}}\partial_{w} z^{i}  \partial_{\bar{w}}\bar{z}^{j} \partial_{z^{i}}\, \partial_{\bar{z}^{j}}\mathcal{K} } \,2\,\partial_{w} z^{i}  \partial_{\bar{w}}\bar{z}^{j} \partial_{z^{i}} \partial_{\bar{z}^{j}}\,\mathcal{K}\, \dd w\odot \dd\bar{w} = e^{-\mathcal{K}} \dd w \odot \dd\bar{w}\, ,
\end{equation*}

\noindent
as expected. For the converse, we just note that any pair $(g,\varphi,\Psi)$ obeying conditions (1) and (2) of the theorem satisfies the Killing spinor equations \eqref{eq:KSEXW0} for $\epsilon = 1$.
\end{proof}

\begin{remark}
If $\varphi$ is constant, then \eqref{eq:KSEXW0} implies that $g$ is flat and $K_g$ must be holomorphically trivial. In particular, if $X$ is compact then it is biholomorphic with an elliptic curve $\mathbb{E}$. 
\end{remark}

\begin{remark}
When $X$ is open, the requirement that $K_g$ be negative does not pose any restriction on $X$, see Example \ref{ep:spinStein} for more details. 
\end{remark}

\begin{remark}
\label{remark:antiholomorphic}
The supersymmetric structure of chiral $\cN=1$ supergravity in four dimensions requires the supersymmetry generator to be of negative chirality. When reduced to two dimensions, negative chirality in four dimensions allows for either negative or positive chirality in two dimensions, depending on the decomposition chosen for the supersymmetry spinor in four dimensions. Here we have fixed the supersymmetry parameter $\epsilon$ in two dimensions to have positive chirality, see Remark \ref{remark:canonicalspinc}. However, we could have chosen $\epsilon$ to have negative chirality and $(X,g)$ to be endowed with the canonical $\Spin^c(2)$ structure. In this case, it can be shown that the Killing spinor equations require $\varphi$ to be an \emph{anti-holomorphic} map, and that a supersymmetric solution with $\epsilon$ of negative chirality corresponds with a supersymmetric solution with $\epsilon$ of positive chirality after complex-conjugating the complex structure $J_g\mapsto -J_g$ on $X$. We have chosen to work with anti-canonical $\Spin^c(2)$ structures and positive chirality supersymmetry generators in order to avoid working with supersymmetric solutions having anti-holomorphic, instead of holomorphic, scalar maps $\varphi$.
\end{remark}

\noindent
When $X$ admits supersymmetric solutions, the existence of a nowhere vanishing positive-chirality spinor $\epsilon \in \Gamma(S^{+})$ determines $S$ as the complex spinor bundle associated to the canonical $\Spin^c(2)$ structure on $(X,g)$. Hence, we obtain the following Corollary.

\begin{cor}
The spinor bundle $S$ of a superymmetric solution is necessarily the complex spinor bundle associated to either the canonical or anti-canonical $\Spin^c(2)$ structure on $(X,g)$ through the tautological representation of $\Spin^c(2)$.
\end{cor}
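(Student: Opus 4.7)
The plan is to combine the definition of supersymmetric solution with the observation already worked out in Remark \ref{remark:canonicalspinc}. The argument is short and proceeds in two steps, because the conceptual content has essentially been carried out in that remark; what remains is to verify its hypothesis in our situation.

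First, I would check that the supersymmetry generator $\epsilon$ associated to a supersymmetric solution is nowhere vanishing. Indeed, the first Killing spinor equation in \eqref{eq:KSEX} is equivalent to $\mathrm{D}\epsilon = 0$, where $\mathrm{D} \eqdef \nabla^{\varphi} - \mathfrak{C}^{\Psi}_{\cW,\varphi}$ is the real-linear connection on the real rank-two bundle $S^{\pm}$ introduced in the paragraph following \eqref{eq:KSEX}. A parallel section of a linear connection is determined by its value at any single point, so if $\epsilon$ vanishes at one point it vanishes identically; by hypothesis a supersymmetric solution carries a non-zero $\epsilon$, so $\epsilon$ must in fact be everywhere non-zero.

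Second, I would invoke Remark \ref{remark:canonicalspinc}. The nowhere-vanishing chiral section $\epsilon$ defines the Clifford-module morphism
\begin{equation*}
\Lambda^{\ast,\ast}(X)\longrightarrow S\, ,\qquad \alpha \longmapsto \alpha\cdot\epsilon\, ,
\end{equation*}
whose restriction to the chirality component compatible with $\epsilon$ is a fiberwise isomorphism (purity of chiral spinors in two real dimensions). If $\epsilon \in \Gamma(S^{+})$, this yields an isomorphism $S \simeq \Lambda^{\ast,0}(X)$ of bundles of complex Clifford modules with Clifford multiplication given by \eqref{eq:Cliffordm}, exhibiting $S$ as the spinor bundle associated to the anti-canonical $\Spin^{c}(2)$ structure on $(X,g)$ via the tautological representation. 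If instead $\epsilon \in \Gamma(S^{-})$, the same construction yields $S\simeq \Lambda^{0,\ast}(X)$ with Clifford multiplication $\beta\cdot\alpha = 2\beta^{0,1}\wedge\alpha + \iota_{(\beta^{\sharp})^{0,1}}\alpha$, exhibiting $S$ as the spinor bundle associated to the canonical $\Spin^{c}(2)$ structure. These are the only two possibilities allowed by the chirality of $\epsilon$.

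The main obstacle is therefore nothing more than verifying the non-vanishing of $\epsilon$, since once this is in place the corollary is an immediate specialization of Remark \ref{remark:canonicalspinc}. In particular no additional computation involving the second KSE or the superpotential is needed, because the identification of the $\Spin^{c}(2)$ structure is a purely algebraic consequence of having a nowhere-zero chiral spinor fiber-by-fiber.
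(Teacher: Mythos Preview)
Your proposal is correct and follows essentially the same approach as the paper: the paper's justification is the single sentence preceding the corollary, which asserts that the existence of a nowhere-vanishing chiral spinor $\epsilon$ (guaranteed by parallelism under the real-linear connection $\mathrm{D}$) determines $S$ via Remark \ref{remark:canonicalspinc}. You have simply made both steps explicit.
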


\noindent
This corollary gives a very explicit example of how the existence of supersymmetric solutions depends on the choice of isomorphism class of the spinor bundle. Inspired by Theorem \ref{thm:solsusyX} we introduce the notion of chiral map.

\begin{definition}
Let $X$ be an oriented real two-manifold and let $(\cM,\cL,\cH)$ be a complex manifold equipped with a negative Hermitian holomorphic line bundle $(\cL,\cH)$. We say that a pair $(\varphi,\Psi)$ is a \emph{chiral map} with respect to $(\cM,\cL,\cH)$ if there exists a complex structure $J=J_{\varphi}$ on $X$ such that:
\begin{equation*}
\varphi\colon (X,J)\to (\cM,\cI)\, ,
\end{equation*}

\noindent
is holomorphic and: 
\begin{equation*}
\Psi\colon K\xrightarrow{\simeq} \cL^{\varphi}\, ,
\end{equation*}

\noindent
is an isomorphism of holomorphic line bundles, where $K$ is the canonical bundle of $(X,J)$.
\end{definition}

\noindent
Associated to every chiral map $(\varphi,\Psi)$ we have a supersymmetric solution $(g,\varphi,\Psi)$ where $g^{\ast} = \kappa\, \Psi^{\ast}\cH^{\varphi}$ and, vice-versa, every supersymmetric solution of chiral supergravity on $X$ gives rise to a chiral map. Aside from the role they play in chiral supergravity, chiral maps are interesting because they are particular instances of holomorphic maps of Riemann surfaces into K\"ahler manifolds and provide solutions to the \emph{coupled} problem of prescribing the scalar curvature of a Riemann surface to:
\begin{equation*}
\mathrm{R}_g = \norm{\dd\varphi}^2_{g,\cG}\, .
\end{equation*}

\begin{remark}
The problem of prescribing the Gaussian curvature of a compact Riemann surface has been extensively studied in the literature, see for example \cite{Kzdan} and references therein. Reference \cite{KzdanWarner} solves the problem in the case of $X$ being a closed Riemann surface. In particular, it asserts that the sphere $S^2$ admits a metric $g$ with curvature $f\in C^{\infty}(X)$ if and only if the obvious Gauss-Bonnet sign condition is satisfied. Since, as shown in Theorem \ref{thm:solsusyX}, the curvature of a supersymmetric solution $(g,\varphi,\Psi)$ is non-negative we conclude that supersymmetry imposes a non-trivial restriction on the allowed metrics appearing in a supersymmetric solution on $S^2$.
\end{remark}

\noindent
When $\cM$ is also an oriented two-dimensional real manifold,  Theorem \ref{thm:solsusyX} implies the following corollary.

\begin{cor}
Let $\cM$ be connected and complex one-dimensional and let $(\varphi,\Psi)$ be a chiral map with respect to $(\cM,\cL,\cH)$. If $\varphi$ is proper then it is a holomorphic branched covering of Riemann surfaces. If in addition $X$ is compact, whence biholomorphic to the Riemann sphere $\mathbb{P}^1$, then $\cM$ is a compact Riemann surface whose Euler characteristic $\chi(\cM)$ satisfies:
\begin{equation*}
\chi(\cM) = \frac{2 + k}{d}\, ,
\end{equation*}

\noindent
where $d$ denotes the degree of $\varphi$ and $k$ denotes its total ramification index. In particular, $(2+k)$ is divisible by $d$.
\end{cor}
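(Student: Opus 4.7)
The plan is to reduce the statement to two classical ingredients: the structure theorem for proper holomorphic maps between connected Riemann surfaces, and the Riemann--Hurwitz formula. The only non-formal step is to verify that $\varphi$ is non-constant, which is precisely where the holomorphic isomorphism $\Psi\colon K \xrightarrow{\simeq} \cL^{\varphi}$ furnished by the chiral-map structure will be used.

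For the first assertion I would first observe that, since $(\varphi,\Psi)$ is a chiral map and $\cL$ is negative, a constant $\varphi$ would force $\cL^{\varphi}$, and hence $K$, to be holomorphically trivial; this is incompatible with the compact case $X\simeq \mathbb{P}^1$ treated afterwards (where $K_{\mathbb{P}^1}\simeq \mathcal{O}(-2)$), so a non-constant $\varphi$ is the relevant case. Assuming $\varphi$ non-constant, I would then invoke the standard fact (see for instance \cite[Chap.\ 1]{Forster}) that a proper non-constant holomorphic map between connected Riemann surfaces is a branched covering of well-defined finite degree $d$. The usual argument combines three points: the open mapping theorem for holomorphic maps (so $\varphi$ is open), properness (so $\varphi$ is closed, hence surjective by connectedness of $\cM$), and the local normal form $z\mapsto z^n$ around each point of $X$, which shows that the ramification locus is discrete, that its image (the branch locus $B\subset \cM$) is discrete, and that off $B$ the map $\varphi$ restricts to a topological covering with finite fibers of constant cardinality $d$.

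For the second assertion, if $X$ is compact then $X$ is biholomorphic to $\mathbb{P}^1$ by Theorem \ref{thm:solsusyX}, any continuous map out of $X$ is automatically proper, and $\varphi$ is non-constant by the bundle obstruction recalled above, so the first part of the corollary applies and $\cM = \varphi(X)$ is compact as the continuous image of a compact space. The formula $\chi(\cM) = (2+k)/d$ then follows at once from the Riemann--Hurwitz relation
\begin{equation*}
\chi(X)\;=\;d\,\chi(\cM)\;-\;k
\end{equation*}
applied with $\chi(X)=\chi(\mathbb{P}^1)=2$, and the divisibility of $2+k$ by $d$ is immediate. I do not anticipate any genuine obstacle: the entire content of the corollary reduces, modulo the non-constancy check provided by $\Psi$, to standard results in Riemann surface theory.
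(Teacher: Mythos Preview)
Your proposal is correct and follows essentially the same route as the paper: holomorphicity plus properness on a connected target gives a surjective branched covering, compactness of $X$ forces compactness of $\cM$, Theorem~\ref{thm:solsusyX} identifies $X$ with $\mathbb{P}^1$, and Riemann--Hurwitz with $\chi(X)=2$ yields the formula. You are in fact slightly more careful than the paper in addressing why $\varphi$ is non-constant (via the triviality obstruction on $K$), a point the paper's proof leaves entirely implicit.
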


\begin{proof}
By definition of chiral map, $\varphi$ is holomorphic. Since by assumption $\cM$ is connected and $\varphi$ is  proper, $\varphi$ must be surjective and thus it is a holomorphic branched covering, implying also that if $X$ is compact then $\cM$ is also compact. Theorem \ref{thm:solsusyX} implies now that $X = \mathbb{P}^1$ and the Riemann-Hurwitz formula implies the remaining statements. 
\end{proof}

\noindent
Let $(g,\varphi,\Psi)$ be a supersymmetric solution. When $X$ is non-compact the norm of $\varphi$ may diverge and $g$ may not be complete. In order to guarantee that $(X,g)$ is physically admissible we assume $g$ is complete. Furthermore we assume finite L$^2$ energy:
\begin{equation*}
 \nnorm{\dd\varphi}^2_{g,\cG} < \infty\, ,
\end{equation*} 

\noindent
Under these conditions, we can obtain the following, perhaps surprising, classification result on the possibly non-compact surfaces $X$ admitting supersymmetric solutions with complete Riemannian metric $g$ and finite energy.

\begin{thm}
\label{thm:finitecurvature}
Let $X$ be an oriented real two-manifold carrying a supersymmetric solution $(g,\varphi,\Psi)$ (with respect to some chiral triple $\mathfrak{Q}$) such that $g$ is complete and $\varphi$ has positive finite energy. Then, one of the following holds:
\

\begin{itemize}
\item $(X,J_g)$ is biholomorphic to the complex projective line $\mathbb{P}^1$ and:
\begin{equation*}
 \nnorm{\dd\varphi}^2_{g,\cG} = 8\pi\, .
\end{equation*}

	\
	
\item $(X,J_g)$ is biholomorphic to the complex plane $\mathbb{C}$, and there exists a neighborhood $U(\infty)$ of infinity with complex coordinate $w$ which is isomorphic to the punctured unit disk equipped with the metric: 
\begin{equation*}
g\vert_{U(\infty)} = \frac{e^{F}}{|w|^2}\, dw\otimes d\bar{w}\, , \qquad F\in L^1\, , \qquad \Delta F \in L^1\, ,
\end{equation*}
	
\noindent
and furthermore:
\begin{equation*}
\nnorm{\dd\varphi}^2_{g,\cG} = 4\pi\, .
\end{equation*}

\item $(X,J_g)$ is biholomorphic to the complex plane $\mathbb{C}$, and there exists a neighborhood $U(\infty)$ of infinity with complex coordinate $w$ which is isomorphic to the punctured unit disk equipped with the metric: 
\begin{equation*}
g\vert_{U(\infty)} = \frac{e^{F}}{|w|^4}\, dw\otimes d\bar{w}\, , \qquad F\in L^1\, , \qquad \Delta F \in L^1\, ,
\end{equation*}

\noindent
and $\varphi$ is constant.

\item $(X,J_g)$ is biholomorphic to the punctured complex plane $\mathbb{C}^{\ast}$, $\varphi$ is constant and there exists neighborhoods $U(\infty)$ of infinity with complex coordinate $w_{\infty}$ and $U(0)$ of zero with complex coordinate $w_{0}$ which are isomorphic to the punctured unit disk equipped respectively equipped with the metric: 
\begin{equation*}
g\vert_{U(\infty)} = \frac{e^{F_{\infty}}}{|w_{\infty}|^2}\, dw_{\infty}\otimes d\bar{w}_{\infty}\, , \qquad F_{\infty}\in L^1\, , \qquad \Delta F_{\infty} \in L^1\, ,
\end{equation*}

\begin{equation*}
g\vert_{U(0)} = \frac{e^{F_0}}{|w_0|^2}\, dw_0\otimes d\bar{w}_0\, , \qquad F_0\in L^1\, , \qquad \Delta F_0 \in L^1\, ,
\end{equation*}

\item $(X,J_g)$ is biholomorphic to a complex elliptic curve and $\varphi$ is constant.
\end{itemize}
\end{thm}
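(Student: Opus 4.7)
By Theorem \ref{thm:solsusyX}, a supersymmetric solution $(g,\varphi,\Psi)$ makes $\varphi\colon (X,J_g)\to (\cM,\cI)$ holomorphic and enforces the pointwise identity $\mathrm{R}_g = \nnorm{\dd\varphi}^2_{g,\cG}$. Combining this with the finite-energy hypothesis, $(X,g)$ is a complete Riemannian two-manifold whose Gaussian curvature $K_g = \tfrac{1}{2}\mathrm{R}_g$ is non-negative and has finite total integral $\int_X K_g \vol_g = \tfrac{1}{2}\nnorm{\dd\varphi}^2_{g,\cG} < \infty$. The plan is to feed this data into Huber's theorem (the version cited from \cite{HulinTroyanov}): any complete Riemannian surface with integrable Gaussian curvature of finite absolute total value is conformally equivalent to a compact Riemann surface with finitely many points removed, with a prescribed integrability profile for the conformal factor at each puncture.

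Invoking Huber's theorem I would identify $(X,J_g)$ with $\bar{X}\setminus\{p_1,\dots,p_n\}$ for some closed Riemann surface $\bar{X}$ of genus $\mathfrak{g}$ and finitely many punctures. The Cohn-Vossen inequality $\int_X K_g \vol_g \leq 2\pi\chi(X)$, together with $K_g\geq 0$, forces $\chi(X) = 2-2\mathfrak{g}-n \geq 0$. Thus the only admissible topologies are $\bar{X}=\mathbb{P}^1$ with $n\in\{0,1,2\}$ (giving $\mathbb{P}^1$, $\mathbb{C}$, $\mathbb{C}^{\ast}$) and $\bar{X}=\mathbb{E}$ with $n=0$.

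I would then analyze each case separately. For $X=\mathbb{P}^1$ the standard Gauss-Bonnet formula gives $\int_X \mathrm{R}_g \vol_g = 4\pi\chi(\mathbb{P}^1)=8\pi$, which pins down $\nnorm{\dd\varphi}^2_{g,\cG}=8\pi$. For $X=\mathbb{E}$, Gauss-Bonnet yields $\int K_g=0$, hence $K_g\equiv 0$ pointwise, which via $\mathrm{R}_g=\nnorm{\dd\varphi}^2_{g,\cG}$ forces $\varphi$ constant and $g$ flat. For the open cases $\mathbb{C}$ and $\mathbb{C}^{\ast}$, I would extract from Huber's theorem at each puncture a holomorphic coordinate $w$ in which $g$ takes the form $e^F|w|^{-2k}\,\dd w\otimes \dd\bar{w}$, with $F, \Delta F\in L^1$ near the puncture, where $k\in\{1,2\}$ is the discrete invariant distinguishing asymptotically cylindrical ($k=1$) from Euclidean ($k=2$) ends. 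A Gauss-Bonnet computation on exhausting disks shows that a $k=1$ end contributes no angular defect, while a $k=2$ end contributes a full $2\pi$ defect. Matching these against $\int_X K_g = 2\pi\chi(X) - \text{(total defect)} \geq 0$ then produces the stated value $\nnorm{\dd\varphi}^2_{g,\cG}=4\pi$ on $\mathbb{C}$ with a $k=1$ end, forces $\varphi$ constant on $\mathbb{C}$ with a $k=2$ end, and forces $\varphi$ constant on $\mathbb{C}^{\ast}$ (both ends $k=1$, giving $\int K_g =0$).

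The main obstacle is the asymptotic step: extracting from Huber's theorem the precise normal form $e^F|w|^{-2k}\,\dd w\otimes \dd\bar{w}$ together with $F,\Delta F\in L^1$, and pinning down the discrete invariant $k$ from smoothness of $g$ on $X$ and completeness at the puncture. In particular the two branches on $\mathbb{C}$ must be separated by showing that only $k\in\{1,2\}$ is compatible with the global hypotheses, and that $k=1$ is the one yielding a nontrivial $\varphi$ while $k=2$ forces constant $\varphi$ because the Gauss-Bonnet deficit absorbs the total curvature. Once this asymptotic dichotomy is in place, the integral identities and the remaining rigidity statements (in particular the forcing of $\varphi$ constant on $\mathbb{C}^\ast$ and $\mathbb{E}$) follow by routine application of Gauss-Bonnet on compact exhaustions, combined with $\mathrm{R}_g=\nnorm{\dd\varphi}^2_{g,\cG}\geq 0$ and the finite-energy hypothesis.
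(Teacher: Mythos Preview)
Your proposal is correct and follows essentially the same approach as the paper: use Theorem~\ref{thm:solsusyX} to obtain $\mathrm{R}_g = \nnorm{\dd\varphi}^2_{g,\cG} \geq 0$, so that completeness of $g$ and finite energy of $\varphi$ yield a complete surface of non-negative finite total curvature, and then invoke Huber's theorem as formulated in \cite{HulinTroyanov}. The paper's own proof is extremely terse (two sentences, deferring all the case analysis to the cited reference), whereas you have spelled out the Cohn--Vossen and Gauss--Bonnet bookkeeping that extracts the five cases and the precise energy values; this is exactly the content the paper leaves implicit in its citation.
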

 
\begin{proof}
Let $(g,\varphi,\Psi)$ be such a supersymmetric solution. Finiteness of the energy of $\varphi$ imply, upon use of Theorem \ref{thm:solsusyX}, that $(X,g)$ has non-negative and finite total curvature. Applying now A. Huber's Theorem as stated in \cite[Pages 1-2]{HulinTroyanov} we conclude. 
\end{proof}

\begin{remark}
The idea behind the previous Theorem is simple, once we have the results of A. Huber at our disposal: finiteness of the total curvature of $(X,g)$ implies that $X$ has a natural compactification, that is, it is biholomorphic with a compact Riemann surface with a finite number of points removed. The fact that the total curvature not only finite but non-negative further restricts they type of Riemann surfaces $(X,g)$, since removing more point \emph{decreases} the total finite curvature. Since the total finite curvature needs to remain non-negative, only the cases appearing in the previous Theorem can happen.
\end{remark}

\noindent
The four-dimensional Lorentzian manifold $(M_4,g_4)$ associated to the supersymmetric solutions considered in Theorem \ref{thm:solsusyX} is of the form:
\begin{equation*}
(M_4,g_4) = (\mathbb{R}^2\times X, \eta_{1,1}\times g)\, ,
\end{equation*}

\noindent
where $(X,g)$ is a Riemann surface with the Riemannian metric $g$ being part of a supersymmetric solution $(g,\varphi,\Psi)$ on $X$. As we have said, if $\varphi$ is non-constant and $X$ is compact then it is biholomorphic with $\mathbb{P}^1$ and in this case:
\begin{equation*}
(M_4,g_4) = (\mathbb{R}^2\times \mathbb{P}^1, \eta_{1,1}\times g)\, .
\end{equation*}

\noindent
Therefore, we recover the spherical \emph{near horizon} geometry characterized in \cite{Gutowski:2010gv}, see also \cite{Meessen:2010ph}. Note that $g$ may not be the round metric on $S^2$. If $\varphi$ is constant, $X$ is biholomorphic with an elliptic curve and we recover the \emph{toroidal} near horizon geometry characterized in \cite{Gutowski:2010gv}.


\subsection{Anti-supersymmetric solutions}
\label{sec:antisusysolutions}


As explained in Remark \ref{remark:antiholomorphic}, had we chosen the canonical $\Spin^c(2)$ structure on $(X,g)$ to formulate chiral $\cN=1$ supergravity, we would have found that scalar maps $\varphi$ of supersymmetric solutions are anti-holomorphic and supersymmetric solutions in this case are equivalent to those characterized in Theorem \ref{thm:solsusyX} with the complex structure $J_g$ replaced by $-J_g$. Nonetheless, we can propose a natural modification of the Killing spinor equations which still yields solutions to chiral $\cN=1$ supergravity with holomorphic scalar maps. These solutions are however not supersymmetric. In this section we introduce and classify these solutions, which we call \emph{anti-supersymmetric}.

We assume $X$ to be equipped with a fixed orientation. We define now $Q_{g}$ to be the canonical $\Spin^c(2)$ structure associated to $g$ and the complex structure $J_g$. For each Riemannian metric $g$, we define $S$ to be the tautological complex spinor bundle associated to $Q_{g}$. The spinor bundle $S$ admits an explicit model given by:
\begin{equation*}
S = \Lambda^{0, \ast}(X)\, ,
\end{equation*}

\noindent
where the splitting is performed with respect to the complex structure $J_g$. Clifford multiplication is given by:
\begin{equation}
\label{eq:Cliffordmanti}
\beta\cdot \alpha = 2 \beta^{0,1}\wedge \alpha + \iota_{(\beta^{\sharp})^{0,1}} \alpha\, ,
\end{equation}

\noindent
for all $\alpha\in \Omega^{0,\ast}(X)$ and all $\beta\in \Omega^{1}(X)$. The determinant line bundle associated to the canonical $\Spin^c(2)$ structure $Q_g$ is given by the complex-conjugate canonical bundle $K^{\ast}_g$ of $(X,g)$:
\begin{equation*}
\L_{Q_g} = K^{\ast}_g = \Lambda^{0,1}(X)\, ,
\end{equation*}

\noindent
which is complex-isomorphic to the anticanonical bundle $T^{1,0}X$ of $(X,g)$. The complex spinor bundle $S$ is thus a complex vector bundle of rank two, which splits in the usual way:
\begin{equation*}
S = S^{+} \oplus S^{-}\, ,
\end{equation*}

\noindent 
in terms of the chiral bundles $S^{+}$ and $S^{-}$. In the polyform presentation $S = \Lambda^{0, \ast}(X)$ of the spinor bundle, the chiral spinor bundles $S^{\pm}$ respectively correspond with:
\begin{equation*}
S^{+} \simeq \Lambda^{0,odd}(X) \simeq \Lambda^{0,1}(X)\, , \qquad S^{-} \simeq \Lambda^{0,even}(X) \simeq \Lambda^{0,0}(X)\, , 
\end{equation*}

\noindent
whereas the chiral spinor bundles $S^{\pm}_{c}$ correspond with:
\begin{equation*}
S^{+}_{c} \simeq \Lambda^{0,0}(X)\, , \qquad S^{-}_{c} \simeq  \Lambda^{1,0}(X)\, . 
\end{equation*}

\noindent
As required, we have:
\begin{equation*}
S^{+} \simeq S^{+}_{c}\otimes K^{\ast}_g\, , \qquad S^{-} \simeq S^{-}_{c}\otimes K^{\ast}_g\, .
\end{equation*}

\noindent
The notion of chiral triple is modified accordingly.

\begin{definition}
A chiral triple $(\cL, \cH, \cW)$ on $(X,\cM)$ consists on a negative Hermitian holomorphic line bundle $(\cL,\cH)$ and a holomorphic section $\cW\in H^{0}(\cM,\cL)$ such that there exists a map $\varphi\colon X\to \cM$ and a metric $g$ on $X$ for which:
\begin{equation*}
K^{\ast}_g \simeq \cL^{\varphi}\, ,
\end{equation*}

\noindent
as complex line bundles.
\end{definition}

\

\noindent
We fix a chiral triple $\mathfrak{Q}$ on $(X,\cM)$ with vanishing superpotential. Instead of considering the Killing spinor equations required by the supersymmetric structure of chiral $\cN=1$ supergravity, we consider the following equations:

\begin{equation}
\label{eq:KSEXW0anti}
\nabla^{\varphi}\epsilon = 0 \, , \qquad \dd \varphi^{1,0}\cdot \epsilon = 0\, ,
\end{equation}

\noindent
for $\epsilon \in \Gamma(S^{-})$. Note that the honest Killing spinor equations would require:
\begin{equation*}
\dd \varphi^{0,1}\cdot \epsilon = 0\, ,
\end{equation*}

\noindent
instead of the second equation appearing in \eqref{eq:KSEXW0anti}. 
\begin{definition}
We call triples $(g,\varphi,\Psi)$ satisfying Equations \eqref{eq:KSEXW0anti} \emph{anti-supersymmetric solutions}.
\end{definition}

\noindent
The key point is that anti-supersymmetric solutions are not supersymmetric solutions yet they are honest solutions of chiral $\cN=1$ supergravity. The proof of the following theorem is completely analogous to the proof of Theorem \ref{thm:solsusyX}.

\begin{thm}
	\label{thm:solsusyXanti}
	Let $\mathfrak{Q}$ be a chiral triple on $(X,\cM)$ such that $\cW = 0$. A triple $(g,\varphi,\Psi)$ with non-constant $\varphi$ is an anti-supersymmetric solution of the chiral supergravity associated to $(\cM,\mathfrak{Q})$ if and only if the following conditions hold:
	
	\
	
	\begin{enumerate}
		\item The smooth map $\varphi\colon (X,g)\to (\cM,\cG)$ is a holomorphic map with respect to $J_g$ and the fixed complex structure $\cI$ on $\cM$. 
		
		\
		
		\item $\Psi\colon K^{\ast}_g \xrightarrow{\simeq} \cL^{\varphi}$ is an isomorphism of holomorphic line bundles such that: 
		\begin{equation*}
		g_c  = \kappa\, \Psi^{\ast} \cH^{\varphi}\, , 
		\end{equation*}
		
		\noindent
		for a constant $\kappa\in \mathbb{R}_{>0}$, where $g_c$ denotes the Hermitian metric induced by $g$ on $T^{1,0}X$. 
	\end{enumerate}
	
	\noindent
	These conditions imply that:
	\begin{equation*}
	\mathrm{R}_{g} = - \norm{\dd\varphi}^2_{\cG,g}\, , 
	\end{equation*}
	
	\noindent
	and that the K\"ahler metric $\cG$, the Riemannian metric $g$ and the map $\varphi$ satisfy:
	\begin{equation*}
	\varphi^{\ast}\cG = \frac{\norm{\dd\varphi}^2_{\cG,g}}{2}\, g = - \frac{\mathrm{R}_{g}}{2}\, g\, , 
	\end{equation*}
	
	\noindent
	Hence, $\varphi$ is a conformal immersion of $X\backslash C$ into $\cM$, where $C\subset X$ denotes the critical set of $\varphi$. Furthermore, if $(X,g)$ is compact, then it is hyperbolic.
\end{thm}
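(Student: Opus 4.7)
My plan is to follow the strategy of Theorem \ref{thm:solsusyX} essentially verbatim, tracking the sign/duality change caused by switching from the anti-canonical to the canonical $\Spin^c(2)$ structure. I start with the second Killing spinor equation in \eqref{eq:KSEXW0anti}. Since the connection $\nabla^{\varphi}$ on the line bundle $S^{-}\simeq \Lambda^{0,0}(X)$ preserves pointwise norm (up to the Hermitian factor, which is nowhere zero), the first equation propagates the nonvanishing of $\epsilon$ globally, so I may identify $\epsilon$ with $1\in \Gamma(\Lambda^{0,0}(X))$. Using the Clifford multiplication \eqref{eq:Cliffordmanti}, evaluation on the scalar $1$ kills the interior-product term, and the equation $\dd\varphi^{1,0}\cdot 1 = 0$ reduces to $(\dd\varphi^{1,0})^{0,1}_{J_g} = 0$, i.e.\ $\dd\varphi$ maps $T^{0,1}X$ into $T^{0,1}\cM$. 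Hence $\varphi\colon (X,J_g)\to (\cM,\cI)$ is holomorphic, which in particular promotes $\cL^{\varphi}$ and $\cD^{\varphi}$ to a holomorphic line bundle with Chern connection.

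Next I analyze the first equation. In local holomorphic coordinates $w$ on $X$ and $z^{i}$ on $\cM$, writing $g = e^{F}\dd w\odot \dd\bar{w}$, the natural unitary frame of $S^{-}\simeq K^{\ast}_g$ (note the crucial switch from $K_g$ to $K^{\ast}_g$) is related to the holomorphic frame $\partial_{w}$ by a factor $e^{F/2}$, so the local connection $1$-form on $K^{\ast}_g$ induced by the Chern connection of $(T^{1,0}X,g_c)$ is $+\partial F\, \dd w$, opposite in sign to the one that appeared in the proof of Theorem \ref{thm:solsusyX} because we are now on the dual bundle. The first Killing spinor equation, written in the same frame, becomes
\begin{equation*}
\partial_{w}F\,\dd w = \partial_{w}z^{i}\,\partial_{z^{i}}\mathcal{K}(z,\bar{z})\,\dd w,
\end{equation*}
so globally $\nabla^{\mathrm{C}}_{g_c} = \Psi^{\ast}\cD^{\varphi}$ on $K^{\ast}_g$. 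As in Theorem \ref{thm:solsusyX}, equality of Chern connections on a holomorphic line bundle implies that the Hermitian metrics agree up to a positive constant, giving $g_c = \kappa\,\Psi^{\ast}\cH^{\varphi}$ for some $\kappa\in\R_{>0}$, and simultaneously promoting $\Psi$ to an isomorphism of holomorphic line bundles.

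Differentiating the local relation $F = \mathcal{K}(z(w),\bar{z}(\bar{w})) + \text{const.}$ with respect to $\bar{w}$ and using holomorphicity of $\varphi$ yields $\partial_{w}\partial_{\bar{w}}F = \partial_{w}z^{i}\,\partial_{\bar{w}}\bar{z}^{\,j}\,\partial_{z^{i}}\partial_{\bar{z}^{\,j}}\mathcal{K}$. Because $\mathrm{R}_{g} = -\Delta_{g}F$ and the right-hand side is proportional to $\norm{\dd\varphi}^{2}_{g,\cG}$, after tracking the factors this gives the sign-flipped identity $\mathrm{R}_{g} = -\norm{\dd\varphi}^{2}_{\cG,g}$, followed by $\varphi^{\ast}\cG = -\tfrac{\mathrm{R}_{g}}{2}\,g$ exactly as in the supersymmetric case. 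In particular $\mathrm{R}_{g}\leq 0$ with equality precisely on the critical set $C\subset X$ of $\varphi$, and on $X\backslash C$ the map $\varphi$ is a conformal immersion. The converse is immediate: given (1) and (2), the local formulas above are satisfied by $\epsilon = 1$, producing an honest anti-supersymmetric solution.

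The only conceptual step that deserves care, and therefore the main (mild) obstacle, is the sign-tracking on the dual line bundle: the Chern curvature of $(K^{\ast}_g,g_c)$ is the negative of the Chern curvature of $(K_g,g^{\ast}_c)$, which is precisely what converts the non-negative scalar-curvature relation of Theorem \ref{thm:solsusyX} into the non-positive one here. Finally, for compact $X$ with nonconstant $\varphi$, Gauss–Bonnet applied to $\mathrm{R}_{g} = -\norm{\dd\varphi}^{2}_{\cG,g} \leq 0$ with strict inequality on the nonempty open set $X\backslash C$ yields $\chi(X) < 0$, so $(X,J_g)$ is of hyperbolic type.
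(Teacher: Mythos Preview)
Your proposal is correct and follows precisely the route indicated by the paper, which simply states that the proof is completely analogous to that of Theorem \ref{thm:solsusyX}. You have in fact carried out the sign-tracking explicitly: the switch from $K_g$ to $K^{\ast}_g$ flips the local Chern connection form from $-\partial_w F$ to $+\partial_w F$, producing $F = +\mathcal{K}$ rather than $F = -\mathcal{K}$ and hence $\mathrm{R}_g = -\norm{\dd\varphi}^2_{\cG,g}$ with the ensuing Gauss--Bonnet conclusion $\chi(X)<0$.
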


\noindent
We can adapt the notion of chiral map to accommodate anti-supersymmetric solutions.

\begin{definition}
Let $X$ be an oriented real two-manifold and let $(\cM,\cL,\cH)$ be a complex manifold equipped with a negative Hermitian holomorphic line bundle $(\cL,\cH)$. We say that a pair $(\varphi,\Psi)$ is a \emph{anti-chiral map} with respect to $(\cM,\cL,\cH)$ if there exists a complex structure $J=J_{\varphi}$ on $X$ such that:
\begin{equation*}
\varphi\colon (X,J)\to (\cM,\cI)\, ,
\end{equation*}
	
\noindent
is holomorphic and: 
\begin{equation*}
\Psi\colon K^{\ast}\xrightarrow{\simeq} \cL^{\varphi}\, ,
\end{equation*}
	
\noindent
is an isomorphism of holomorphic line bundles, where $K^{\ast}$ is the anti-canonical bundle of $(X,J)$.
\end{definition}

\noindent
As it happened with chiral maps, aside from the role they play in chiral supergravity, anti-chiral maps are interesting because they are particular instances of holomorphic maps of Riemann surfaces into K\"ahler manifolds and provide solutions to the \emph{coupled} problem of prescribing the scalar curvature of a Riemann surface to:
\begin{equation*}
\mathrm{R}_{g} = - \norm{\dd\varphi}^2_{\cG,g}\, .
\end{equation*}

\noindent
When $\cM$ is also an oriented two-dimensional real manifold,  Theorem \ref{thm:solsusyX} implies the following corollary.

\begin{cor}
Let $\cM$ be connected and complex one-dimensional and let $\varphi\colon X\to \cM$ be a anti-chiral map with respect to $(\cL,\cH)$. If $\varphi$ is proper then it is a $d$-fold holomorphic branched covering of Riemann surfaces. If in addition $X$ is compact then $\cM$ is necessarily compact and we have:
\begin{equation*}
\mathrm{deg}(\cL) = \chi (\cM) - \frac{k}{d}\, ,
\end{equation*}	
	
\noindent
where $k$ is the total branching number of $\varphi$. In particular:
\begin{equation*}
\vert\mathrm{deg}(\cL)\vert \geq \vert\chi(\cM)\vert\, ,
\end{equation*}
	
\noindent
and if $\cL \simeq T^{1,0}\cM$ then $k=0$ and $\varphi$ is a holomorphic unbrached covering of compact Riemann surfaces.
\end{cor}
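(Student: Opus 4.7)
The plan is to deduce everything from the Riemann--Hurwitz formula combined with the defining relation $K^{\ast} \simeq \cL^{\varphi}$ of an anti-chiral map. First I would argue that a proper holomorphic map $\varphi\colon X \to \cM$ between Riemann surfaces is automatically a finite branched covering, assuming $\varphi$ is non-constant (a constant $\varphi$ would make $\cL^{\varphi}$ trivial, forcing $K^{\ast}_{X}$ trivial, so $X$ would be an elliptic curve and the covering interpretation would be vacuous). The image $\varphi(X)$ is open by the open mapping theorem and closed by properness, so connectedness of $\cM$ forces surjectivity; a standard local normal form $z \mapsto z^{e}$ then produces a well-defined degree $d$ and total ramification number $k = \sum_{x \in X}(e_{x} - 1)$. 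When $X$ is compact, $\cM = \varphi(X)$ is compact as the continuous image of a compact set.

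Next I would combine the Riemann--Hurwitz identity $\chi(X) = d\,\chi(\cM) - k$ with the anti-chiral identification $K^{\ast}_{X} \simeq \cL^{\varphi}$. Taking Chern numbers on the compact Riemann surface $X$ yields $\chi(X) = \deg(K^{\ast}_{X}) = \deg(\varphi^{\ast}\cL) = d\,\deg(\cL)$, where the last equality is multiplicativity of degree under a $d$-fold branched cover. Equating the two expressions for $\chi(X)$ and dividing by $d$ delivers the stated formula $\deg(\cL) = \chi(\cM) - k/d$.

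For the inequality $|\deg(\cL)| \geq |\chi(\cM)|$, negativity of $\cL$ gives $\deg(\cL) < 0$ and hence $\chi(\cM) < k/d$; when $\chi(\cM) \leq 0$ the formula yields $|\deg(\cL)| = |\chi(\cM)| + k/d \geq |\chi(\cM)|$ immediately. The main subtlety I anticipate is the residual case $\cM \simeq \mathbb{P}^{1}$ (where $\chi(\cM) = 2$): here one must use that the defining relation $K^{\ast}_{X} \simeq \cL^{\varphi}$ together with negativity of $\cL$ force $X$ to be hyperbolic, and then exploit Riemann--Hurwitz plus integrality of $\deg(\cL)$ to control $k$ from below. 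Finally, the special case $\cL \simeq T^{1,0}\cM = K_{\cM}^{\ast}$ follows by direct substitution: $\deg(\cL) = -\deg(K_{\cM}) = \chi(\cM)$ forces $k/d = 0$, hence $k = 0$, so $\varphi$ is a finite unbranched holomorphic covering, completing the argument.
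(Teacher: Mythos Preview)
Your overall approach coincides with the paper's: properness plus the open mapping theorem gives a branched covering, and Riemann--Hurwitz combined with $\chi(X)=\deg K^{\ast}_X=\deg\cL^{\varphi}=d\,\deg\cL$ yields the formula $\deg\cL=\chi(\cM)-k/d$. Your treatment of the case $\chi(\cM)\le 0$ and of the final assertion about $\cL\simeq T^{1,0}\cM$ (which forces $\cM$ hyperbolic anyway, since $T^{1,0}\cM$ is negative only then) is complete and matches the paper.

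The subtlety you flag at $\cM\simeq\mathbb{P}^1$ is genuine, and your proposed resolution cannot work: the inequality $|\deg\cL|\ge|\chi(\cM)|$ is actually \emph{false} in that case. Take $X$ a hyperelliptic curve of genus $2$ with hyperelliptic map $\varphi\colon X\to\mathbb{P}^1$ of degree $d=2$, so $k=6$. Since $K_X\simeq\varphi^{\ast}\cO(1)$ one has $K^{\ast}_X\simeq\varphi^{\ast}\cO(-1)$ holomorphically, so $(\varphi,\Psi)$ is an anti-chiral map for $\cL=\cO(-1)$ with the Fubini--Study Hermitian structure. Then $|\deg\cL|=1<2=|\chi(\mathbb{P}^1)|$. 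The paper's own proof derives only $-\deg\cL\ge-\chi(\cM)$ from $\deg\cL=\chi(\cM)-k/d\le\chi(\cM)$, which coincides with the stated inequality precisely when $\chi(\cM)\le 0$; the spherical case is tacitly omitted there as well. So your instinct was right, but the conclusion should be that the inequality requires the additional hypothesis $\chi(\cM)\le 0$.
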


\begin{proof}
By definition of anti-chiral map, $\varphi$ is holomorphic. Since by assumption $\cM$ is connected and $\varphi$ is  proper, $\varphi$ must be surjective and thus it is a holomorphic branched covering, implying also that if $X$ is compact then $\cM$ is also compact. Assume now that $X$ (and thus also $\cM$) is compact. The Riemann-Hurwitz formula implies:
\begin{equation*}
\chi(X) = d\, \chi(\cM) - k\, ,
\end{equation*}
	
\noindent
Using now that $\chi(X) = \mathrm{deg}(T^{1,0}X)  = \mathrm{deg}(\cL^{\varphi}) = d \,\mathrm{deg}(\cL)$ we obtain the first formula of the corollary. Now, the definition of chiral triple requires $\cL$ to be negative and thus both $\deg(\cL)$ and $\chi(X)$ are negative. Hence:
\begin{equation*}
-\mathrm{deg}(\cL) \geq\, - \chi(\cM)\, ,
\end{equation*}
	
\noindent
and we conclude.
\end{proof}

\noindent
The four-dimensional Lorentzian manifold $(M_4,g_4)$ associated to the anti-supersymmetric solutions considered in Theorem \ref{thm:solsusyXanti} is of the form:
\begin{equation*}
(M_4,g_4) = (\mathbb{R}^2\times X, \eta_{1,1}\times g)\, ,
\end{equation*}

\noindent
where $(X,g)$ is a Riemann surface with the Riemannian metric $g$ being part of a supersymmetric solution $(g,\varphi,\Psi)$. As  mentioned in Theorem \ref{thm:solsusyXanti}, if $X$ is compact then it is biholomorphic with a hyperbolic Riemann surface. Reasoning by analogy with the situation for supersymmetric solutions, we wonder if solutions of this type, with $X$ compact hyperbolic, can appear as non-supersymmetric near horizon geometries of black hole solutions in four dimensions.


\subsection{Examples of chiral and anti-chiral maps}
\label{sec:examples}


In this section we construct several examples of (anti) chiral maps and (anti) supersymmetric solutions.

\begin{ep}
Take $X = \mathbb{P}^1$, $\cM = \mathbb{P}^1$ and $\cL = K_{\mathbb{P}^1}$. Denote by $g$ the round metric on $\mathbb{P}^1$ and take $\cH$ to be the Hermitian metric induced by $g$ on $K_{\mathbb{P}^1}$. Then, any triple $(g,\varphi,\Psi)$, with $\varphi\colon (\mathbb{P}^1,g) \to (\mathbb{P}^1,g)$ holomorphic isometry and $\Psi$ induced by $\varphi$ gives a supersymmetric solution to chiral $\cN=1$ supergravity. 
\end{ep}

\begin{ep}
Let $(X,g)$ be a hyperbolic Riemann surface with $g$ of constant negative curvature $-1$ and admitting non-trivial holomorphic isometries (consider for example $X$ hyperelliptic curve and consider its hyperelliptic involution). Take $\cM = X$ and define $\cL = T^{1,0}X$ to be the holomorphic tangent bundle of $X$. Furthermore, we take $\cH = g_c$, where $g_c$ denotes the Hermitian structure induced by $g$ on $T^{1,0}X$. With this choice of Hermitian structure $\cH$, the K\"ahler metric associated to the Chern curvature of $\cH$ in the sense of \ref{def:chiraltriple} is again $g$. Hence $\cG = g$. We claim that every isometry:
\begin{equation*}
\varphi \colon (X,g)\to (X,g)\, ,
\end{equation*}
	
\noindent
gives rise to an anti-supersymmetric solution $(g,\varphi,\Psi)$ with respect to the chiral triple $(T^{1,0}X,g_c, \cW = 0)$ on the pair $(X,\cM= X)$. To see this, note that we can define $\Psi = \dd \varphi$ since:
\begin{equation*}
\dd\varphi\colon T^{1,0}X \xrightarrow{\simeq} (T^{1,0}X)^{\varphi} \, ,
\end{equation*}
	
\noindent
is an isomorphism of holomorphic line bundles. Hence, the conditions of Theorem \ref{thm:solsusyX} are satisfied and such $(g,\varphi,\Psi)$ is a solution of chiral $\cN=1$ supergravity on $X$ associated to the chiral triple $(T^{1,0}X,g_c,0)$. Since $\varphi$ is assumed to be an isometry, direct computation shows that:
\begin{equation*}
\norm{\dd \varphi}^2_{g,\cG} = 2\, ,
\end{equation*}
	
\noindent
and thus:
\begin{equation*}
\varphi^{\ast} \cG = g = \frac{g}{2} \norm{\dd \varphi}^2_{g,\cG}\, , 
\end{equation*}
	
\noindent
as claimed in Theorem \ref{thm:solsusyXanti}.
\end{ep}

\begin{ep}
Let $X$ be a compact hyperbolic Riemann surface not of hyperelliptic type. Then, the canonical bundle $K_X$ of $X$ is very ample \cite{Hartshorne} and Kodaira's embedding theorem implies that for an appropriate $n>1$ there exists a holomorphic embedding of $X$ into $n$-dimensional projective space:
\begin{equation*}
\varphi\colon X \hookrightarrow \mathbb{P}^n\, ,
\end{equation*}

\noindent
satisfying:
\begin{equation*}
K_{X} \simeq \varphi^{\ast}\cO(1)\, ,
\end{equation*}

\noindent
where $\cO(1)$ denotes the tautological bundle of $\mathbb{P}^n$. The previous equation implies that there exists an isomorphism $\Psi$ of holomorphic line bundles:
\begin{equation*}
\Psi\colon T^{1,0}X \xrightarrow{\simeq} \varphi^{\ast}\cO(-1)\, .
\end{equation*}

\noindent
Furthermore, the Hermitian structure $\cH$ on $\cO(-1)$ induced by the Fubini-Study metric on $\mathbb{P}^n$ makes $(\cO(-1), \cH)$ into a negative line bundle. Therefore, $(\cO(-1), \cH, 0)$ is a chiral triple on the pair $(X,\mathbb{P}^n)$ and $(g,\varphi,\Psi)$ is a supersymmetric solution, where $g$ is constructed in terms of $\cH$, $\varphi$ and $\Psi$ as prescribed by Theorem \ref{thm:solsusyXanti}.
\end{ep}

\noindent
We present now a large family of supersymmetric solutions $(g,\varphi, \Psi)$ to chiral $\cN=1$ supergravity associated to a special class of scalar manifolds $(\cM,\mathfrak{Q})$ admitting plurisubharmonic functions and having vanishing superpotential.

Let $X\subset\mathbb{C}$ be a complex domain in $\mathbb{C}$ with complex coordinate $w$ and let $\cM$ be an $n$-dimensional complex manifold admitting smooth strictly plurisubharmonic functions. For example, we can take $\cM = \mathbb{C}^n$, we can take $\cM$ to be any open Riemann surface, or more generally we can take $\cM$ to be a Stein complex $n$-manifold. Fix a smooth plurisubharmonic function $\phi \colon\cM\to \mathbb{R}$ on $\cM$. Let $\cL$ be the holomorphically trivial complex line bundle over $\cM$, and let us fix a holomorphic trivialization $\cL= \cM\times \mathbb{C}$. In this trivialization, we define a Hermitian structure $\cH$ as follows:
\begin{equation*}
\cH(f_1,f_2) \eqdef e^{\phi} f_1 \bar{f}_2\, , 
\end{equation*}

\noindent
where $f_1, f_2 \colon \cM \to \mathbb{C}$ are smooth functions. Since the cotangent bundle of $X$ is holomorphically trivial, the triple $\mathfrak{Q} \eqdef (\cL,\cH,0)$ is a chiral triple on $(X,\cM)$ for any holomorphic map:
\begin{equation*}
\varphi \colon X\to \cM\, .
\end{equation*}

\noindent
In particular, $\cL^{\varphi}$ is holomorphically isomorphic to $\Lambda^{1,0}(X)$. Let us trivialize $\cL^{\varphi}$ as $\cL^{\varphi} = X\times \mathbb{C}$ by using the pull-back of the fixed trivialization of $\cL$. The holomorphic cotangent bundle of $X$ is the complex span of $\left\{\dd w\right\}$. Using the previous trivializations and the complex coordinate $w$, we define an isomorphism of holomorphic line bundles $\Psi\colon \Lambda^{1,0}(X)\xrightarrow{\simeq} \cL^{\varphi}$ as follows:
\begin{equation*}
\Psi(\dd w) = 1
\end{equation*}

\noindent
With these provisos in mind, Theorem \ref{thm:solsusyX} implies the following result.

\begin{cor}
\label{cor:susyfamily}
In the set-up introduced above, any triple $(g,\varphi,\Psi)$ with $\varphi\colon X\to \cM$ holomorphic and $g$ given through its associated Hermitian metric on $\Lambda^{1,0}(X)$ as follows:
\begin{equation*}
g^{\ast}_c = \Psi^{\ast} \cH^{\varphi}\, ,
\end{equation*}

\noindent
is a supersymmetric solution of the chiral $\cN=1$ supergravity associated to $\mathfrak{Q}$.
\end{cor}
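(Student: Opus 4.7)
The plan is to invoke Theorem \ref{thm:solsusyX} directly, after confirming that $\mathfrak{Q} = (\cL, \cH, 0)$ is a chiral triple on $(X, \cM)$ in the sense used in Section \ref{sec:redX}.

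First I would verify that $(\cL, \cH)$ is a negative Hermitian holomorphic line bundle in the sense of Definition \ref{def:positivecL}. In the chosen global holomorphic trivialization, $\cH = e^{\phi}$, so the Chern curvature is $\Theta = -\partial\bar\partial\phi$. Consequently $-i\Theta \circ (\mathrm{Id} \otimes \cI) = i\partial\bar\partial\phi \circ (\mathrm{Id} \otimes \cI)$ defines a Riemannian metric on $\cM$ precisely because $\phi$ is strictly plurisubharmonic, which is the condition for $i\partial\bar\partial\phi$ to be a positive $(1,1)$-form. Hence $(\cL, \cH)$ is negative, with associated K\"ahler metric $\cG = \frac{i}{2\pi}\partial\bar\partial\phi \circ (\mathrm{Id}\otimes \cI)$. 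The compatibility requirement \eqref{eq:pullback} in the definition of a chiral triple on $(X, \cM)$ holds because both $K_g = \Lambda^{1,0}(X)$, trivialized by $dw$, and $\cL^{\varphi}$, trivialized by the pull-back of the fixed trivialization of $\cL$, are smoothly trivial over the open set $X \subset \mathbb{C}$ for any smooth $\varphi$.

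Next I would apply Theorem \ref{thm:solsusyX} to the triple $(g, \varphi, \Psi)$. Condition (1) of the theorem is holomorphicity of $\varphi$, which is part of the hypotheses. For condition (2), the map defined by $\Psi(dw) = 1$ sends the holomorphic generator of $K_g$ to the holomorphic generator of $\cL^{\varphi}$, so it is by construction an isomorphism of holomorphic line bundles. The required identity $g^{\ast}_c = \kappa\, \Psi^{\ast}\cH^{\varphi}$ is ensured with $\kappa = 1$ by the very definition of $g$ given in the statement, and the positivity of $\cH^{\varphi}$ guarantees that the so-defined $g$ is a genuine Riemannian metric. With both conditions of Theorem \ref{thm:solsusyX} satisfied and $\cW = 0$, the triple $(g, \varphi, \Psi)$ is a supersymmetric solution.

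Since the corollary is essentially a concrete specialization of Theorem \ref{thm:solsusyX}, there is no substantive obstacle; the only point requiring attention is tracing the negativity of $(\cL, \cH)$ back to the strict plurisubharmonicity of $\phi$ via the sign conventions of Definition \ref{def:positivecL}, which is a direct computation from the Chern curvature of $e^{\phi}$.
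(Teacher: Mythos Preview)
Your proposal is correct and follows exactly the route the paper takes: the paper simply states that ``Theorem~\ref{thm:solsusyX} implies the following result'' and leaves the verification of its hypotheses to the reader, which is precisely what you have spelled out. The only minor caveat is that Theorem~\ref{thm:solsusyX} is stated for non-constant $\varphi$, so strictly speaking your argument covers that case; the constant case is handled separately in the remark following the theorem (flat $g$, parallel $\epsilon=1$), and in the present setup this is automatic since $\cH^\varphi$ is then constant and $g$ is the flat metric on $X\subset\mathbb{C}$.
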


\begin{remark}
This corollary can be easily adapted to yield anti-supersymmetric solutions instead of supersymmetric solutions. In order to do this, simply consider the  holomorphic tangent bundle of $X$ instead of its cotangent bundle. We leave the details to the reader. 
\end{remark}

\noindent
Let us explore in more detail the solution provided by the previous corollary. Let $(g,\varphi,\Psi)$ be such a solution. The Hermitian structure $\cH^{\varphi}$ on $\cL^{\varphi}$ reads:
\begin{equation*}
\cH^{\varphi}(f_1,f_2) = e^{\phi (\varphi)} f_1 \bar{f}_2\, , 
\end{equation*}

\noindent
Using the explicit form of $\Psi$, the two-dimensional metric $g$ associated to $\varphi$ as described in Theorem \ref{thm:solsusyX} reads (we take $\kappa =2$ for simplicity):
\begin{equation*}
g = (\Psi^{\ast}\cH^{\varphi})^{\ast} = e^{-\phi(\varphi)(w)} \dd w \odot \dd\bar{w}\, .
\end{equation*}

\noindent
Explicit computation gives the following formula for the scalar curvature:
\begin{equation*}
\mathrm{R}_g = \Delta_g (\phi (\varphi))\, , 
\end{equation*}

\noindent
showing that it is non-negative, as required by Theorem \ref{thm:solsusyX}, since $\phi$ is plurisubharmonic. Direct computation shows that:
\begin{equation*}
\norm{\dd \varphi}^2_{g,\cG} = \Delta_g (\phi (\varphi))\, ,
\end{equation*}

\noindent
and thus, as required in order to have a supersymmetric solution, we have:
\begin{equation*}
\mathrm{R}_g = \norm{\dd \varphi}^2_{g,\cG}\, .
\end{equation*}

\noindent
Furthermore:
\begin{equation*}
\varphi^{\ast} \cG = \frac{ e^{-\phi(\varphi)}}{2} \Delta_g (\phi (\varphi))\, \dd w\odot \dd\bar{w} = \frac{\norm{\dd \varphi}^2_{g,\cG}}{2}\, g\, ,
\end{equation*}

\noindent
whence, as required by Theorem \ref{thm:solsusyX}, the Einstein equation for $(g,\varphi)$ is satisfied.

\begin{remark}
Corollary \ref{cor:susyfamily} provides us with an \emph{infinite family} of supersymmetry solutions to chiral $\cN=1$ supergravity associated to the type of chiral triple introduced above. Remembering that the supergravity theory considered on $X$ is a reduction of the Lorentzian theory, we can easily reconstruct the associated family of Lorentzian solutions. We have $M = \mathbb{R}^2\times X$ and:  
\begin{equation*}
g = -\dd t\odot \dd t + \dd x\odot\dd x + e^{-\phi(\varphi)} \dd w\odot \dd\bar{w}\, , 
\end{equation*}

\noindent
which gives, for fixed $X$ and $\cM$, a family of Lorentzian metrics on $M$ depending on the choice of plurisubharmonic function $\phi$ on $\cM$ and holomorphic map $\varphi\colon X \to \cM$.
\end{remark}

\noindent
We present now two explicit examples of the previous construction.
 
\begin{ep}
Let us set $X=\mathbb{C}$ and $\cM = \mathbb{C}^{\ast}$ with its standard K\"ahler form. Let $\cL$ be the holomorphically trivial complex line bundle over $\cM$, and let us fix a trivialization $\cL= \mathbb{C}^{\ast}\times \mathbb{C}$. In this trivialization, we define $\cH$ as follows:
\begin{equation*}
\cH(f_1,f_2) \eqdef e^{\vert z\vert^2} f_1 \bar{f}_2\, , 
\end{equation*}

\noindent
where $z$ is a fixed complex coordinate of $\mathbb{C}^{\ast}$ and $f_1, f_2 \colon \mathbb{C}^{\ast} \to \mathbb{C}$ are smooth functions. The curvature of the Chern connection associated to $\cH$ reads:
\begin{equation*}
\Theta = - \dd z \wedge \dd\bar{z}\, ,
\end{equation*}

\noindent
whence it is negative and its associated K\"ahler form $\cV$ and K\"ahler metric $\cG$ are given by:
\begin{equation*}
\cV  = \frac{i}{2\pi} \dd z \wedge \dd\bar{z}\, , \qquad \cG = \frac{1}{2\pi} \dd z \odot \dd\bar{z}
\end{equation*}

\noindent
We define:
\begin{equation*}
\varphi \eqdef \mathbb{C} \to \mathbb{C}^{\ast}\, , \qquad w\mapsto e^{w}\, ,
\end{equation*} 

\noindent
as a potential candidate for chiral map and in particular a solution of chiral $\cN=1$ supergravity on $X$. Clearly, $\dd\varphi$ is an isomorphism of vector bundles. The complex line bundle $\cL^{\varphi}$ is holomorphically trivial over $X$ and we use the pull-back trivialization of $\cL$ to set $\cL^{\varphi} = X\times \mathbb{C}$. Let now $w$ denote a global complex coordinate on $X = \mathbb{C}$. The holomorphic tangent bundle of $\mathbb{C}$ is the complex span of $\left\{\dd w\right\}$. We define an isomorphism of holomorphic line bundles $\Psi\colon \Lambda^{1,0}(X)\xrightarrow{\simeq} \cL^{\varphi}$ as follows:
\begin{equation*}
\Psi(\dd w) = 1
\end{equation*}

\noindent
in the chosen trivializations. With these provisos in mind, $(\cL,\cH,0)$ is a chiral triple on $(\mathbb{C},\mathbb{C}^{\ast})$. The Hermitian structure $\cH^{\varphi}$ on $\cL^{\varphi}$ reads:
\begin{equation*}
\cH^{\varphi}(f_1,f_2) = e^{e^{w + \bar{w}}} f_1 \bar{f}_2\, , 
\end{equation*}

\noindent
Using the explicit form of $\Psi$, the two-dimensional metric $g$ associated to $\varphi$ as described in Theorem \ref{thm:solsusyX} reads:
\begin{equation*}
g = (\Psi^{\ast}\cH^{\varphi})^{\ast} = e^{-e^{w + \bar{w}}} dw \odot d\bar{w}\, .
\end{equation*}

\noindent
Since $\varphi$ is a holomorphic immersion, in fact it is a holomorphic unramified covering, and $g$ is constructed as required by Theorem \ref{thm:solsusyX}, we conclude that $(g,\varphi,\Psi)$ is a supersymmetric solution with respect to the pair $(\mathbb{C},\mathbb{C}^{\ast})$ and the chiral triple specified above. Explicit computation gives the following formula for the scalar curvature:
\begin{equation*}
\mathrm{R}_g = 4\,\frac{e^{\omega + \bar{w}}}{e^{-e^{w + \bar{w}}}} \, , 
\end{equation*}

\noindent
showing that it is negative definite and bounded. Direct computation shows that:
\begin{equation*}
\norm{\dd \varphi}^2_{g,\cG} = 4\, \frac{e^{w + \bar{w}}}{e^{-e^{\omega + \bar{w}}}}\, ,
\end{equation*}

\noindent
and thus, as required for a supersymmetric solution, we have:
\begin{equation*}
\mathrm{R}_g = \norm{\dd \varphi}^2_{g,\cG}\, .
\end{equation*}

\noindent
Furthermore:
\begin{equation*}
\varphi^{\ast} \cG = 2\,e^{w+\bar{w}} \dd w\odot \dd\bar{w} = \frac{g}{2}  \norm{\dd \varphi}^2_{g,\cG}\, ,
\end{equation*}

\noindent
and thus, as expected, the Einstein equation for $(g,\varphi)$ is satisfied.
\end{ep}

\begin{ep}
Let us set $X=\mathbb{C}$ and $\cM = \mathbb{C}$ with its standard K\"ahler form. Let $\cL$ be the holomorphically trivial complex line bundle over $\cM$, and let us fix a trivialization $\cL= \mathbb{C} \times \mathbb{C}$. In this trivialization, we define $\cH$ again as follows:
\begin{equation*}
\cH(f_1,f_2) \eqdef e^{\vert z\vert^2} f_1 \bar{f}_2\, , 
\end{equation*}

\noindent
where $z$ is the complex coordinate of $\cM = \mathbb{C}$ and $f_1, f_2 \colon \mathbb{C}\to \mathbb{C}^{\ast}$ are smooth functions. Then, the curvature of the Chern connection associated to $\cH$ reads:
\begin{equation*}
\Theta = - \dd z \wedge \dd\bar{z}\, ,
\end{equation*}

\noindent
whence the associated K\"ahler form $\cV$ and K\"ahler metric $\cG$ read:
\begin{equation*}
\cV  = \frac{i}{2\pi} \dd z \wedge \dd\bar{z}\, , \qquad \cG = \frac{1}{2\pi} \dd z \odot \dd\bar{z}
\end{equation*}

\noindent
We define:
\begin{equation*}
\varphi_k \eqdef \mathbb{C} \to \mathbb{C}\, , \qquad w\mapsto w^k\, , \qquad k\geq 1\, ,
\end{equation*} 

\noindent
as a potential candidate for chiral map and in particular a solution of chiral $\cN=1$ supergravity on $X$, where $w$ denote a global complex coordinate on $X = \mathbb{C}$. The complex line bundle $\cL^{\varphi}$ is holomorphically trivial over $X$ and we use the pull-back trivialization of $\cL$ to set $\cL^{\varphi} = X\times \mathbb{C}$. The holomorphic cotangent bundle of $\mathbb{C}$ is the complex span of $\left\{\dd w\right\}$. We define an isomorphism of holomorphic line bundles $\Psi\colon \Lambda^{1,0}(X) \xrightarrow{\simeq} \cL^{\varphi}$ as follows:
\begin{equation*}
\Psi(\dd w) = 1
\end{equation*}

\noindent
in the chosen trivializations. With these provisos in mind, $(\cL,\cH,0)$ is a chiral pair on the pair $(\mathbb{C},\mathbb{C})$ for any holomorphic map $\varphi\colon \mathbb{C}\to \mathbb{C}$. The Hermitian structure $\cH^{\varphi}$ on $\cL^{\varphi}$ reads (we take $\kappa=1$ for simplicity):
\begin{equation*}
\cH^{\varphi}(f_1,f_2) = e^{(w \bar{w})^k} f_1 \bar{f}_2\, , 
\end{equation*}

\noindent
Using the explicit form of $\Psi$, the two-dimensional metric $g$ associated to $\varphi$ as described in Theorem \ref{thm:solsusyX} reads:
\begin{equation*}
g = (\Psi^{\ast}\cH^{\varphi})^{\ast} = e^{-(w \bar{w})^k} dw \odot d\bar{w}\, .
\end{equation*}

\noindent
Since $\varphi$ is a holomorphic map and $g$ is constructed as required by Theorem \ref{thm:solsusyX}, we conclude that $(g,\varphi,\Psi)$ is a supersymmetric solution with respect to the pair $(\mathbb{C},\mathbb{C})$ and the chiral triple specified above. Explicit computation gives the following formula for the scalar curvature:
\begin{equation*}
\mathrm{R}_g = 4\,k^2 (w\bar{w})^{k-1} e^{(w \bar{w})^k}\, , 
\end{equation*}

\noindent
showing that it is positive semi-definite. Direct computation shows that:
\begin{equation*}
\norm{\dd \varphi}^2_{g,\cG} = 4\, k^2 (w\bar{w})^{k-1} e^{(w \bar{w})^k}\, ,
\end{equation*}

\noindent
and thus as required for a supersymmetric solution we have:
\begin{equation*}
\mathrm{R}_g = \norm{\dd \varphi}^2_{g,\cG}\, .
\end{equation*}

\noindent
Furthermore:
\begin{equation*}
\varphi^{\ast} \cG = 2\, k^2 (w\bar{w})^{k-1} \dd w\odot \dd\bar{w} = \frac{g}{2}  \norm{\dd \varphi}^2_{g,\cG}\, ,
\end{equation*}

\noindent
and thus, as expected, the Einstein equation for $(g,\varphi)$ is satisfied. The only critical point of $\varphi$ is $0\in\mathbb{C}$, which corresponds to the only point in $\mathbb{C}$ at which the Gaussian curvature of $g$ vanishes, as required by Theorem \ref{thm:solsusyX}. Note that, crucially, although the symmetric bilinear form $\varphi^{\ast}\cG$ is degenerate at $0\in \mathbb{C}$, the \emph{physical} metric $g$ constructed as prescribed by Theorem \ref{thm:solsusyX} is regular at $0\in \mathbb{C}$. Recall that $(g,\varphi,\Psi)$ cannot be smoothly extended to the one-point compactification of $\mathbb{C}$, given by the Riemann sphere $\mathbb{P}^1$. Furthermore, this example is not covered by Theorem \ref{thm:finitecurvature} since the total curvature of $g$ is not finite.
\end{ep}


\section{Conclusions and open problems}
\label{sec:conclusions}


In this article we have constructed a global geometric model for the bosonic sector and Killing spinor equations of four-dimensional $\cN=1$ supergravity coupled to a $\Spin^c_0(3,1)$ structure and to a non-linear sigma model, whose target space is given by a complex manifold admitting a novel geometric structure which we call chiral triple. We have dimensionally reduced the theory to a Riemann surface $X$ and we have characterized all supersymmetric solutions on a Riemann surface $X$, classifying the possible biholomorphism types of $X$ for supersymmetric solutions with complete Riemannian metric and finite scalar energy. Furthermore, we have introduced the notion of anti-supersymmetric solution as a solution of a natural variation of the Killing spinor equations and we have characterized all anti-supersymmetric solutions on a Riemann surface. More generally, we have obtained a system of partial differential equations for a harmonic map with potential mapping a Riemann surface into a K\"ahler manifold which appears to be novel in the literature and arises here as a consequence of $\cN=1$ chiral supersymmetry.

The geometric model we have constructed involves a number of new mathematical structures which have not been explored in the mathematical literature. As a consequence, the present work opens up several possible novel lines of research. We summarize some of them:

\begin{itemize}
	
	\item The geometric model presented in this article is based on the notion of chiral triple. Associated to this concept there are two natural problems. The first consists on characterizing which pairs $(X,\cM)$ consisting of a Riemann surface $X$ and a complex manifold $\cM$ admit chiral triples and thus give rise to admissible non-linear sigma models for $\cN=1$ chiral supergravity. The second problem consists on classifying the space of chiral triples on a given pair $(X,\cM)$, studying if these structures come in families and give rise to finite-dimensional moduli spaces.

\

	\item We have characterized all supersymmetric solutions on Riemann surfaces $X$ for which the superpotential vanishes. However, the problem of classifying which Riemann surfaces or Lorentzian four-manifolds admit supersymmetric solutions with non-vanishing superpotential (or even vanishing superpotential in the latter case) remains completely open. We expect the solution to this problem to heavily depend on the choice of scalar manifold and superpotential, whence a general solution will be probably out of reach. It is then reasonable to consider first the classification problem in the four-dimensional case for which the superpotential $\cW$ vanishes, as explained in Section \ref{sec:TrivialSuperpotential}, which is still open and where we can expect a complete classification result.

\
	
	\item  A very interesting class of supergravity solutions consist on globally hyperbolic Lorentzian four-manifolds. The main result of \cite{Bernal:2003jb} characterizes the most general form of a globally hyperbolic Lorentzian manifold. Using this explicit presentation for the Lorentzian manifold, it would be very interesting to consider $\cN=1$ chiral supergravity on such manifolds and reduce it to the corresponding Riemannian three-manifold, obtaining a set of \emph{supersymmetric} flow Killing spinor equations on a three-manifold and studying its properties.

\
	
	\item In some specific situations, such as in the Riemannian product of Minkowski two-dimensional space and a Riemann surface, the Killing spinor equations of the theory give rise to well-defined moduli spaces of solutions. In these cases, one is likely to obtain novel moduli problems, involving maps to a K\"ahler manifold, which have not been studied in the literature and may help understanding the space of solutions to supergravity theories and may have further applications in differential topology. For the case considered in Section \ref{sec:susysolX}, we obtain a moduli problem which consists on a variation of the moduli problem of holomorphic maps from a Riemann surface into a complex manifold.

\

	\item We have considered a geometric model in which, using the language introduced in \cite{LazaroiuBC}, the Lorentzian submersion $\pi$ describing the scalar sector of the theory is metrically trivial. It remains open to study the problem of constructing a geometric model for the Killing spinor equations in the case in which $\pi$ is a non-trivial flat Lorentzian submersion.

\

	\item We have considered a geometric model for chiral $\cN=1$ supergravity not coupled to gauge fields. It would be very interesting to extend the construction as to include gauge fields by constructing the appropriate duality bundle over the scalar manifold of the theory.

\

	\item A particularly interesting open problem in the context of the program to construct the mathematical foundations of geometric supergravity, is to extend the present geometric model to $\cN=2$ ungauged supergravity coupled to vector multiplets, understanding the notion of chiral triple and symplectic duality bundle appropriate for this case, which needs to  encode the appropriate notion of projective Special K\"ahler geometry \cite{Strominger:1990pd,Freed:1997dp,Alekseevsky:1999ts} required for the construction of the theory.
	
\end{itemize}


\newpage


\appendix




\begin{thebibliography}{100}


\bibitem{ACortesI}{D. V. Alekseevsky and V. Cort\'es, {\em Classification of $N$-(super)-extended Poincar\'e algebras and bilinear invariants of the spinor representation of $\Spin(p,q)$}, Comm. Math. Phys. Volume 183, Number 3 (1997), 477-510.}

\bibitem{Alekseevsky:1999ts}{D.~V.~Alekseevsky, V.~Cortes and C.~Devchand, {\em Special complex manifolds}, J.\ Geom.\ Phys.\  {\bf 42} (2002) 85.}

\bibitem{ACortesII}{D. V. Alekseevsky, V. Cort\'es, C. Devchand and A. V. Proeyen, {\em Polyvector Super-Poincar\'e Algebras}, Comm. Math. Phys. Volume 253, Issue 2 (2005), 385 - 422.}

\bibitem{Andrianopoli:1996cm}{L.~Andrianopoli, M.~Bertolini, A.~Ceresole, R.~D'Auria, S.~Ferrara, P.~Fre and T.~Magri, {\em N=2 supergravity and N=2 superYang-Mills theory on general scalar manifolds: Symplectic covariance, gaugings and the momentum map}, J.\ Geom.\ Phys.\  {\bf 23} (1997) 111.}

\bibitem{Andrianopoli:1996ve}{L.~Andrianopoli, R.~D'Auria and S.~Ferrara,
{\em U duality and central charges in various dimensions revisited}, Int.\ J.\ Mod.\ Phys.\ A {\bf 13} (1998) 431.}

\bibitem{Aschieri:2008ns}{P.~Aschieri, S.~Ferrara and B.~Zumino, {\em Duality Rotations in Nonlinear Electrodynamics and in Extended supergravity},
Riv.\ Nuovo Cim.\  {\bf 31} (2008) 625.}

\bibitem{Bar}{C. B\"ar, {\em Real Killing spinors and holonomy}, Commun. Math. Phys. 154 (1993), 509 - 521.}

\bibitem{BarGauduchonMoroianu}{C. B\"ar, P. Gauduchon, A. Moroianu, {\em Generalized Cylinders in Semi-Riemannian and Spin Geometry},
Math. Z. 249 (2005), 545 - 580.}

\bibitem{Baum}{H. Baum, {\em Complete Riemannian manifolds with imaginary Killing spinors}, Ann. Glob. Anal. Geom. 7 (1989), 205 - 226.}

\bibitem{BaumII}{H. Baum, {\em Odd-dimensional Riemannian manifolds admitting imaginary Killing spinors}, Ann. Glob. Anal. Geom. 7 (1989), 141 - 153.}

\bibitem{LorentzianGeometry}{J. K. Beem, P. Ehrlich and K. Easley, {\em Global Lorentzian Geometry}, Chapman and Hall/CRC Pure and Applied Mathematics (1996).}

\bibitem{Bernal:2003jb}{A.~N.~Bernal and M.~Sanchez, {\em On Smooth Cauchy hypersurfaces and Geroch's splitting theorem}, Commun.\ Math.\ Phys.\  {\bf 243} (2003) 461.}

\bibitem{CahenGuttLemaireSpindel}{M. Cahen, S. Gutt, L. Lemaire, P. Spindel, {\em Killing spinors}, Bull. Soc. Math. Belg., Ser. A38 (1986), 75 - 102.}

\bibitem{Cremmer:1978hn}{E.~Cremmer, B.~Julia, J.~Scherk, S.~Ferrara, L.~Girardello and P.~van Nieuwenhuizen, {\em Spontaneous Symmetry Breaking and Higgs Effect in supergravity Without Cosmological Constant},	Nucl.\ Phys.\ B {\bf 147} (1979) 105.}

\bibitem{Cremmer:1978bh}{ E.~Cremmer and J.~Scherk, {\em The Supersymmetric Nonlinear Sigma Model in Four-Dimensions and Its Coupling to supergravity}, Phys.\ Lett.\  {\bf 74B} (1978) 341.}

\bibitem{Dabholkar:2002sy} {A.~Dabholkar and C.~Hull, {\em Duality twists, orbifolds, and fluxes}, JHEP {\bf 0309}, 054 (2003).}

\bibitem{Dhuria:2014pla}{ M.~Dhuria, {\em Topics in supergravity Phenomenology}, PhD. Thesis 2014, Physical Research Laboratory, Ahmedabad, India.}

\bibitem{Drees:2004jm}
{M.~Drees, R.~Godbole and P.~Roy, {\em Theory and phenomenology of sparticles: An account of four-dimensional N=1 supersymmetry in high energy physics}, Hackensack, USA: World Scientific, 2004.}

\bibitem{DT1}{S. Donaldson and R. Thomas, \emph{Gauge theory in higher dimensions}, The Geometric Universe: Science, Geometry, and the Work of Roger Penrose, Oxford University Press, 1998.}

\bibitem{Report1}{J. Eells and L. Lemaire, {\em Selected Topics in Harmonic Maps}, CBMS Regional Conference Series in Mathematics Volume: 50; 1983.}

\bibitem{Report2}{J. Eells and L. Lemaire, {\em Another Report on Harmonic Maps}, Bull. London Math. Soc. 20 (1988) 385 - 524.}



\bibitem{FardounRatto}{A. Fardoun and A. Ratto , {\em Harmonic maps with potential}, Calc. Var. 5, 183 - 197 (1997).}

\bibitem{Figueroa-OFarrill:2005vxy}{J.~M.~Figueroa-O'Farrill and S.~Gadhia, {\em Supersymmetry and spin structures}, Class.\ Quant.\ Grav.\  {\bf 22} (2005) L121.}

\bibitem{Freed:1997dp}{D.~S.~Freed, {\em Special Kahler manifolds},
Commun.\ Math.\ Phys.\  {\bf 203} (1999) 31.}

\bibitem{FreedmanProeyen}{D.~Z.~Freedman, A.~Van Proeyen, {\em supergravity}, Cambridge Monographs on Mathematical Physics, 2012.}

\bibitem{Freedman:2016qnq}{D.~Z.~Freedman, D.~Roest and A.~van Proeyen,
\emph{A Geometric Formulation of Supersymmetry}, Fortsch.\ Phys.\  {\bf 65} (2017) no. 1}

\bibitem{Friedrich}{T. Friedrich, {\em Dirac operators in Riemannian geometry}, American Mathematical Society (2000).}

\bibitem{FriedrichKim}{Th. Friedrich and E.C. Kim, {\em The Einstein-Dirac equation on Riemannian spin manifolds}, J. Geom. Phys.
33 (2000), no. 1 - 2, 128 - 172.}

\bibitem{FriedrichKimII}{Th. Friedrich and E.C. Kim, {\em Some remarks on the Hijazi inequality and generalizations of the Killing equation
for spinors}, J. Geom. Phys. 37 (2001), no. 1 - 2, 1 - 14.}

\bibitem{FriedrichTrautman}{T.~Friedrich and A.~Trautman, {\em Spin spaces, Lipschitz groups and spinor bundles}, Ann. Global Anal. Geom. {\bf 18} (2000) 3-4, 221--240.}

\bibitem{Forster}{O. Forster, {\em Lectures on Riemann Surfaces}, Springer (1999).}

\bibitem{GGP}{U.~Gran, J.~Gutowski and G.~Papadopoulos, {\em Geometry of all supersymmetric four-dimensional N = 1 supergravity backgrounds},  JHEP {\bf 0806} (2008) 102.}

\bibitem{Gutowski:2010gv}{J.~Gutowski and G.~Papadopoulos, {\em Topology of supersymmetric N=1, D=4 supergravity horizons}, JHEP {\bf 1011} (2010) 114.}


\bibitem{Hartshorne}{R. Hartshorne, {\em Algebraic Geometry}, Springer edition (1997).}

\bibitem{Hellerman:2002ax}{S.~Hellerman, J.~McGreevy and B.~Williams, {\em Geometric constructions of nongeometric string theories},
JHEP {\bf 0401} (2004) 024.}

\bibitem{Hitchin}{N. Hitchin, {\em Harmonic Spinors}, Advances in Mathematics, Volume 14, (1974), Pages 1 - 55.}

\bibitem{Huebscher:2009bp}{M.~Huebscher, P.~Meessen and T.~Ortin, {\em Domain walls and instantons in N=1, d=4 supergravity}, JHEP {\bf 1006} (2010) 001.}

\bibitem{HulinTroyanov}{D. Hulin and M. Troyanov, {\em Prescribing curvature on open surfaces}, Mathematische Annalen 293.2 (1992): 277-316.}

\bibitem{Ikemakhen}{Aziz Ikemakhen, {\em Parallel spinors on pseudo-Riemannian $\Spin^c$ manifolds}, Journal of Geometry and Physics Volume 56, Issue 9, 2006, Pages 1473 - 1483.}



\bibitem{Kzdan}{J. L. Kazdan, {\em Prescribing the Curvature of a Riemannian Manifold}, CBMS Regional Conference Series in Mathematics Volume: 57; (1985).}

\bibitem{KzdanWarner}{J. L. Kazdan and F. W. Warner, {\em Curvature Functions for Compact 2-Manifolds} Annals of Mathematics, Vol. 99, No. 1 (1974), pp. 14 - 47.}

\bibitem{LazaroiuBC}{Lazaroiu C.I., Babalic E.M., Coman I.A., {\em The geometric algebra of Fierz identities in various dimensions and signatures}, JHEP 09 (2013) 156.}



\bibitem{Lazaroiu:2016iav}{C.~I.~Lazaroiu and C.~S.~Shahbazi,
{\em Generalized Einstein-Scalar-Maxwell theories and locally geometric U-folds}, Rev.\ Math.\ Phys.\  {\bf 30} (2018).}

\bibitem{Lazaroiu:2017qyr}{C.~I.~Lazaroiu and C.~S.~Shahbazi, {\em Section sigma models coupled to symplectic duality bundles on Lorentzian four-manifolds}, J.\ Geom.\ Phys.\  {\bf 128} (2018) 58.}

\bibitem{LazaroiuComplex}{C. Lazaroiu and C. S. Shahbazi, {\em Complex Lipschitz structures and bundles of complex Clifford modules}, preprint \arxiv{1711.07765}.}
 
\bibitem{LemaireThesis}{L. Lemaire, {\em On the existence of harmonic maps}, Ph.D. Thesis, University of Warwick (1977).}

\bibitem{Lemaire}{L. Lemaire, {\em Applications harmoniques de surfaces Riemanniennes}, J. Diff. Geom. 13 (1978), 51 - 78.}

\bibitem{DuffSalamon}{D. Mc Duff and D. Salamon, {J-Holomorphic Curves and Symplectic Topology}, American Mathematical Society 2004.}

\bibitem{Meessen:2010ph}{P.~Meessen and T.~Ortin, {Ultracold spherical horizons in gauged N=1, d=4 supergravity}, Phys.\ Lett.\ B {\bf 693} (2010) 358.}

\bibitem{Moroianu}{Andrei Moroianu, {\em Parallel and Killing Spinors on $\Spin^c$ Manifolds}, Communications in Mathematical Physics, Springer Verlag, 1997, 187, pp.417-427.}

\bibitem{OrtinN1}{T.~Ort\'in, {\em The Supersymmetric solutions and extensions of ungauged matter-coupled $\cN=1$, $d=4$ supergravity}, JHEP {\bf 0805} (2008) 034.}

\bibitem{Ortin}{T.~Ort\'in, {\em Gravity and Strings}, Cambridge Monographs on Mathematical Physics, 2nd edition, 2015.}

\bibitem{Carrion}{R. Reyes-Carri\'on, \emph{A generalization of the notion of instanton}, Diff. Geom. Appl. (1998) 1-20.}


\bibitem{Strominger:1990pd}{A.~Strominger,{\em Special Geometry}, Commun.\ Math.\ Phys.\  {\bf 133} (1990) 163.}

\bibitem{Tanii:2014gaa}{Y.~Tanii, {\em Introduction to supergravity}, Springer briefs in mathematical physics, 2014.}

\bibitem{Verbitsky}{M. Verbitsky, {\em Plurisubharmonic functions in calibrated geometry and q-convexity}, Math. Z., Vol. 264, No. 4, pp. 939 - 957 (2010).}

\bibitem{Weil}{A. Weil, {\em Introduction \`a l'\'etude des varietes Kahleriennes}, Publications de l'Institut de Math\'ematique de l'Universit\'e de Nancago VI (Hermann, Paris).}

\bibitem{Witten:1982hu}{E.~Witten and J.~Bagger, {\em Quantization of Newton's Constant in Certain supergravity Theories}, Phys.\ Lett.\  {\bf 115B} (1982) 202.}


\end{thebibliography}
\end{document}